\newtheorem{theorem}{Theorem}
\newtheorem{example}{Example}
\newtheorem{proposition}{Proposition}
\newtheorem{lemma}{Lemma}
\newtheorem{corollary}{Corollary}
\newtheorem{remark}{Remark}
\theoremstyle{definition}
\newtheorem{definition}{Definition}
\begin{document}
	\title{On the Privacy-Utility Trade-off With and Without Direct Access to the Private Data}
\vspace{-5mm}
\author{
	\IEEEauthorblockN{Amirreza Zamani, ~\IEEEmembership{Member,~IEEE,} Tobias J. Oechtering,~\IEEEmembership{Senior Member,~IEEE,} Mikael Skoglund,~\IEEEmembership{Fellow,~IEEE} \vspace*{0.5em}
	}
	\thanks{This work was funded in
		part by the Swedish research council under contract 2019-03606. This work was presented in part at the 2022 IEEE International Symposium on Information Theory and the 2022 IEEE Information Theory Workshop. A. Zamani, M. Skoglund and T. J. Oechtering are with the Division of Information Science and Engineering, School of Electrical Engineering and
		Computer Science, KTH Royal Institute of Technology, 100 44 Stockholm,
		Sweden (e-mail: amizam@kth.se; oech@kth.se; skoglund@kth.se).}}
\maketitle

\begin{abstract}
	We study an information theoretic privacy mechanism design problem for two scenarios where the private data is either observable or hidden. In each scenario, we first consider bounded mutual information as privacy leakage criterion, then we use two different per-letter privacy constraints. In the first scenario, 
	an agent observes useful data $Y$ that is correlated with private data $X$, and wishes to disclose the useful information to a user.
	A privacy mechanism is designed to generate disclosed data $U$ which maximizes the revealed information about $Y$ while satisfying a bounded privacy leakage constraint. In the second scenario, the agent has additionally access to the private data. To this end, we first extend the Functional Representation Lemma and Strong Functional Representation Lemma by relaxing the independence condition and thereby allowing a certain leakage to find lower bounds for the second scenario with different privacy leakage constraints. Furthermore, upper and lower bounds are derived in the first scenario considering different privacy constraints. In particular, for the case where no leakage is allowed, our upper and lower bounds improve previous bounds. Moreover, considering bounded mutual information as privacy constraint we show that if the common information and mutual information between $X$ and $Y$ are equal, then the attained upper bound in the second scenario is tight. Finally, the privacy-utility trade-off with prioritized private data is studied where part of $X$, i.e., $X_1$, is more private than the remaining part, i.e., $X_2$, and we provide lower and upper bounds.  
\end{abstract}

\section{Introduction}
 The privacy mechanism design problem from an information theory perspective is recently receiving increased attention and related results can be found in \cite{ makhdoumi, issa, issa2, Calmon2,yamamoto, sankar, dwork11, dwork222, cuff2016differential, oech, asoodeh1, Total, borz, gun,khodam,Khodam22,7888175, wang, kostala, dwork1, calmon4, issajoon , zamani2022bounds, zamani2022, zamani2022multi, courtade, sankar2, deniz4, asoodeh3, Calmon1,  nekouei2}.

In more detail, in \cite{makhdoumi}, the concept of a privacy funnel is introduced, where the privacy utility trade-off has been studied considering a distortion measure for utility and the log-loss as privacy measure. The concept of maximal leakage has been introduced in \cite{issa} and used in \cite{issa2} for the Shannon cipher system.
Furthermore, some bounds on the privacy-utility trade-off are derived. 
Fundamental limits of the privacy utility trade-off measuring the leakage using estimation-theoretic guarantees are studied in \cite{Calmon2}.
A related secure source coding problem is studied in \cite{yamamoto}.

In both \cite{yamamoto} and \cite{sankar}, the privacy-utility trade-offs considering expected distortion and equivocation as a measures of utility and privacy are studied.
 The concept of differential privacy is introduced in \cite{dwork11} and it has been used in \cite{dwork222} to answer queries in a privacy-preserving approach using minimizing the chance of identifying the membership in a statistical database. The concept of mutual information as differential privacy is introduced in \cite{cuff2016differential}. 
 In \cite{oech}, the hypothesis test performance of an adversary is used to measure privacy leakage.
\begin{figure}[]
	\centering
	\includegraphics[scale = .15]{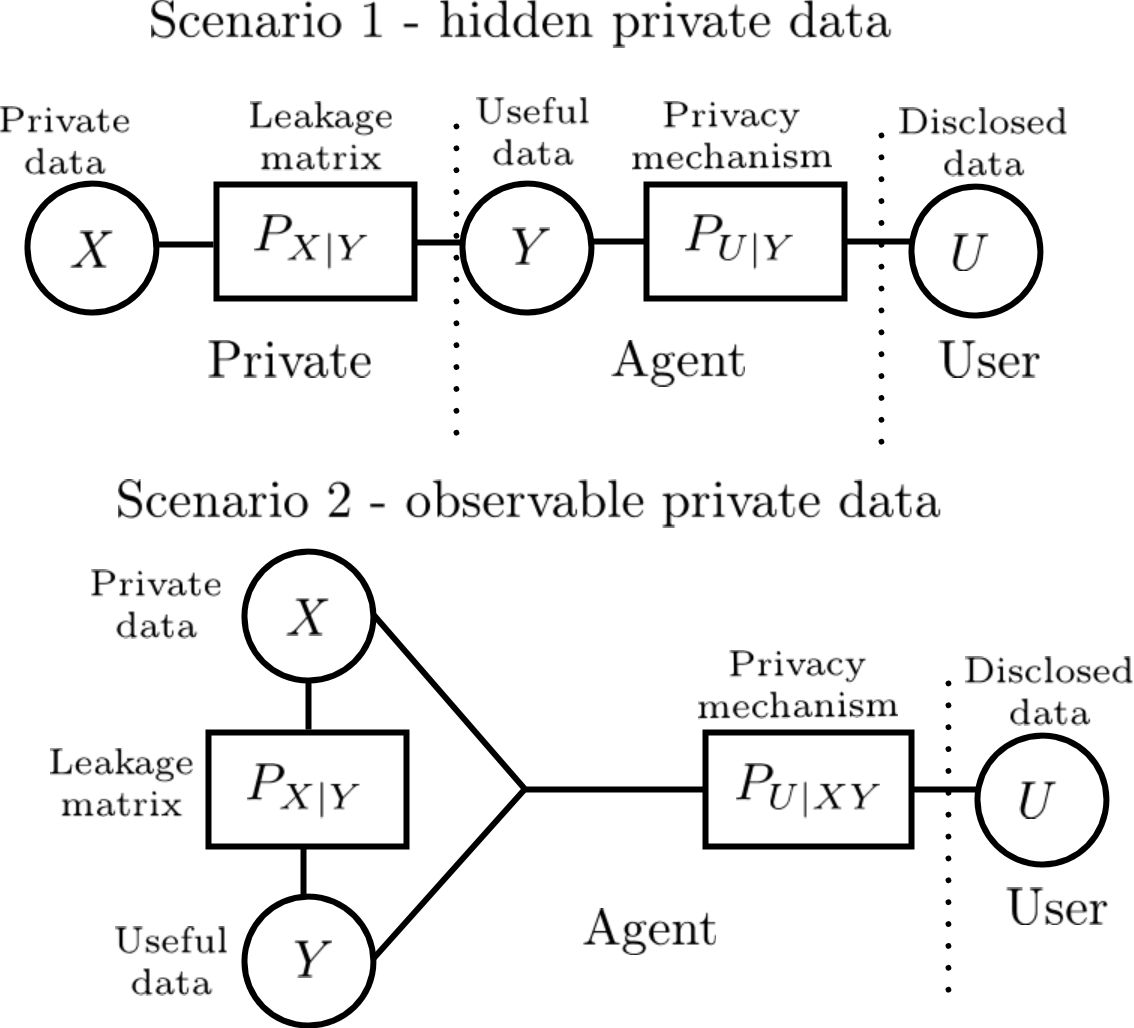}
	\caption{In the first scenario the agent has only access to $Y$ and in the second scenario the agent has additionally access to $X$.}
	\label{ISITsys}
\end{figure}
 In \cite{asoodeh1}, maximal correlation either mutual information is used for measuring the privacy and properties of rate-privacy functions are studied.
 In \cite{Total}, average total variation is used as a privacy measure and a $\chi^2$-privacy criterion is considered in \cite{Calmon2}, where an upper bound and a lower bound on the privacy-utility trade-off have been derived.
The problem of privacy-utility trade-off considering mutual information both as measures of utility and privacy given the Markov chain $X-Y-U$ is studied in \cite{borz}. Under the perfect privacy assumption it is shown that the privacy mechanism design problem can be reduced to a linear program. This has been extended in \cite{gun} considering the privacy utility trade-off with a rate constraint on the disclosed data.
Moreover, in \cite{borz}, it has been shown that information can be only revealed if the kernel (leakage matrix) between useful data and private data is not invertible. In \cite{khodam}, we generalize \cite{borz} by relaxing the perfect privacy assumption allowing some small bounded leakage. More specifically, we design privacy mechanisms with a per-letter privacy criterion considering an invertible kernel where a small leakage is allowed. We generalized this result to a non-invertible leakage matrix in \cite{Khodam22}.
In both \cite{7888175} and \cite{ wang}, the optimal privacy-utility trade-offs have been studied considering two scenarios where the private data is either observable or hidden. Sufficient conditions for equality of the optimal trade-offs in the considered two scenarios have been derived where the utility is measured by a distortion metric.  
In \cite{kostala}, by using the Functional Representation Lemma bounds on privacy-utility trade-off for the two scenarios are derived. These results are derived under the perfect secrecy assumption, i.e., no leakages are allowed. The bounds are tight when the private data is a deterministic function of the useful data.

In this paper, random variable (RV) $Y$ denotes the useful data and is correlated with the private data denoted by RV $X$. Furthermore, RV $U$ describes the disclosed data. Two scenarios are considered in this work, where in both scenarios, an agent wants to disclose the useful information to a user as shown in Fig.~\ref{ISITsys}. In the first scenario, the agent observes $Y$ and has no direct access to $X$, i.e., the private data is hidden. The goal is to design $U$ based on $Y$ that reveals as much information as possible about $Y$ and satisfies a bounded privacy criterion. In the second scenario, the agent has access to both $X$ and $Y$ and can design $U$ based on $(X,Y)$ to release as much information as possible about $Y$ while satisfying the bounded leakage constraint.
In both scenarios we consider different privacy constraints. 
Our results in this work can be divided into three main parts as follows:\\
	\textbf{Part I (\emph{Privacy-utility trade-off with non-zero leakage}):} In the first part of the paper, our problem is closely related to \cite{kostala}, where the problem of \emph{secrecy by design} is studied. We generalize the privacy problems considered in \cite{kostala} by relaxing the perfect privacy constraint and allowing some leakage. More specifically, we consider bounded mutual information, i.e., $I(U;X)\leq \epsilon$ for privacy leakage constraint. To this end, we extend the Functional Representation Lemma and the Strong Functional Representation Lemma, introduced in \cite{kosnane} by relaxing the independence condition to derive lower bounds for the second scenario. We show that if the common information and mutual information between $X$ and $Y$ are equal, then the maximum utility in two scenarios and the attained upper bound in the second scenario are equal. Furthermore, in the special case of perfect privacy we find a new upper bound for the perfect privacy function by using the \emph{excess functional information} introduced in \cite{kosnane}.
	We show that this new bound generalizes the bound in \cite{kostala}. Moreover, we show that the bound is tight when $|\mathcal{Y}|=2$. Finally, we compare our new lower and upper bounds with the bounds found in \cite{kostala} when the leakage is zero. The conference version regarding this part can be found in \cite{zamani2022bounds}.\\
    \textbf{Part II (\emph{Privacy-utility trade-off with non-zero leakage and per-letter privacy constraints}):} In the second part, for each scenario we use two different per-letter privacy constraints instead of the bounded mutual information constraint. As argued in \cite{Khodam22}, it can be more desirable to protect the private data individually and not merely on average. 
    We first find similar results as the extended versions of the Functional Representation Lemma and the Strong Functional Representation Lemma found in the previous part considering the per-letter privacy constraint rather than bounded mutual information. Using these results we find a lower bound for the privacy-utility trade-off in the second scenario. Furthermore, we provide bounds for three other problems and study a special case where $X$ is a deterministic function of $Y$. We show that the obtained upper and lower bounds in the first scenario are asymptotically optimal when $X$ is a deterministic function of $Y$. In \cite{Khodam22}, one of the problems considered in this part has been studied. It has been shown that by using methods from Euclidean information geometry as used in \cite{Shashi, huang}, we can simplify the design problem in the high privacy regime and the main problem can be solved approximately by a linear program. In this work, we provide upper bounds on the error of the approximation considered in \cite{Khodam22}. Finally we compare the attained bounds in a numerical example. The conference version related to this part can be found in \cite{zamani2022bounds}.\\
    \textbf{Part III (\emph{Privacy-utility trade-off with non-zero leakage and prioritized private data}):}
    Finally, we consider the problem in the second scenario where the private data is divided into two parts, i.e., $X=(X_1,X_2)$. In this part we use bounded mutual information as privacy constraint. We assume that the first part is more private than the second part, i.e., the privacy leakage of $X_1$ is less than or equal to the privacy leakage of $X_2$. Furthermore, we assume that the total leakage between $(X_1,X_2)$ and $U$ is bounded by $\epsilon$ and we derive upper and lower bounds. Similar to the previous parts we use the extended versions of Functional Representation Lemma and the Strong Functional Representation Lemma to find lower bounds. \\
    Our contribution can be summarized as follows:\\
    \textbf{(i)} We extend the Functional Representation Lemma and the Strong Functional Representation Lemma by a randomized response output that allows some controlled leakage. Various extended versions are introduced using different leakage measures.  \\ 
    \textbf{(ii)} We formulate and study various privacy mechanism design problems through the lens of information theory with controlled leakage, demonstrating the use of the extended versions of Functional Representation Lemma and the Strong Functional Representation Lemma.\\
      \textbf{(iii)} We provide discussion and comparison of the obtained results with each other and the literature.\\
\textbf{Notation:}
Given two jointly random variables $X$ and $Y$, the entropy, conditional entropy and mutual information between $X$ and $Y$ are given by
$H(Y) = \mathbb{E}(\log(\frac{1}{P_Y(y)}))$,
$H(Y|X) = \mathbb{E}(\log(\frac{1}{P_{Y|X}(y|x)}))$,
and $I(X;Y)=H(Y)-H(Y|X)$.
$X$ and $Y$ are independent if and only if 
$
I(X;Y)=0.
$
Furthermore, the Markov chain $X-Y-U$ holds if and only if
$
I(X;U|Y)=0.
$
For the binary entropy $h(\cdot)$ we have
$
h(p)=-\left(p\log(p)+(1-p)\log(1-p)\right).
$
In this work, let matrix $P_{XY}$ defined on $\mathbb{R}^{|\mathcal{X}|\times|\mathcal{Y}|}$ denote the joint distribution of discrete random variables $X$ and $Y$ defined on finite alphabets $\cal{X}$ and $\cal{Y}$. We represent  
marginal distributions of $X$ and $Y$ by vectors $P_X$ and $P_Y$ defined on $\mathbb{R}^{|\mathcal{X}|}$ and $\mathbb{R}^{|\mathcal{Y}|}$ given by the row and column sums of $P_{XY}$. We represent the leakage matrix $P_{X|Y}$ by a matrix defined on $\mathbb{R}^{|\mathcal{X}|\times|\cal{Y}|}$ with elements $P_{X|Y}(x|y)$ for all $x$ and $y$. Furthermore, for given $u\in \mathcal{U}$, $P_{X,U}(\cdot,u)$ and $P_{X|U}(\cdot|u)$ defined on $\mathbb{R}^{|\mathcal{X}|}$ are distribution vectors with elements $P_{X,U}(x,u)$ and $P_{X|U}(x|u)$ for all $x\in\cal X$ and $u\in \cal U$. 
The relation between $U$ and $Y$ is described by the kernel $P_{U|Y}$ defined on $\mathbb{R}^{|\mathcal{U}|\times|\mathcal{Y}|}$, furthermore, the relation between $U$ and the pair $(Y,X)$ is described by the kernel $P_{U|Y,X}$ defined on $\mathbb{R}^{|\mathcal{U}|\times|\mathcal{Y}|\times|\mathcal{X}|}$.
\section{system model and Problem Formulation} \label{sec:system}
In this work we assume that each element in vectors $P_X$ and $P_Y$ is non-zero. 
In the second part of the results, which corresponds to \emph{privacy-utility trade-off with non-zero leakage and per-letter privacy constraints}, we assume that for the discrete random variables $X$ and $Y$ defined on finite alphabets $\cal{X}$ and $\cal{Y}$ we have that $|\mathcal{X}|<|\mathcal{Y}|$.
 Furthermore, we assume that the leakage matrix $P_{X|Y}$ is of full rank. In the remaining parts of the results we consider arbitrary correlated discrete random variables $X$ and $Y$ as private data and useful data.\\
In the following we introduce the main problems in three different parts. In each part, we first define the problems considered in this paper, then we motivate them and study the properties of the measures for utility and privacy leakage and compare them with previous works. 
\subsection{Privacy-utility trade-off with non-zero leakage}
In this part, for both design problems we use mutual information as utility and leakage measures. The privacy mechanism design problems for the two scenarios can be stated as follows
\begin{align}
g_{\epsilon}(P_{XY})&=\sup_{\begin{array}{c} 
	\substack{P_{U|Y}:X-Y-U\\ \ I(U;X)\leq\epsilon,}
	\end{array}}I(Y;U),\label{main2}\\
h_{\epsilon}(P_{XY})&=\sup_{\begin{array}{c} 
	\substack{P_{U|Y,X}: I(U;X)\leq\epsilon,}
	\end{array}}I(Y;U).\label{main1}
\end{align} 
The function $h_{\epsilon}(P_{XY})$ is used when the privacy mechanism has access to both the private data and the useful data. The function $g_{\epsilon}(P_{XY})$ is used when the privacy mechanism has only access to the useful data. Clearly, the relation between $h_{\epsilon}(P_{XY})$ and $g_{\epsilon}(P_{XY})$ can be stated as follows
\begin{align*}
g_{\epsilon}(P_{XY})\leq h_{\epsilon}(P_{XY}).
\end{align*}
In the following we study the case where $0\leq\epsilon< I(X;Y)$, otherwise the optimal solution of $h_{\epsilon}(P_{XY})$ or $g_{\epsilon}(P_{XY})$ is $H(Y)$ achieved by $U=Y$. 
\begin{remark}
	\normalfont
	For $\epsilon=0$, \eqref{main2} leads to the perfect privacy problem studied in \cite{borz}. It has been shown that for a non-invertible leakage matrix $P_{X|Y}$, $g_0(P_{XY})$ can be obtained by a linear program.
	Furthermore, for $\epsilon=0$, \eqref{main1} leads to the secret-dependent perfect privacy function $h_0(P_{XY})$, studied in \cite{kostala}, where upper and lower bounds on $h_0(P_{XY})$ have been derived. The bounds are tight when $X$ is deterministic function of $Y$.
\end{remark}
\subsection{Privacy-utility trade-off with non-zero leakage and per-letter privacy constraints}
The privacy mechanism design problems for the two scenarios can be stated as follows 
\begin{align}
g_{\epsilon}^{w\ell}(P_{XY})&=\sup_{\begin{array}{c} 
	\substack{P_{U|Y}:X-Y-U\\ \ d(P_{X,U}(\cdot,u),P_XP_{U}(u))\leq\epsilon,\ \forall u}
	\end{array}}I(Y;U),\label{main22}\\
h_{\epsilon}^{w\ell}(P_{XY})&=\sup_{\begin{array}{c} 
	\substack{P_{U|Y,X}: d(P_{X,U}(\cdot,u),P_XP_{U}(u))\leq\epsilon,\ \forall u}
	\end{array}}I(Y;U),\label{main11}\\
g_{\epsilon}^{\ell}(P_{XY})&=\sup_{\begin{array}{c} 
	\substack{P_{U|Y}:X-Y-U\\ \ d(P_{X|U}(\cdot|u),P_X)\leq\epsilon,\ \forall u}
	\end{array}}I(Y;U),\label{main222}\\
h_{\epsilon}^{\ell}(P_{XY})&=\sup_{\begin{array}{c} 
	\substack{P_{U|Y,X}: d(P_{X|U}(\cdot|u),P_X)\leq\epsilon,\ \forall u}
	\end{array}}I(Y;U),\label{main12}
\end{align} 
where $d(P,Q)$ corresponds to the total variation distance between two distributions $P$ and $Q$, i.e., $d(P,Q)=\sum_x |P(x)-Q(x)|$.
The functions $h_{\epsilon}^{w\ell}(P_{XY})$ and $h_{\epsilon}^{\ell}(P_{XY})$ are used when the privacy mechanism has access to both the private data and the useful data. The functions $g_{\epsilon}^{w\ell}(P_{XY})$ and $g_{\epsilon}^{\ell}(P_{XY})$ are used when the privacy mechanism has only access to the useful data. In this work, the privacy constraints used in \eqref{main22} and \eqref{main222}, i.e., $d(P_{X,U}(\cdot,u),P_XP_U(u))\leq\epsilon,\ \forall u,$ and $d(P_{X|U}(\cdot|u),P_X)\leq\epsilon,\ \forall u,$ are called the \emph{weighted strong $\ell_1$-privacy criterion} and the \emph{strong $\ell_1$-privacy criterion}. We refer to them as strong since they are per-letter privacy constraints, i.e., they must hold for every $u\in\cal U$. The difference between the two privacy constraints in this work is the weight $P_U(u)$, therefore, we refer to $d(P_{X,U}(\cdot,u),P_XP_U(u))\leq\epsilon,\ \forall u,$ as weighted.
We later show that the weight $P_U(u)$ enables us to use extended versions of the Functional Representation Lemma and Strong Functional Representation Lemma to find lower bounds considering the second scenario. 
\begin{remark}
	\normalfont
	We have used the leakage constraint $d(P_{X|U}(\cdot|u),P_X)\leq\epsilon,\ \forall u$ in \cite{Khodam22}, where we called it the \emph{strong $\ell_1$-privacy criterion}. 
\end{remark}  
\begin{remark}
	\normalfont
	For $\epsilon=0$, both \eqref{main22} and \eqref{main222} lead to the perfect privacy problem studied in \cite{borz}. It has been shown that for a non-invertible leakage matrix $P_{X|Y}$, $g_0(P_{XY})$ can be obtained by a linear program.
\end{remark}
\begin{remark}
	\normalfont
	For $\epsilon=0$, both \eqref{main11} and \eqref{main12} lead to the secret-dependent perfect privacy function $h_0(P_{XY})$, studied in \cite{kostala}, where upper and lower bounds on $h_0(P_{XY})$ have been derived. In \cite{zamani2022bounds}, we have strengthened these bounds.
\end{remark}
\begin{remark}
	\normalfont
	The privacy problem defined in \eqref{main222} has been studied in \cite{Khodam22} where we provide a lower bound on $g_{\epsilon}^{\ell}(P_{XY})$ using the information geometry concepts. Furthermore, we have shown that without loss of optimality it is sufficient to assume $|\mathcal{U}|\leq |\mathcal{Y}|$ so that it is ensured that the supremum can be achieved. 
\end{remark}
Intuitively, for small $\epsilon$, both privacy constraints mean that $X$ and $U$ are almost independent. As we discussed in \cite{Khodam22}, closeness of $P_{X|U}(\cdot|u)$ and $P_X$ allows us to approximate $g_{\epsilon}^{\ell}(P_{XY})$ with a series expansion and find a lower bound. In this work we show that by using a similar methodology, we can approximate $g_{\epsilon}^{w\ell}(P_{XY})$ exploiting the closeness of $P_{X,U}(\cdot,u)$ and $P_XP_U(u)$. This provides us a lower bound for $g_{\epsilon}^{w\ell}(P_{XY})$. Next, we study some properties of the weighted strong $\ell_1$-privacy criterion and the strong $\ell_1$-privacy criterion. To this end recall that the \emph{linkage inequality} is the property that if $\cal L$ measures the privacy leakage between two random variables and the Markov chain $X-Y-U$ holds then we have $\mathcal{L}(X;U)\leq\mathcal{L}(Y;U)$. Since the weighted strong $\ell_1$-privacy criterion and the strong $\ell_1$-privacy criterion are per letter constraints we define $\mathcal{L}^1(X;U=u)\triangleq \left\lVert P_{X|U}(\cdot|u)-P_X \right\rVert_1$, $\mathcal{L}^1(Y;U=u)\triangleq \left\lVert P_{Y|U}(\cdot|u)-P_Y \right\rVert_1$, $\mathcal{L}^2(X;U=u)\triangleq \left\lVert P_{X,U}(\cdot,u)-P_XP_U(u) \right\rVert_1$, $\mathcal{L}^2(Y;U=u)\triangleq \left\lVert P_{Y,U}(\cdot,u)-P_YP_U(u) \right\rVert_1$.
\begin{proposition}\label{gooz}
	The weighted strong $\ell_1$-privacy criterion and the strong $\ell_1$-privacy criterion satisfy the linkage inequality. Thus, for each $u\in\mathcal{U}$ we have $\mathcal{L}^1(X;U=u)\leq\mathcal{L}^1(Y;U=u)$ and $\mathcal{L}^2(X;U=u)\leq\mathcal{L}^2(Y;U=u)$.	
\end{proposition}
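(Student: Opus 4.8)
The plan is to exploit the Markov structure $X-Y-U$, which is the defining assumption for the constraints in \eqref{main22} and \eqref{main222}. The key observation is that this Markov chain lets me write the disclosure-conditioned distribution of $X$ as a linear image of the disclosure-conditioned distribution of $Y$ under the fixed leakage kernel $P_{X|Y}$. Concretely, $I(X;U|Y)=0$ gives $P_{X|Y,U}=P_{X|Y}$, so for every $u$,
\begin{align*}
P_{X|U}(x|u)=\sum_{y}P_{X|Y}(x|y)\,P_{Y|U}(y|u),\qquad P_X(x)=\sum_{y}P_{X|Y}(x|y)\,P_Y(y).
\end{align*}
Subtracting these two identities, the kernel $P_{X|Y}$ acts linearly on the difference, yielding $P_{X|U}(x|u)-P_X(x)=\sum_{y}P_{X|Y}(x|y)\bigl(P_{Y|U}(y|u)-P_Y(y)\bigr)$.

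The second step is to bound the $\ell_1$ norm. Applying the triangle inequality to the sum over $y$ and then interchanging the order of summation,
\begin{align*}
\mathcal{L}^1(X;U=u)=\sum_{x}\Bigl|\sum_{y}P_{X|Y}(x|y)\bigl(P_{Y|U}(y|u)-P_Y(y)\bigr)\Bigr|\leq\sum_{y}\Bigl(\sum_{x}P_{X|Y}(x|y)\Bigr)\bigl|P_{Y|U}(y|u)-P_Y(y)\bigr|.
\end{align*}
Here the only structural fact I need is that each column of the leakage matrix is a probability vector, i.e.\ $\sum_{x}P_{X|Y}(x|y)=1$ for every $y$. This collapses the inner sum to $1$, and the right-hand side becomes exactly $\mathcal{L}^1(Y;U=u)=\lVert P_{Y|U}(\cdot|u)-P_Y\rVert_1$, establishing the strong $\ell_1$ linkage inequality.

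For the weighted criterion I would reduce it to the unweighted one. Since $P_{X,U}(\cdot,u)=P_U(u)\,P_{X|U}(\cdot|u)$ and $P_XP_U(u)$ share the common scalar factor $P_U(u)$, factoring it out gives $\mathcal{L}^2(X;U=u)=P_U(u)\,\mathcal{L}^1(X;U=u)$, and likewise $\mathcal{L}^2(Y;U=u)=P_U(u)\,\mathcal{L}^1(Y;U=u)$. Multiplying the already-proved $\ell_1$ inequality by the nonnegative weight $P_U(u)$ immediately yields $\mathcal{L}^2(X;U=u)\leq\mathcal{L}^2(Y;U=u)$. I do not anticipate a genuine obstacle here; the only point requiring care is the interchange of summation order in the triangle-inequality step, where correctly using the column-stochasticity of $P_{X|Y}$ (rather than any property of $P_{Y|U}$) is what makes the bound tight enough to recover exactly the leakage of $Y$.
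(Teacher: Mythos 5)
Your proposal is correct and matches the paper's own proof essentially step for step: both use the Markov chain $X-Y-U$ to write $P_{X|U}(\cdot|u)-P_X$ as $P_{X|Y}$ applied to $P_{Y|U}(\cdot|u)-P_Y$, then apply the triangle inequality, interchange the order of summation, and use column-stochasticity of $P_{X|Y}$ to collapse the inner sum. The reduction of the weighted criterion to the unweighted one by factoring out $P_U(u)$ is also exactly how the paper concludes.
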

\begin{proof}
	The proof is provided in Appendix A. 
	 
\end{proof}
As discussed in \cite[page 4]{Total}, one benefit of the linkage inequality is to keep the privacy in layers of private information which is discussed in the following. Assume that the Markov chain $X-Y-U$ holds and the distribution of $X$ is not known. If we can find $\tilde{X}$ such that $X-\tilde{X}-Y-U$ holds and the distribution of $\tilde{X}$ is known then by the linkage inequality we can conclude $\mathcal{L}(X;U=u)\leq \mathcal{L}(\tilde{X};U=u)$. In other words, if the framework is designed for $\tilde{X}$, then a privacy constraint on $\tilde{X}$ leads to the constraint on $X$, i.e., provides an upper bound for any pre-processed RV $X$. To have the Markov chain $X-\tilde{X}-Y-U$ consider the scenario where $\tilde{X}$ is the private data and $X$ is a function of private data which is not known. For instance let $\tilde{X}=(X_1,X_2,X_3)$ and $X=X_1$. Thus, the mechanism that is designed based on $\tilde{X}-Y-U$ preserves the leakage constraint on $X$ and $U$. As pointed out in \cite[Remark 2]{Total}, among all the $L^p$-norms ($p\geq 1$), only the $\ell_1$ norm satisfies the linkage inequality. Next, given a leakage measure $\mathcal{L}$ and let the Markov chain $X-Y-U$ hold, if we have  $\mathcal{L}(X;U)\leq \mathcal{L}(X;Y)$, then we say that the \emph{post processing inequality} holds. In this work we use $\mathcal{L}^1(X;U)= \sum_u P_U(u)\mathcal{L}^1(X;U=u)$, $\mathcal{L}^2(X;U)= \sum_u \mathcal{L}^2(X;U=u)$ and $\mathcal{L}^1(Y;U)=\sum_u P_U(u)\mathcal{L}^1(Y;U=u)$, $\mathcal{L}^2(Y;U)=\sum_u \mathcal{L}^2(Y;U=u)$.
\begin{proposition}
	The average of the weighted strong $\ell_1$-privacy criterion and the strong $\ell_1$-privacy criterion with weights equal one and $P_U(u)$, respectively, satisfy the post-processing inequality, i.e., we have $\mathcal{L}^1(X;U)\leq\mathcal{L}^1(Y;U)$ and $\mathcal{L}^2(X;U)\leq\mathcal{L}^2(Y;U)$.	  
\end{proposition}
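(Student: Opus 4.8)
The plan is to obtain both stated inequalities directly from the per-letter linkage inequalities of Proposition~\ref{gooz}, by forming the appropriate nonnegative-weighted sums over $u\in\mathcal{U}$. Recall that $\mathcal{L}^1(X;U)=\sum_u P_U(u)\,\mathcal{L}^1(X;U=u)$ uses the weight $P_U(u)$, whereas $\mathcal{L}^2(X;U)=\sum_u\mathcal{L}^2(X;U=u)$ uses the unit weight; the same convention applies to the $Y$-quantities. Since both weight choices are nonnegative, any inequality holding per letter is preserved under the corresponding weighted summation, so the proof reduces to a one-line averaging argument.

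For the strong $\ell_1$ criterion I would start from the linkage inequality $\mathcal{L}^1(X;U=u)\leq\mathcal{L}^1(Y;U=u)$ supplied by Proposition~\ref{gooz}, multiply both sides by $P_U(u)\geq 0$, and sum over $u$, giving
\begin{align*}
\mathcal{L}^1(X;U)=\sum_u P_U(u)\,\mathcal{L}^1(X;U=u)\leq\sum_u P_U(u)\,\mathcal{L}^1(Y;U=u)=\mathcal{L}^1(Y;U).
\end{align*}
For the weighted strong $\ell_1$ criterion the argument is identical with unit weight: summing $\mathcal{L}^2(X;U=u)\leq\mathcal{L}^2(Y;U=u)$ over $u$ yields $\mathcal{L}^2(X;U)\leq\mathcal{L}^2(Y;U)$.

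If a self-contained derivation is preferred over invoking Proposition~\ref{gooz}, the key step is the identity that follows from the Markov chain $X-Y-U$, namely $P_{X,U}(x,u)-P_X(x)P_U(u)=\sum_y P_{X|Y}(x|y)\bigl[P_{Y,U}(y,u)-P_Y(y)P_U(u)\bigr]$, and analogously $P_{X|U}(x|u)-P_X(x)=\sum_y P_{X|Y}(x|y)\bigl[P_{Y|U}(y|u)-P_Y(y)\bigr]$. Applying the triangle inequality, summing over $x$, and using the normalization $\sum_x P_{X|Y}(x|y)=1$ to collapse the leakage kernel recovers exactly the corresponding $Y$-quantity, which is the per-letter bound; the weighted sum over $u$ then closes the argument as above.

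I do not anticipate a genuine obstacle here: the only point requiring care is the bookkeeping of the two weight conventions, so that it is precisely the nonnegativity of the weights that transports the per-letter inequalities through the averaging, together with verifying that the stochastic-matrix normalization $\sum_x P_{X|Y}(x|y)=1$ cleanly eliminates the kernel in the self-contained route.
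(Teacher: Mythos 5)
Your proposal is correct and proves exactly the stated inequalities. The paper disposes of this proposition by citing an external result (\cite[Theorem 3]{Total}) whose proof rests on the convexity of the $\ell_1$-norm; your route — multiplying the per-letter linkage inequalities of Proposition~\ref{gooz} by the appropriate nonnegative weights ($P_U(u)$ for $\mathcal{L}^1$, unit weight for $\mathcal{L}^2$) and summing over $u$ — reduces to the same triangle-inequality computation on the kernel identity induced by the Markov chain $X-Y-U$, so it is essentially the same argument, packaged as a self-contained corollary of the already-proven per-letter statement rather than as a citation.
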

\begin{proof}
	The proof is the same as that of \cite[Theorem 3]{Total} which is based on the convexity of the $\ell_1$-norm.
\end{proof}
\begin{proposition}
	The weighted strong $\ell_1$-privacy criterion and the strong $\ell_1$-privacy criterion result in bounded inference threat that is modeled in \cite{Calmon1}.
\end{proposition}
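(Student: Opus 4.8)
The plan is to show that, for any bounded cost function available to the adversary in the inference model of \cite{Calmon1}, the advantage gained from observing $U$ is controlled by the per-letter $\ell_1$ deviation of the posterior $P_{X|U}(\cdot|u)$ from the prior $P_X$, which is precisely the quantity both criteria constrain. First I would recall the threat quantity. In the model of \cite{Calmon1}, an adversary equipped with a cost function $C(x,q)$ that is bounded over the finite alphabet $\mathcal{X}$, say $|C(x,q)|\le c_{\max}$, forms a best guess minimizing expected cost, and the relevant measure of threat is the average inference cost gain
\[
\Delta C=\min_{q}\sum_{x}P_X(x)\,C(x,q)-\sum_{u}P_U(u)\min_{q}\sum_{x}P_{X|U}(x|u)\,C(x,q),
\]
i.e., the reduction of the optimal inference cost once $U$ is revealed. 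The inference threat is said to be bounded when $\Delta C$ is small.

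Next I would introduce the Bayes-cost functional $f(P)=\min_{q}\sum_{x}P(x)\,C(x,q)$ on the probability simplex over $\mathcal{X}$. Being a pointwise minimum of linear functionals of $P$, $f$ is concave, and since $\sum_{u}P_U(u)P_{X|U}(\cdot|u)=P_X$, Jensen's inequality gives $\Delta C=\sum_{u}P_U(u)\big[f(P_X)-f(P_{X|U}(\cdot|u))\big]\ge 0$, recovering the expected fact that observation can only help the adversary. The key step is the matching upper bound: picking the minimizer $q^\star$ for $P_{X|U}(\cdot|u)$ and using it as a feasible (suboptimal) guess for $P_X$ yields $f(P_X)-f(P_{X|U}(\cdot|u))\le c_{\max}\,d\big(P_X,P_{X|U}(\cdot|u)\big)=c_{\max}\,\mathcal{L}^1(X;U=u)$, so that $f$ is Lipschitz in the $\ell_1$ metric with constant $c_{\max}$.

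Finally I would feed in the two privacy criteria. The strong $\ell_1$-privacy criterion enforces $\mathcal{L}^1(X;U=u)\le\epsilon$ for every $u$, whence $\Delta C\le c_{\max}\,\epsilon$. For the weighted strong $\ell_1$-privacy criterion I would invoke the identity $\mathcal{L}^2(X;U=u)=\big\lVert P_{X,U}(\cdot,u)-P_XP_U(u)\big\rVert_1=P_U(u)\,\mathcal{L}^1(X;U=u)$, which lets me rewrite $\Delta C\le c_{\max}\sum_{u}P_U(u)\,\mathcal{L}^1(X;U=u)=c_{\max}\sum_{u}\mathcal{L}^2(X;U=u)$; the per-letter constraint $\mathcal{L}^2(X;U=u)\le\epsilon$ then gives $\Delta C\le c_{\max}\,|\mathcal{U}|\,\epsilon$. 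In either case $\Delta C$ is finite and vanishes with $\epsilon$, establishing the claimed bounded inference threat.

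The main obstacle I anticipate is the Lipschitz continuity of the Bayes-cost functional $f$: it relies on the adversary's cost being bounded on the finite alphabet (an assumption already present in \cite{Calmon1}) and on the one-line argument that the optimal guess for one belief remains feasible for the other, giving the required estimate. A secondary bookkeeping point is to match the exact normalization of the inference-gain measure used in \cite{Calmon1} (average cost gain versus a multiplicative advantage), but this does not affect boundedness and only rescales the constant multiplying $\epsilon$.
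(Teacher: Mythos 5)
Your proof is correct, but it takes a more self-contained route than the paper. The paper's proof is a two-line reduction: it observes that both per-letter criteria imply a bound on the averaged quantity $\sum_u P_U(u)\lVert P_{X|U=u}-P_X\rVert_1 = 2\,TV(X;U)$ and then simply cites \cite[Theorem 4]{Total}, which already establishes that bounded total variation between $P_{X,U}$ and $P_XP_U$ implies bounded adversarial inference gain in the model of \cite{Calmon1}. You instead reconstruct the substance of that cited theorem from scratch: the Bayes-cost functional $f(P)=\min_q\sum_x P(x)C(x,q)$ is concave (so Jensen gives $\Delta C\ge 0$) and $c_{\max}$-Lipschitz in the $\ell_1$ metric (by reusing the optimal guess for one belief as a feasible guess for the other), whence $\Delta C$ is controlled by the average $\ell_1$ deviation of the posteriors. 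What your version buys is transparency and explicit constants; it also handles a point the paper glosses over: for the \emph{weighted} criterion the per-letter bound $\mathcal{L}^2(X;U=u)\le\epsilon$ only yields $\sum_u\mathcal{L}^2(X;U=u)\le|\mathcal{U}|\epsilon$ after summing, so your constant $c_{\max}|\mathcal{U}|\epsilon$ is the honest one, whereas the paper's claim that both criteria give the average bounded by $\epsilon$ itself is strictly accurate only for the unweighted strong criterion (boundedness, which is all the proposition asserts, holds either way since $|\mathcal{U}|$ is finite here). What the paper's route buys is brevity and consistency with the measure-theoretic framing of \cite{Total}, at the cost of leaning on an external result.
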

\begin{proof}
	The weighted strong $\ell_1$-privacy criterion and the strong $\ell_1$-privacy criterion lead to a bounded on average constraint $\sum_u P_U(u)\left\lVert P_{X|U=u}\!-\!P_X \right\rVert_1=2TV(X;U)\leq \epsilon$, where $TV(.|.)$ corresponds to the total variation. Thus, using \cite[Theorem 4]{Total}, we conclude that adversarial inference performance is bounded. 
\end{proof}
Another property of the $\ell_1$ distance is the relation between the $\ell_1$-norm and probability of error in a hypothesis test. As argued in \cite[Remark~6.5]{polyanskiy2014lecture}, for the binary hypothesis test with $H_0:X\sim P$ and $H_1:X\sim Q$,
the expression $1-TV(P, Q)$ is the sum of false alarm and missed detection probabilities. Thus, we have $TV(P,Q)=1-2P_e$, where $P_e$ is the error probability (the probability that we can not decide the right distribution for $X$ with equal prior probabilities for $H_0$ and $H_1$). For instance, consider the scenario where we want to decide whether $X$ and $U$ are independent or correlated. To this end, let $P=P_{X,U}$, $Q=P_{X}P_{U}$, $H_0:X,U\sim P$ and $H_1:X,U\sim Q$. We have
\begin{align*}
TV(P_{X,U};P_{X}P_{U})&=\frac{1}{2}\sum_u P_U(u)\left\lVert P_{X|U=u}-P_X \right\rVert_1\!\\&\leq \frac{1}{2}\epsilon.
\end{align*}
Thus, by increasing the leakage, which means that $TV(P_{X,U};P_{X}P_{U})$ increases, then the error of probability decreases.  \\
Finally, if we use $\ell_1$ distance as privacy leakage, after approximating $g_{\epsilon}^{w\ell}(P_{XY})$ and $g_{\epsilon}^{\ell}(P_{XY})$, we face a linear program problems in the end, which are much easier to handle.
\subsection{Privacy-utility trade-off with non-zero leakage and prioritized private data}
In this part, we assume that the private data $X$ is divided into two parts $X_1$ and $X_2$, where the first part is more private than the other part, i.e., the privacy leakage of $X_1$ is less than or equal to the privacy leakage of $X_2$. We use mutual information for measuring both privacy leakage and utility and we only consider the second scenario where the privacy mechanism has access to both $X$ and $Y$. Hence, the problem can be stated as follows
   \begin{align}
   h_{\epsilon}^{p}(P_{X_1X_2Y})&=\sup_{\begin{array}{c} 
   	\substack{P_{U|YX_1X_2}: I(U;X_1,X_2)\leq\epsilon,\\ I(U;X_1)\leq I(U;X_2)}
   	\end{array}}I(Y;U).\label{main111}
   \end{align}
The constraint $I(U;X_1,X_2)\leq\epsilon$ ensures that the total leakage is bounded by $\epsilon$ and the constraint $I(U;X_1)\leq I(U;X_2)$ corresponds to the priority of $X_1$. In practice, we usually have different levels of privacy leakage for the private data and in this work we consider two levels.
\begin{remark}
	\normalfont
	For $\epsilon=0$, \eqref{main111} leads to the secret-dependent perfect privacy function $h_0(P_{XY})$.
\end{remark}
\section{Relation between the problems}\label{rel}
In this section we study the relation between the privacy measures that are used. In the following, we first present the relation between the weighted strong $\ell_1$-privacy criterion and bounded mutual information.
\begin{proposition}\label{trash1}
	For any $\epsilon\geq 0$ and pair $(X,U)$ we have
	\begin{align}\label{tr1}
	I(X;U)\leq \epsilon\Rightarrow d(P_{X,U}(\cdot,u),P_XP_U(u))\leq\sqrt{2\epsilon},\ \forall u.
	\end{align}
	\begin{proof}
		We have
		\begin{align*}
		\epsilon\geq I(U;X)&=\sum_u P_U(u)D(P_{X|U}(\cdot|u),P_X)\\ &\stackrel{(a)}{\geq} \sum_u \frac{P_U(u)}{2}\left( d(P_{X|U}(\cdot|u),P_X)\right)^2\\ &\geq \frac{P_U(u)^2}{2}\left( d(P_{X|U}(\cdot|u),P_X)\right)^2\\ &= \frac{\left( d(P_{X,U}(\cdot,u),P_XP_U(u))\right)^2}{2},
		\end{align*}
		where (a) follows by the Pinsker's inequality \cite{verdu}.
	\end{proof}
\end{proposition}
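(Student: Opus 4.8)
The plan is to bound the per-letter total variation distance by the global mutual information budget through Pinsker's inequality. First I would invoke the standard identity writing mutual information as an average of conditional Kullback--Leibler divergences, $I(U;X)=\sum_u P_U(u)\,D(P_{X|U}(\cdot|u),P_X)$, so that the constraint $I(U;X)\leq\epsilon$ becomes a bound on this average. The whole argument then reduces to relating a single conditional divergence to the joint distance $d(P_{X,U}(\cdot,u),P_XP_U(u))$ appearing in the claim.

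Next I would apply Pinsker's inequality termwise, in the form $D(P,Q)\geq \tfrac12\,(d(P,Q))^2$ (valid since $d$ is the $\ell_1$ distance, i.e.\ twice the usual total variation), to obtain $\epsilon\geq \sum_u \tfrac{P_U(u)}{2}\,(d(P_{X|U}(\cdot|u),P_X))^2$. Because all summands are nonnegative, the sum dominates any single term; fixing an arbitrary $u$ and using $P_U(u)\leq 1$ (hence $P_U(u)\geq P_U(u)^2$) gives $\epsilon\geq \tfrac{P_U(u)^2}{2}\,(d(P_{X|U}(\cdot|u),P_X))^2$.

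Finally I would convert the conditional distance into the joint one via the homogeneity of the $\ell_1$ norm: since $P_{X,U}(\cdot,u)=P_U(u)\,P_{X|U}(\cdot|u)$ and $P_XP_U(u)=P_U(u)\,P_X$, the nonnegative scalar $P_U(u)$ factors out of the norm, yielding $P_U(u)\,d(P_{X|U}(\cdot|u),P_X)=d(P_{X,U}(\cdot,u),P_XP_U(u))$. Substituting this identity gives $\epsilon\geq \tfrac12\,(d(P_{X,U}(\cdot,u),P_XP_U(u)))^2$, and taking square roots establishes $d(P_{X,U}(\cdot,u),P_XP_U(u))\leq\sqrt{2\epsilon}$ for every $u$. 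I do not anticipate a genuine obstacle; the only subtle point is the deliberate double relaxation in the middle step---discarding the other summands and weakening $P_U(u)$ to $P_U(u)^2$---which is precisely what produces the weighted per-letter quantity (rather than the conditional one) and thereby ties the mutual-information constraint to the weighted strong $\ell_1$-privacy criterion.
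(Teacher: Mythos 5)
Your proposal follows exactly the same route as the paper's proof: the decomposition of $I(U;X)$ into an average of conditional KL divergences, termwise Pinsker, the double relaxation (dropping the other summands and replacing $P_U(u)$ by $P_U(u)^2$), and finally the homogeneity of the $\ell_1$ norm to absorb the weight into the joint distribution. It is correct and essentially identical to the argument in the paper.
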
  
\begin{corollary}\label{trash2}
	By using \eqref{tr1} we have
	\begin{align*}
	h_{\bar{\epsilon}}^{w\ell}(P_{XY})\leq h_{\epsilon}(P_{XY}),\\
		g_{\bar{\epsilon}}^{w\ell}(P_{XY})\leq g_{\epsilon}(P_{XY}),
	\end{align*}
	where $\bar{\epsilon}=\sqrt{2\epsilon}$.
\end{corollary}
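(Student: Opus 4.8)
The plan is to obtain both displayed inequalities from a single comparison of feasible regions followed by monotonicity of the supremum. The point is that $h_{\bar{\epsilon}}^{w\ell}$ and $h_{\epsilon}$ maximize the \emph{same} utility functional $I(Y;U)$ over the \emph{same} family of kernels $P_{U|Y,X}$ and differ only in the privacy constraint they impose; the pair $g_{\bar{\epsilon}}^{w\ell}, g_{\epsilon}$ is identical except that both additionally enforce the Markov chain $X-Y-U$. I would therefore reduce the corollary to a containment of constraint sets: if every kernel admissible for the weighted strong $\ell_1$ problem at level $\bar{\epsilon}$ is also admissible for the mutual-information problem at level $\epsilon$, then maximizing the fixed functional $I(Y;U)$ over the (smaller) weighted-$\ell_1$ region cannot exceed its maximum over the (larger) mutual-information region, which is exactly the stated ordering.

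To make this precise I would fix $P_{XY}$ and write the two regions as $\mathcal{F}_{\bar{\epsilon}}^{w\ell}=\{P_{U|Y,X}: d(P_{X,U}(\cdot,u),P_XP_U(u))\le\bar{\epsilon}\ \forall u\}$ and $\mathcal{F}_{\epsilon}=\{P_{U|Y,X}: I(U;X)\le\epsilon\}$. The key step is to use Proposition~\ref{trash1}, i.e.\ the implication \eqref{tr1}, as the quantitative bridge between the total-variation constraint and the mutual-information constraint under the calibration $\bar{\epsilon}=\sqrt{2\epsilon}$, so as to establish the containment $\mathcal{F}_{\bar{\epsilon}}^{w\ell}\subseteq\mathcal{F}_{\epsilon}$. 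Taking the supremum of $I(Y;U)$ over these nested regions then gives $h_{\bar{\epsilon}}^{w\ell}(P_{XY})\le h_{\epsilon}(P_{XY})$, and re-running the identical argument with the Markov condition $X-Y-U$ imposed on both regions gives $g_{\bar{\epsilon}}^{w\ell}(P_{XY})\le g_{\epsilon}(P_{XY})$. Because the Markov condition enters symmetrically, the $g$-inequality needs no separate treatment once the $h$-inequality is in place.

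The step I expect to be the main obstacle, and the one I would verify most carefully, is aligning the direction of the one-sided implication \eqref{tr1} with the containment the stated inequality demands, given that the two problems are parametrized differently by $\epsilon$ and $\bar{\epsilon}=\sqrt{2\epsilon}$. Concretely I would determine which of the two constraints is the stronger one for a common kernel, confirm that the per-letter nature of the weighted $\ell_1$ constraint (which must hold for every $u$) is correctly matched against the single scalar constraint $I(U;X)\le\epsilon$, and check that $\bar{\epsilon}=\sqrt{2\epsilon}$ is exactly the calibration under which the containment—and hence the claimed ordering of the trade-off functions—holds. Getting this direction and calibration right is the entire content of the corollary; once the containment is secured, the supremum-monotonicity conclusion is immediate.
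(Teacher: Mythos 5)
Your template --- constraint-set containment followed by monotonicity of the supremum over nested feasible regions --- is the right one, and it is exactly what the paper's one-line justification (``by using \eqref{tr1}'') intends; there is no separate written proof in the paper beyond this. The gap is in the direction of the containment, which you correctly flag as the main obstacle but then resolve the wrong way. Proposition~\ref{trash1} states the implication $I(X;U)\le\epsilon\ \Rightarrow\ d(P_{X,U}(\cdot,u),P_XP_U(u))\le\sqrt{2\epsilon}$ for all $u$; in your set language this is $\mathcal{F}_{\epsilon}\subseteq\mathcal{F}_{\bar{\epsilon}}^{w\ell}$, not the containment $\mathcal{F}_{\bar{\epsilon}}^{w\ell}\subseteq\mathcal{F}_{\epsilon}$ that your argument requires. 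Running your own sup-monotonicity step with the containment that \eqref{tr1} actually provides yields $h_{\epsilon}(P_{XY})\le h_{\bar{\epsilon}}^{w\ell}(P_{XY})$ and $g_{\epsilon}(P_{XY})\le g_{\bar{\epsilon}}^{w\ell}(P_{XY})$ --- the reverse of the displayed inequalities.

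Moreover, the containment you posit is not merely underived; it is false, so no recalibration of $\bar{\epsilon}$ within your scheme can rescue it. Take $Y=X$ uniform on $\{0,1\}$ and $\epsilon=1/8$ nat, so $\bar{\epsilon}=\sqrt{2\epsilon}=1/2$ and $\epsilon<I(X;Y)=\ln 2$. The choice $U=Y$ satisfies the Markov chain and the weighted criterion with equality, since $d(P_{X,U}(\cdot,u),P_XP_U(u))=P_U(u)\,d(P_{X|U}(\cdot|u),P_X)=\tfrac{1}{2}\cdot 1$ for both $u$; hence $g_{1/2}^{w\ell}=h_{1/2}^{w\ell}=H(Y)=\ln 2$, whereas Lemma~\ref{goh} gives $h_{1/8}\le H(Y|X)+1/8=1/8$, so $h_{\bar{\epsilon}}^{w\ell}>h_{\epsilon}$ (and likewise for $g$). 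A repair via reverse Pinsker's inequality also fails structurally: the per-letter bound $P_U(u)\,d(P_{X|U}(\cdot|u),P_X)\le\bar{\epsilon}$ only controls $d(P_{X|U}(\cdot|u),P_X)\le\bar{\epsilon}/P_U(u)$, which blows up for rare $u$, so $I(X;U)$ cannot be bounded uniformly --- this is precisely why Proposition~\ref{trash3} works with the unweighted criterion. Note that the paper's own later constructions (Lemma~\ref{lemma11}, Proposition~\ref{prop111}) use MI-feasible $U$ with $I(U;X)=\epsilon^{2}/2$ as feasible points for $h_{\epsilon}^{w\ell}$, i.e., exactly the direction $\mathcal{F}_{\epsilon^{2}/2}\subseteq\mathcal{F}_{\epsilon}^{w\ell}$; the corollary as printed points opposite to the only direction its cited justification (and your argument) can deliver, so the defect is not yours alone --- but your proposal as written does not, and cannot, establish the stated claim.
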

Next, we present the relation between the strong $\ell_1$-privacy criterion and bounded mutual information.
\begin{proposition}\label{trash3}
	For any $\epsilon\geq 0$ and pair $(X,U)$ we have
	\begin{align}\label{tr2}
	d(P_{X|U}(\cdot|u),P_X)\leq\epsilon,\ \forall u \Rightarrow  I(X;U)\leq \frac{\epsilon^2}{\min P_X}.
	\end{align}
	\begin{proof}
		We have
		\begin{align*}
		I(X;U)& \stackrel{(a)}{\leq}\sum_u P_U(u)\frac{\left(d(P_{X|U}(\cdot|u),P_X)\right)^2}{\min P_X}  \\& = \frac{\epsilon^2}{\min P_X},
		\end{align*}
		where (a) follows by the reverse Pinsker's inequality \cite{verdu}.
	\end{proof}
\end{proposition}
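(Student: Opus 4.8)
The plan is to reduce the mutual information to an average of per-letter Kullback--Leibler divergences and then control each divergence by the $\ell_1$ deviation that the hypothesis bounds. First I would use the identity $I(X;U)=\sum_u P_U(u)\,D(P_{X|U}(\cdot|u),P_X)$, which writes the mutual information as the expected divergence of each posterior $P_{X|U}(\cdot|u)$ from the prior $P_X$. This isolates, for each $u$, a divergence whose two arguments differ in $\ell_1$ by at most $\epsilon$ under the strong $\ell_1$-privacy criterion, so it remains only to convert an $\ell_1$ bound into a divergence bound.

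Second, I would invoke the reverse Pinsker inequality \cite{verdu} to obtain, for each $u$,
\begin{align*}
D(P_{X|U}(\cdot|u),P_X)\le \frac{\bigl(d(P_{X|U}(\cdot|u),P_X)\bigr)^2}{\min P_X}.
\end{align*}
If I had to reprove this bound rather than cite it, the cleanest route passes through the $\chi^2$-divergence: one has $D(P,Q)\le\chi^2(P,Q)$ and
\begin{align*}
\chi^2(P,Q)=\sum_x\frac{(P(x)-Q(x))^2}{Q(x)}\le\frac{1}{\min Q}\sum_x (P(x)-Q(x))^2\le\frac{\bigl(\sum_x|P(x)-Q(x)|\bigr)^2}{\min Q},
\end{align*}
where the last step uses $\sum_x a_x^2\le(\sum_x|a_x|)^2$. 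Specializing to $Q=P_X$ and $P=P_{X|U}(\cdot|u)$ yields the displayed bound; here $\min P_X>0$ by the standing assumption that every entry of $P_X$ is nonzero, which is exactly what makes the reverse direction available.

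Finally, I would substitute the per-letter hypothesis $d(P_{X|U}(\cdot|u),P_X)\le\epsilon$ into each term and average, using $\sum_u P_U(u)=1$, to conclude $I(X;U)\le\sum_u P_U(u)\,\epsilon^2/\min P_X=\epsilon^2/\min P_X$. The main obstacle is the reverse Pinsker step itself: unlike Pinsker's inequality it is not universal and genuinely relies on the lower bound $\min P_X>0$ on the reference distribution, so the estimate would blow up if $P_X$ had a vanishing coordinate. The only other delicacy is the bookkeeping of the multiplicative constant and the logarithm base between the $\chi^2$ chain and the stated form, which I would align with the version in \cite{verdu}.
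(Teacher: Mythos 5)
Your proof is correct and follows essentially the same route as the paper: decompose $I(X;U)$ as the $P_U$-average of the per-letter divergences $D(P_{X|U}(\cdot|u),P_X)$, bound each by reverse Pinsker's inequality using $\min P_X>0$, and substitute the per-letter hypothesis. The additional $\chi^2$-based derivation of the reverse Pinsker step is a harmless elaboration of the same citation the paper relies on.
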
 
\begin{corollary}\label{trash4}
	By using \eqref{tr2} we have
	\begin{align*}
	h_{\epsilon'}(P_{XY})\leq h_{\epsilon}^{\ell}(P_{XY}),\\
	g_{\epsilon'}(P_{XY})\leq g_{\epsilon}^{\ell}(P_{XY}),
	\end{align*}
	where $\epsilon'=\frac{\epsilon^2}{\min P_X}$.
\end{corollary}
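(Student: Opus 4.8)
The plan is to prove both inequalities by the feasible-set comparison template already used for Corollary~\ref{trash2}: since $h_{\epsilon'}$ and $h_\epsilon^\ell$ are suprema of the \emph{same} objective $I(Y;U)$, differing only in the constraint imposed on $P_{U|Y,X}$, it suffices to compare their feasible regions and invoke monotonicity of the supremum. To reach the stated direction $h_{\epsilon'}(P_{XY})\leq h_\epsilon^\ell(P_{XY})$ I would show that the feasible set of the problem defining $h_{\epsilon'}$ is contained in that of $h_\epsilon^\ell$, i.e. that every kernel with $I(U;X)\leq\epsilon'$ also satisfies $d(P_{X|U}(\cdot|u),P_X)\leq\epsilon$ for all $u$. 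The $g$-version $g_{\epsilon'}(P_{XY})\leq g_\epsilon^\ell(P_{XY})$ would then follow verbatim, since the Markov constraint $X-Y-U$ is common to both problems and is untouched by the leakage conditions.

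Everything therefore reduces to the single implication
\begin{align*}
I(X;U)\leq\epsilon'=\frac{\epsilon^2}{\min P_X}\ \Longrightarrow\ d(P_{X|U}(\cdot|u),P_X)\leq\epsilon,\quad\forall u,
\end{align*}
and this is exactly where I expect the main obstacle. Inequality~\eqref{tr2} furnishes only the \emph{opposite} direction (a uniform per-letter total-variation bound implying the averaged mutual-information bound), so here I would need its converse: to dominate each per-letter deviation $d(P_{X|U}(\cdot|u),P_X)$ by the single scalar $I(X;U)$. The difficulty is structural: $I(X;U)=\sum_u P_U(u)\,D(P_{X|U}(\cdot|u),P_X)$ controls only the $P_U$-weighted average of the per-letter divergences, whereas the strong $\ell_1$ criterion must hold for \emph{every} $u$, and for a low-probability outcome the divergence $D(P_{X|U}(\cdot|u),P_X)$—hence, through Pinsker's inequality \cite{verdu}, the deviation $d(P_{X|U}(\cdot|u),P_X)$—may be large even when $I(X;U)$ is small.

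Concretely, the estimate I would try to push through is $d(P_{X|U}(\cdot|u),P_X)^2\leq 2\,D(P_{X|U}(\cdot|u),P_X)\leq 2\,I(X;U)/P_U(u)$, followed by a worst-case control over $u$. Making this uniform seems to require additional structure not used by \eqref{tr2} alone—e.g. a lower bound on $\min_u P_U(u)$, or the standing restriction $|\mathcal{U}|\leq|\mathcal{Y}|$ together with the full-rank assumption on $P_{X|Y}$—which would let me absorb the $1/P_U(u)$ factor into the parameter translation. I therefore expect the crux to be proving such a uniform per-letter bound and reconciling it with the exact form $\epsilon'=\epsilon^2/\min P_X$; once that converse implication is in hand, the containment and the monotonicity of the supremum complete the argument routinely.
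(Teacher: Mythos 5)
Your diagnosis is exactly right, and it is worth stating plainly: the corollary cannot be proved as written, because the paper's own (implicit) proof is precisely the two-line feasible-set argument you set up, and that argument yields the \emph{reverse} inequalities. By \eqref{tr2}, every $U$ feasible for $h_{\epsilon}^{\ell}(P_{XY})$ (resp.\ $g_{\epsilon}^{\ell}(P_{XY})$, where the Markov chain $X-Y-U$ is common to both problems and unaffected) satisfies $I(U;X)\leq \epsilon'=\epsilon^2/\min P_X$ and is therefore feasible for $h_{\epsilon'}(P_{XY})$ (resp.\ $g_{\epsilon'}(P_{XY})$); monotonicity of the supremum then gives $h_{\epsilon}^{\ell}(P_{XY})\leq h_{\epsilon'}(P_{XY})$ and $g_{\epsilon}^{\ell}(P_{XY})\leq g_{\epsilon'}(P_{XY})$, with the two sides interchanged relative to the statement. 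The same directional slip occurs in Corollary~\ref{trash2} (where \eqref{tr1} gives the containment $\{I(U;X)\leq\epsilon\}\subseteq\{d(P_{X,U}(\cdot,u),P_XP_U(u))\leq\sqrt{2\epsilon}\ \forall u\}$, hence $h_{\epsilon}\leq h_{\bar\epsilon}^{w\ell}$), so the combined chain following these corollaries should read $h^{\ell}_{\tilde\epsilon}(P_{XY})\leq h_{\epsilon}(P_{XY})\leq h^{w\ell}_{\bar\epsilon}(P_{XY})$ and likewise for $g$.

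Your further instinct is also correct: proving the stated direction would require the converse of \eqref{tr2}, and the $1/P_U(u)$ factor you isolate genuinely blocks any uniform per-letter bound from an averaged mutual-information bound. Moreover, none of the rescue assumptions you contemplate (a lower bound on $\min_u P_U(u)$, $|\mathcal{U}|\leq|\mathcal{Y}|$, full-rank $P_{X|Y}$) can work, because the stated inequality is false in general, not merely unproved by this route. Concretely, let $Y$ be uniform on $\{1,2,3,4\}$ and let $X$ indicate $\{Y\in\{1,2\}\}$, so $H(X|Y)=0$, $\min P_X=\tfrac12$ and $\epsilon'=2\epsilon^2$. Since $X$ is a deterministic function of $Y$, Corollary~\ref{seff} gives $h_{\epsilon'}(P_{XY})=H(Y|X)+2\epsilon^2$ for all small $\epsilon$. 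On the other hand, any $U$ feasible for $h^{\ell}_{\epsilon}(P_{XY})$ has $P_{X|U}(1|u)=\tfrac12+t_u$ with $2|t_u|\leq\epsilon$, and an exact evaluation of the divergence (in place of the lossy reverse Pinsker step) gives $D(P_{X|U}(\cdot|u),P_X)\leq \epsilon^2/2+O(\epsilon^4)$ for every $u$; combining with \eqref{key} yields $h^{\ell}_{\epsilon}(P_{XY})\leq H(Y|X)+\epsilon^2/2+O(\epsilon^4)<h_{\epsilon'}(P_{XY})$ for all sufficiently small $\epsilon>0$. So your proposal correctly establishes the only inequality that \eqref{tr2} supports; the remaining discrepancy is a sign error in the statement itself, not a gap in your argument.
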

Using Corollary~\ref{trash2} and Corollary~\ref{trash4} we have
\begin{align*}
h_{\bar{\epsilon}}^{w\ell}(P_{XY})\leq h_{\epsilon}(P_{XY})\leq h_{\tilde{\epsilon}}^{\ell}(P_{XY}),\\
g_{\bar{\epsilon}}^{w\ell}(P_{XY})\leq g_{\epsilon}(P_{XY})\leq g_{\tilde{\epsilon}}^{\ell}(P_{XY}),
\end{align*}
where $\tilde{\epsilon}=\sqrt{\epsilon\min P_X}$.
\section{Main results}\label{result}
In this part, we provide lower and upper bounds for the privacy problems defined in \eqref{main2}, \eqref{main1}, \eqref{main22}, \eqref{main11}, \eqref{main222}, \eqref{main12} and \eqref{main111}. We study the tightness of the bounds in special cases and compare them in examples. In more detail, in the first part of the results, which corresponds to \emph{privacy-utility trade-off with non-zero leakage}, we show that the upper bound on $h_{\epsilon}(P_{XY})$ is achieved when the common information and mutual information between $X$ and $Y$ are equal. We provide necessary and sufficient conditions for the achievability of the obtained upper bound in general. Moreover, in cases where no leakage is allowed, i.e., $\epsilon=0$, we provide new bounds that generalize the previous bounds. In the second part of the results in this section corresponding to \emph{privacy-utility trade-off with non-zero leakage and per-letter privacy criterions} we use concepts from information geometry to find lower bounds on $g_{\epsilon}^{w\ell}(P_{XY})$ and $g_{\epsilon}^{\ell}(P_{XY})$. In the remaining parts of the main results, we provide lower and upper bounds for $h_{\epsilon}^{p}(P_{X_1X_2Y})$ and study them for special cases.
\subsection{Privacy-utility trade-off with non-zero leakage}
   In this section, we first recall the Functional Representation Lemma (FRL) \cite[Lemma~1]{kostala} and Strong Functional Representation Lemma (SFRL) \cite[Theorem~1]{kosnane} for discrete $X$ and $Y$. Then we extend them for correlated random variables $X$ and $U$, i.e., $0\leq I(U;X)=\epsilon$. We refer to them as Extended Functional Representation Lemma (EFRL) and Extended Strong Functional Representation Lemma (ESFRL). We show that the extended lemmas, i.e., EFRL and ESFRL, enable us to find lower bounds on $h_{\epsilon}(P_{XY})$.
   \begin{lemma}\label{lemma1} (Functional Representation Lemma \cite[Lemma~1]{kostala}):
   	For any pair of RVs $(X,Y)$ distributed according to $P_{XY}$ supported on alphabets $\mathcal{X}$ and $\mathcal{Y}$ where $|\mathcal{X}|$ is finite and $|\mathcal{Y}|$ is finite or countably infinite, there exists a RV $U$ defined on $\mathcal{U}$ such that $X$ and $U$ are independent, i.e., we have
   	\begin{align}\label{c1}
   	I(U;X)=0,
   	\end{align}
   	$Y$ is a deterministic function of $(U,X)$, i.e., we have
   	\begin{align}
   	H(Y|U,X)=0,\label{c2}
   	\end{align}
   	and 
   	\begin{align}
   	|\mathcal{U}|\leq |\mathcal{X}|(|\mathcal{Y}|-1)+1.\label{c3}
   	\end{align}
   \end{lemma}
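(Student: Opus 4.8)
The plan is to reduce the three conclusions to a single concrete construction and then obtain the cardinality bound \eqref{c3} by quantizing a continuous source of common randomness. First I would observe that \eqref{c1}--\eqref{c3} are equivalent to exhibiting a probability distribution $P_U$ on some alphabet $\mathcal{U}$ together with a deterministic map $g:\mathcal{X}\times\mathcal{U}\to\mathcal{Y}$ such that $\sum_{u:\,g(x,u)=y}P_U(u)=P_{Y|X}(y|x)$ for every $x$ and $y$, where $U\sim P_U$ is drawn independently of $X$ and $Y:=g(X,U)$. With this setup, $I(U;X)=0$ and $H(Y|U,X)=0$ are immediate, and the correct marginal is automatic since $P(X=x,Y=y)=P_X(x)\sum_{u:\,g(x,u)=y}P_U(u)=P_X(x)P_{Y|X}(y|x)$.

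Next I would realize such a $(P_U,g)$ first with \emph{continuous} randomness. Let $W$ be uniform on $[0,1]$ and independent of $X$. Fixing an ordering $y_1,\dots,y_m$ of $\mathcal{Y}$ with $m=|\mathcal{Y}|$, for each $x$ I partition $[0,1]$ into consecutive subintervals $I_{x,y_j}$ of length $P_{Y|X}(y_j|x)$, with endpoints given by the conditional cumulative sums $F_x(y_j)=\sum_{i\le j}P_{Y|X}(y_i|x)$, and set $g(x,w)=y_j$ whenever $w\in I_{x,y_j}$. Then the Lebesgue measure of $\{w:g(x,w)=y_j\}$ equals $P_{Y|X}(y_j|x)$, so $W$ (independent of $X$) and $g$ realize the required joint law.

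The cardinality bound is where the actual work lies, and I would get it by quantizing $W$. Collect all interior breakpoints $\{F_x(y_j):x\in\mathcal{X},\ 1\le j\le m-1\}$; there are at most $|\mathcal{X}|(|\mathcal{Y}|-1)$ of them, since for each $x$ only $j=1,\dots,m-1$ contribute ($F_x(y_m)=1$ is an endpoint, not interior). Adjoining $0$ and $1$ and sorting yields points $0=t_0<t_1<\dots<t_k=1$ with $k\le|\mathcal{X}|(|\mathcal{Y}|-1)+1$, cutting $[0,1]$ into $k$ cells. On the interior of each cell there is no breakpoint of any $F_x$, so for every fixed $x$ the map $w\mapsto g(x,w)$ is constant on that cell. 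I would then define a discrete $U$ to be the index of the cell containing $W$, set $P_U(u)$ equal to the length $t_u-t_{u-1}$ of cell $u$, and let $\bar g(x,u):=g(x,w)$ for any $w$ in the interior of cell $u$; this is well defined precisely because $g(x,\cdot)$ is constant there. The pair $(U,\bar g)$ inherits $\sum_{u:\,\bar g(x,u)=y}P_U(u)=P_{Y|X}(y|x)$ from the continuous construction, so $U$ is independent of $X$, $Y=\bar g(X,U)$ is deterministic, and $|\mathcal{U}|=k\le|\mathcal{X}|(|\mathcal{Y}|-1)+1$, establishing \eqref{c1}, \eqref{c2} and \eqref{c3}.

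Finally I would note that when $|\mathcal{Y}|$ is countably infinite the bound \eqref{c3} is vacuous and the continuous construction (or any countable refinement) already suffices, so no quantization is needed. The main obstacle is the breakpoint count in the third paragraph: I must check that coincidences among the $F_x(y_j)$ only decrease $k$, that the measure-zero assignment of cell boundaries affects no probability, and that a single ordering of $\mathcal{Y}$ is used uniformly across all $x$ so the cells are common to every conditional partition; granting these, the counting is exact and the stated cardinality follows.
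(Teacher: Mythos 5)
Your construction is correct and is exactly the standard proof of the Functional Representation Lemma (partition $[0,1]$ by the conditional CDFs $F_x$, then quantize at the at most $|\mathcal{X}|(|\mathcal{Y}|-1)$ interior breakpoints to get at most $|\mathcal{X}|(|\mathcal{Y}|-1)+1$ cells), which is the argument used in the cited source; the paper itself states Lemma~\ref{lemma1} by reference without reproving it. The only point worth making explicit is the coupling step: since your $(X,g(X,U))$ only reproduces the law $P_{XY}$, you should define $U$ on the original probability space via the induced conditional $P_{U|XY}$, which is routine.
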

 \begin{lemma}\label{lemma2} (Strong Functional Representation Lemma \cite[Theorem~1]{kosnane}):
	For any pair of RVs $(X,Y)$ distributed according to $P_{XY}$ supported on alphabets $\mathcal{X}$ and $\mathcal{Y}$ where $|\mathcal{X}|$ is finite and $|\mathcal{Y}|$ is finite or countably infinite with $I(X,Y)< \infty$, there exists a RV $U$ defined on $\mathcal{U}$ such that $X$ and $U$ are independent, i.e., we have
	\begin{align*}
	I(U;X)=0,
	\end{align*}
	$Y$ is a deterministic function of $(U,X)$, i.e., we have 
	\begin{align*}
	H(Y|U,X)=0,
	\end{align*}
	$I(X;U|Y)$ can be upper bounded as follows
	\begin{align*}
	I(X;U|Y)\leq \log(I(X;Y)+1)+4,
	\end{align*}
	and 
	$
	|\mathcal{U}|\leq |\mathcal{X}|(|\mathcal{Y}|-1)+2.
	$
\end{lemma}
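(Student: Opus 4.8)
The plan is to prove the lemma by the \emph{Poisson functional representation} (equivalently, greedy Poisson rejection sampling), which builds $U$ as a common-randomness object independent of $X$ and then selects one of its atoms using the likelihood ratio between $P_{Y|X}$ and $P_Y$. First I would let $\{(T_i,Y_i)\}_{i\geq 1}$ be a unit-rate Poisson process on $\mathbb{R}_{\geq 0}$ carrying i.i.d.\ marks $Y_i\sim P_Y$, and take the entire realization of this process to be $U$; since the process is generated with no reference to $X$, the independence $I(U;X)=0$ holds by construction. Given $X=x$ I would set
\[
K=\arg\min_{i}\frac{T_i}{r_x(Y_i)},\qquad r_x(y):=\frac{P_{Y|X}(y|x)}{P_Y(y)},
\]
and output $Y=Y_K=:f(U,X)$. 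Because $K$ is a deterministic function of $(U,X)$ and $Y_K$ is read off from $U$, we obtain $H(Y|U,X)=0$ immediately; and the mapping/thinning theorem for Poisson processes guarantees $Y_K\sim P_{Y|X=x}$, so the constructed pair reproduces the prescribed law $P_{XY}$.

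The core of the argument — and the step I expect to be the main obstacle — is the leakage bound $I(X;U|Y)\leq\log(I(X;Y)+1)+4$. I would begin with the identity $I(X;U|Y)=H(Y|U)-I(X;Y)$, which follows from the chain rule once the two structural properties $I(U;X)=0$ and $H(Y|U,X)=0$ are in place; this turns the goal into showing $H(Y|U)\leq I(X;Y)+\log(I(X;Y)+1)+4$. Since $Y=Y_K$ is a deterministic function of $(U,K)$, we have $H(Y|U)\leq H(K|U)\leq H(K)$, so it suffices to control the entropy of the selection index. The decisive estimate bounds the expected log-depth of the competition: exploiting the memoryless structure of the exponential arrival times one shows $\mathbb{E}[\log K]\leq I(X;Y)+O(1)$, the point being that the winner is governed by the likelihood ratios $r_x(y)$ whose expected logarithm under $P_{XY}$ is exactly $I(X;Y)$. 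Converting this mean bound into an entropy bound via the maximum-entropy inequality for positive-integer random variables with fixed $\mathbb{E}[\log K]$, and simplifying with Jensen's inequality, yields $H(K)\leq I(X;Y)+\log(I(X;Y)+1)+4$. Carrying the absolute constants through these two steps so that they aggregate to exactly $4$ is the delicate bookkeeping.

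The construction above places $U$ on an uncountable alphabet, so the final step is to recover the finite bound $|\mathcal{U}|\leq|\mathcal{X}|(|\mathcal{Y}|-1)+2$ (when $\mathcal{Y}$ is countably infinite this bound is vacuous and the step is unnecessary). I would apply a support-lemma (Fenchel--Eggleston--Carath\'eodory) argument to $P_U$, preserving exactly the functionals that matter: for each $x\in\mathcal{X}$ the conditional law $P_{Y|X=x}(\cdot)=\sum_u P_U(u)\mathbf{1}[f(u,x)=\cdot]$, which contributes $|\mathcal{Y}|-1$ constraints per $x$ and fixes the joint $P_{XY}$, together with the single quantity $H(Y|U)$ that carries the leakage through the identity $I(X;U|Y)=H(Y|U)-I(X;Y)$. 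That is $|\mathcal{X}|(|\mathcal{Y}|-1)+1$ functionals, so the support lemma furnishes a representation on at most $|\mathcal{X}|(|\mathcal{Y}|-1)+2$ atoms — exactly one more than the Functional Representation Lemma bound of Lemma~\ref{lemma1}, the extra atom being the price of also pinning down $H(Y|U)$. Since the reduction only reweights the masses $P_U(u)$ while keeping the product structure with $X$ and the deterministic map $f$ intact, the properties $I(U;X)=0$ and $H(Y|U,X)=0$ survive unchanged.
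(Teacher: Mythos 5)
Your proposal is correct and follows essentially the same route as the paper, which does not reprove this lemma but imports it directly from \cite[Theorem~1]{kosnane}: the Poisson functional representation, the identity $I(X;U|Y)=H(Y|U)-I(X;Y)$, the bound $H(Y|U)\leq H(K)$ via $\mathbb{E}[\log K]\leq I(X;Y)+O(1)$ and the maximum-entropy inequality, and the final Carath\'eodory reduction preserving the $|\mathcal{X}|(|\mathcal{Y}|-1)$ conditional laws plus $H(Y|U)$ are exactly the steps of that reference. Your constant accounting is consistent with the paper's own Remark noting that the sharper term $e^{-1}\log(e)+2+\log(I(X;Y)+e^{-1}\log(e)+2)$ can replace $\log(I(X;Y)+1)+4$.
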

\begin{remark}
	By checking the proof in \cite[Th.~1]{kosnane}, the term $e^{-1}\log(e)+2+\log(I(X;Y)+e^{-1}\log(e)+2)$ can be used instead of $\log(I(X;Y)+1)+4$. 
\end{remark}
\begin{remark}
	Idea of extending Functional Representation Lemma and Strong Functional Representation Lemma is basically adding a randomized response argument to the random variable $U$ found by Lemma~\ref{lemma1} and Lemma~\ref{lemma2}. The idea is simple, when it can be connected with an old principle it gets creditable. 
\end{remark}
\begin{lemma}\label{lemma3} (Extended Functional Representation Lemma):
	For any $0\leq\epsilon< I(X;Y)$ and pair of RVs $(X,Y)$ distributed according to $P_{XY}$ supported on alphabets $\mathcal{X}$ and $\mathcal{Y}$ where $|\mathcal{X}|$ is finite and $|\mathcal{Y}|$ is finite or countably infinite, there exists a RV $U$ defined on $\mathcal{U}$ such that the leakage between $X$ and $U$ is equal to $\epsilon$, i.e., we have
	\begin{align*}
	I(U;X)= \epsilon,
	\end{align*}
	$Y$ is a deterministic function of $(U,X)$, i.e., we have  
	\begin{align*}
	H(Y|U,X)=0,
	\end{align*}
	and 
	$
	|\mathcal{U}|\leq \left[|\mathcal{X}|(|\mathcal{Y}|-1)+1\right]\left[|\mathcal{X}|+1\right].
	$
\end{lemma}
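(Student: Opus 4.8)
The plan is to take the random variable $U'$ produced by the Functional Representation Lemma (Lemma~\ref{lemma1}) and \emph{append} to it a randomized-response component $V$ that injects a precisely controlled amount of leakage. By Lemma~\ref{lemma1} there is a $U'$ with $I(U';X)=0$, $H(Y|U',X)=0$, and $|\mathcal{U}'|\le|\mathcal{X}|(|\mathcal{Y}|-1)+1$. I would then set $U=(U',V)$, so that once $V$ is shown to range over a set of size $|\mathcal{X}|+1$, the claimed bound $|\mathcal{U}|\le[|\mathcal{X}|(|\mathcal{Y}|-1)+1][|\mathcal{X}|+1]$ follows immediately from the product structure $U\in\mathcal{U}'\times(\mathcal{X}\cup\{e\})$.

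For the new component, introduce a fresh symbol $e\notin\mathcal{X}$ and define $V$ through a channel from $X$ alone: given $X=x$, set $V=x$ with probability $\alpha$ and $V=e$ with probability $1-\alpha$, where $\alpha\in[0,1)$ is a parameter to be fixed. Crucially, $V$ is generated using only $X$, so the joint law factorizes as $P_{X,Y,U',V}=P_{X,Y,U'}\,P_{V|X}$; in particular the Markov chain $V-X-(U',Y)$ holds. Then $V$ takes values in $\mathcal{X}\cup\{e\}$, a set of size $|\mathcal{X}|+1$, which yields the cardinality bound, and since $Y$ is already a deterministic function of $(U',X)$ it is a fortiori a deterministic function of $(U',V,X)=(U,X)$, so $H(Y|U,X)\le H(Y|U',X)=0$.

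It remains to tune $\alpha$ so that $I(U;X)=\epsilon$. The key step is $I(U;X)=I(U',V;X)=I(U';X)+I(V;X|U')=I(V;X|U')$, using $I(U';X)=0$. I would then show the appended response does not interact with $U'$: from $U'\perp X$ together with $V-X-U'$ one checks that $V\perp U'$, whence $H(V|U')=H(V)$ and $H(V|X,U')=H(V|X)$, giving $I(V;X|U')=I(V;X)$. A direct evaluation of the erasure-type channel gives $H(V|X)=h(\alpha)$ and $H(V)=h(\alpha)+\alpha H(X)$, so $I(V;X)=\alpha H(X)$. Choosing $\alpha=\epsilon/H(X)$ then yields $I(U;X)=\epsilon$ exactly; since $\epsilon<I(X;Y)\le H(X)$ and every entry of $P_X$ is nonzero (so $H(X)>0$), this $\alpha$ lies in $[0,1)$ and the construction is valid. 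The only delicate point is the leakage identity $I(U;X)=I(V;X)$: the work is in verifying $V\perp U'$ and using the Markov structure $V-X-U'$ so that conditioning on $U'$ neither creates nor destroys dependence between $V$ and $X$; once this is settled, the linearity $I(V;X)=\alpha H(X)$ makes the choice of $\alpha$ immediate.
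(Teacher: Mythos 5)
Your proposal is correct and is essentially the paper's own proof: the paper likewise takes $\tilde U$ from the FRL and appends the randomized-response variable $W$ that equals $X$ with probability $\alpha=\epsilon/H(X)$ and a fresh erasure symbol otherwise, then sets $U=(\tilde U,W)$. The only (immaterial) difference is bookkeeping — the paper evaluates $I(X;U)$ via $H(X)-H(X|\tilde U,W)=\alpha H(X)$, while you evaluate $I(V;X|U')=I(V;X)=\alpha H(X)$ after verifying $V\perp U'$; both give the same identity and the same cardinality bound.
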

\begin{proof}
	The proof is provided in Appendix B.
\end{proof}
\begin{lemma}\label{lemma4} (Extended Strong Functional Representation Lemma):
	For any $0\leq\epsilon< I(X;Y)$ and pair of RVs $(X,Y)$ distributed according to $P_{XY}$ supported on alphabets $\mathcal{X}$ and $\mathcal{Y}$ where $|\mathcal{X}|$ is finite and $|\mathcal{Y}|$ is finite or countably infinite with $I(X,Y)< \infty$, there exists a RV $U$ defined on $\mathcal{U}$ such that the leakage between $X$ and $U$ is equal to $\epsilon$, i.e., we have
	\begin{align*}
	I(U;X)= \epsilon,
	\end{align*}
	$Y$ is a deterministic function of $(U,X)$, i.e., we have 
	\begin{align*}
	H(Y|U,X)=0,
	\end{align*}
	$I(X;U|Y)$ can be  upper bounded as follows 
	\begin{align*}
	I(X;U|Y)\leq \alpha H(X|Y)+(1-\alpha)\left[ \log(I(X;Y)+1)+4\right],
	\end{align*}
	and 
	$
	|\mathcal{U}|\leq \left[|\mathcal{X}|(|\mathcal{Y}|-1)+2\right]\left[|\mathcal{X}|+1\right],
	$
	where $\alpha =\frac{\epsilon}{H(X)}$.
\end{lemma}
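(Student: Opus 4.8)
The plan is to build $U$ from the variable produced by the Strong Functional Representation Lemma (Lemma~\ref{lemma2}) by appending a \emph{randomized response} that discloses $X$ with a carefully tuned probability, exactly as the preceding remark suggests. First I would apply Lemma~\ref{lemma2} to $(X,Y)$ to obtain a RV $U'$ on $\mathcal{U}'$ with $I(U';X)=0$, $H(Y|U',X)=0$, $I(X;U'|Y)\leq \log(I(X;Y)+1)+4$, and $|\mathcal{U}'|\leq |\mathcal{X}|(|\mathcal{Y}|-1)+2$. Then I would introduce a Bernoulli variable $T$ with $\Pr\{T=1\}=\alpha=\frac{\epsilon}{H(X)}$, chosen independent of $(X,Y,U')$, and set $U=(U',W)$ where $W=(X,1)$ when $T=1$ and $W=0$ when $T=0$ (an erasure channel on $X$). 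Note $\alpha\in[0,1)$ because $\epsilon< I(X;Y)\leq H(X)$, so this is well defined, and for $\epsilon=0$ the construction collapses back to $U'$.

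Next I would verify the defining properties. Since $U$ always contains the component $U'$ and $Y$ is already a deterministic function of $(U',X)$, it is a deterministic function of $(U,X)$, giving $H(Y|U,X)=0$. For the leakage, using $I(U;X)=I(U';X)+I(W;X|U')$ with $I(U';X)=0$, the task reduces to showing $I(W;X|U')=\alpha H(X)$; this holds because $H(X|U')=H(X)$ by independence, while $H(X|W,U')=(1-\alpha)H(X)$ since $W$ reveals $X$ on the event $T=1$ and is uninformative on $T=0$, where $T\perp(X,U')$. Hence $I(U;X)=\alpha H(X)=\epsilon$ \emph{exactly}, which is precisely why the Bernoulli parameter must be tuned to $\epsilon/H(X)$.

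The heart of the argument is the bound on $I(X;U|Y)$. The key observation is that $T$ is a deterministic function of $W$, hence of $U$, so I may adjoin it for free: $I(X;U|Y)=I(X;U,T|Y)$. Decomposing as $I(X;T|Y)+I(X;U|T,Y)$ and using $T\perp(X,Y)$ to make the first term vanish leaves the convex combination $\alpha\,I(X;U|T=1,Y)+(1-\alpha)\,I(X;U|T=0,Y)$. On $T=1$ the disclosed $X$ forces $H(X|U,T=1,Y)=0$, so that term equals $H(X|Y)$; on $T=0$ the extra component is uninformative and $T\perp(X,Y,U')$, so that term equals $I(X;U'|Y)\leq\log(I(X;Y)+1)+4$ by Lemma~\ref{lemma2}. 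Combining gives exactly $\alpha H(X|Y)+(1-\alpha)[\log(I(X;Y)+1)+4]$. The cardinality bound then follows since $W$ ranges over $\{0\}\cup(\mathcal{X}\times\{1\})$, i.e.\ $|\mathcal{X}|+1$ values, multiplying the at most $|\mathcal{X}|(|\mathcal{Y}|-1)+2$ values of $U'$.

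I expect the main obstacle to be the careful bookkeeping of the independence of $T$ across the two conditionings: both the clean split into a convex combination and the identification of the $T=0$ branch with the unconditional $I(X;U'|Y)$ rely on $T$ being independent of $(X,Y,U')$ while simultaneously being recoverable from $U$. Obtaining \emph{equality} $I(U;X)=\epsilon$ rather than an inequality likewise hinges on the precise choice $\alpha=\epsilon/H(X)$; the remaining manipulations are routine entropy algebra.
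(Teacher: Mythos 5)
Your proposal is correct and follows essentially the same route as the paper: take the SFRL variable $U'$, adjoin a randomized-response component $W$ that reveals $X$ with probability $\alpha=\epsilon/H(X)$ and an erasure symbol otherwise, and verify the four properties via the chain rule (the paper's $W$ equal to $X$ w.p.\ $\alpha$ and a constant $c$ w.p.\ $1-\alpha$ is exactly your erasure channel, and its key step $I(X;W\mid \tilde U,Y)=\alpha H(X\mid \tilde U,Y)$ yields the same identity $\alpha H(X|Y)+(1-\alpha)I(X;U'|Y)$ that you obtain by conditioning on the Bernoulli flag $T$). The only difference is cosmetic bookkeeping — you make $T$ explicit and split into the events $T=1$ and $T=0$, while the paper applies the chain rule directly to $(\tilde U,W)$ — so there is nothing to correct.
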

\begin{proof}
	The proof is provided in Appendix~B.
\end{proof}
In Lemma~\ref{haa}, which is proved in Appendix B, we show that there exists a RV $U$ that satisfies \eqref{c1}, \eqref{c2} and has bounded entropy. The lemma is a generalization of
\cite[Lemma~2]{kostala} for dependent $X$ and $U$. \\
Before stating the next theorem we derive an expression for $I(Y;U)$. We have
\begin{align}
I(Y;U)&=I(X,Y;U)-I(X;U|Y),\nonumber\\&=I(X;U)+I(Y;U|X)-I(X;U|Y),\nonumber\\&=I(X;U)+H(Y|X)-H(Y|U,X)-I(X;U|Y).\label{key}
\end{align}
As argued in \cite{kostala}, \eqref{key} is an important observation to find lower and upper bounds for $h_{\epsilon}(P_{XY})$ and $g_{\epsilon}(P_{XY})$.\\
Next theorem characterizes the constraints that the utility in the second scenario is larger than $\epsilon$, i.e., $h_{\epsilon}(P_{XY})>\epsilon$. Furthermore, we provide a necessary condition under which this result holds. For the sufficiency part, we use the construction of RV $U$ which is stated in Lemma~\ref{lemma3} and Lemma~\ref{lemma4}.  
\begin{theorem}\label{an}
	For any $0\leq \epsilon< I(X;Y)$ and pair of RVs $(X,Y)$ distributed according to $P_{XY}$ supported on alphabets $\mathcal{X}$ and $\mathcal{Y}$, if $h_{\epsilon}(P_{XY})>\epsilon$ then we have
	\begin{align*}
	H(Y|X)>0.
	\end{align*}
	Furthermore, if $H(Y|X)-\alpha H(X|Y)-(1-\alpha)\min\{H(X|Y),\left[ \log(I(X;Y)+1)+4\right]\}>0$, then
	\begin{align*}
	h_{\epsilon}(P_{XY})>\epsilon,
	\end{align*}
	where $\alpha=\frac{\epsilon}{H(X)}$.
\end{theorem}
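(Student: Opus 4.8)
The plan is to prove both halves from the single decomposition \eqref{key}, namely $I(Y;U)=I(X;U)+H(Y|X)-H(Y|U,X)-I(X;U|Y)$, reading it in two directions: as an upper bound (for the necessary condition) and as a lower bound evaluated at a cleverly chosen $U$ (for the sufficient condition).

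For the necessary condition I would argue by contraposition. Suppose $H(Y|X)=0$. Then for every feasible kernel $P_{U|Y,X}$ (i.e.\ every $U$ with $I(X;U)\le\epsilon$), the identity \eqref{key} reduces to $I(Y;U)=I(X;U)-H(Y|U,X)-I(X;U|Y)$, and since $H(Y|U,X)\ge 0$ and $I(X;U|Y)\ge 0$ this gives $I(Y;U)\le I(X;U)\le\epsilon$ for all admissible $U$. Taking the supremum yields $h_{\epsilon}(P_{XY})\le\epsilon$, which contradicts $h_{\epsilon}(P_{XY})>\epsilon$; hence the hypothesis forces $H(Y|X)>0$.

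For the sufficient condition I would exhibit a single feasible $U$ achieving $I(Y;U)>\epsilon$, drawing on the two constructions of Lemma~\ref{lemma3} (EFRL) and Lemma~\ref{lemma4} (ESFRL). Both produce a $U$ with $I(X;U)=\epsilon$ (so it is feasible) and $H(Y|U,X)=0$, whence \eqref{key} collapses to $I(Y;U)=\epsilon+H(Y|X)-I(X;U|Y)$; maximizing the utility thus reduces to minimizing the conditional leakage $I(X;U|Y)$. The EFRL construction admits the elementary bound $I(X;U|Y)\le H(X|Y)$, while the ESFRL construction gives $I(X;U|Y)\le \alpha H(X|Y)+(1-\alpha)[\log(I(X;Y)+1)+4]$ with $\alpha=\epsilon/H(X)$. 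Selecting whichever construction yields the smaller conditional leakage, and invoking the elementary convexity identity $\min\{A,\alpha A+(1-\alpha)B\}=\alpha A+(1-\alpha)\min\{A,B\}$ with $A=H(X|Y)$ and $B=\log(I(X;Y)+1)+4$, I would conclude $I(X;U|Y)\le \alpha H(X|Y)+(1-\alpha)\min\{H(X|Y),\log(I(X;Y)+1)+4\}$. Substituting back into the collapsed identity gives $I(Y;U)\ge \epsilon+H(Y|X)-\alpha H(X|Y)-(1-\alpha)\min\{H(X|Y),\log(I(X;Y)+1)+4\}$, which under the stated hypothesis is strictly larger than $\epsilon$, so that $h_{\epsilon}(P_{XY})\ge I(Y;U)>\epsilon$.

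Since the hard analytic content is already packaged inside the extended lemmas, the remaining obstacle is mostly careful bookkeeping. I would verify that each construction is genuinely feasible (the leakage is exactly $\epsilon$, hence satisfies $I(X;U)\le\epsilon$), that $H(Y|U,X)=0$ holds so the telescoping of \eqref{key} is exact rather than an inequality, and above all that the $\min$ in the statement really is the pointwise-best of the EFRL and ESFRL conditional-leakage bounds and not a looser surrogate; this is exactly what the identity $\min\{A,\alpha A+(1-\alpha)B\}=\alpha A+(1-\alpha)\min\{A,B\}$ certifies. Finally I would sanity-check the degenerate endpoints $\alpha=0$ (perfect privacy, where only the EFRL bound is active) and $\alpha\to 1$ to confirm the convex-combination expression and the $\min$ identity behave as claimed.
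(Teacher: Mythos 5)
Your proposal is correct and follows essentially the same route as the paper: the necessary condition comes from reading \eqref{key} as the upper bound $I(Y;U)\le I(X;U)+H(Y|X)\le\epsilon+H(Y|X)$ (you phrase it contrapositively, the paper directly), and the sufficient condition comes from evaluating \eqref{key} at the EFRL and ESFRL constructions of Lemma~\ref{lemma3} and Lemma~\ref{lemma4}. The only difference is presentational: the paper splits into the two cases $H(X|Y)\le\log(I(X;Y)+1)+4$ and its reverse, whereas you fold both cases into one line via the identity $\min\{A,\alpha A+(1-\alpha)B\}=\alpha A+(1-\alpha)\min\{A,B\}$, which is a slightly cleaner way of saying the same thing.
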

\begin{proof}
	For proving the first part let $h_{\epsilon}(P_{XY})>\epsilon$. Using \eqref{key} we have
	\begin{align*}
	\epsilon&< h_{\epsilon}(P_{XY})\leq H(Y|X)+\sup_{U:I(X;U)\leq\epsilon} I(X;U)\\&=H(Y|X)+\epsilon
	\Rightarrow 0<H(Y|X).
	\end{align*}
	For the second part first assume that 
	$H(X|Y)\leq \log(I(X;Y)+1)+4$, which results in the assumption
	$H(Y|X)-H(X|Y)>0$. In this case, let $U$ be produced by EFRL. Thus, using the construction of $U$ as in Lemma~\ref{lemma3} we have $I(X,U)=\epsilon$ and $H(Y|X,U)=0$. Then by using \eqref{key} we obtain
	\begin{align*}
	h_{\epsilon}(P_{XY})&\geq \epsilon\!+\!H(Y|X)-H(X|Y)+H(X|Y,U)\\&\geq \epsilon+H(Y|X)-H(X|Y)\\&>\epsilon.
	\end{align*}
	Now assume that $\log(I(X;Y)+1)+4 \leq H(X|Y)$, which results in the assumption \\ $H(Y|X)-\alpha H(X|Y)-(1-\alpha)\left[ \log(I(X;Y)+1)+4\right]>0$. In this case, let $U$ be produced by the ESFRL. Thus, using the construction of $U$ as in Lemma~\ref{lemma4} we have $I(X,U)=\epsilon$, $H(Y|X,U)=0$ and $I(U;X|Y)\leq \alpha H(X|Y)+(1-\alpha)\left[ \log(I(X;Y)+1)+4\right]$. Then by using \eqref{key} we obtain
	\begin{align*}
	h_{\epsilon}(P_{XY})&\geq \epsilon\!+\!H(Y|X)\!-\!I(X;U|Y)\\&\geq \epsilon\!+\!H(Y|X)-\alpha H(X|Y)-(1-\alpha)\left[ \log(I(X;Y)+1)+4\right]\\&>\epsilon.
	\end{align*}
\end{proof}
In the next theorem we present lower bounds on $h_{\epsilon}(P_{XY})$ and find the conditions under which the bounds are tight. The following theorem is a generalization of \cite[Th.~6]{kostala} for correlated $X$ and $U$, i.e., $I(X;U)\leq \epsilon$.
\begin{theorem}\label{th.1}
	For any $0\leq \epsilon< I(X;Y)$ and pair of RVs $(X,Y)$ distributed according to $P_{XY}$ supported on alphabets $\mathcal{X}$ and $\mathcal{Y}$ we have
	\begin{align}\label{th2}
	h_{\epsilon}(P_{XY})\geq \max\{L_1^{\epsilon},L_2^{\epsilon},L_3^{\epsilon}\},
	\end{align}
	where
	\begin{align*}
	L_1^{\epsilon} &= H(Y|X)-H(X|Y)+\epsilon=H(Y)-H(X)+\epsilon,\\
	L_2^{\epsilon} &= H(Y|X)-\alpha H(X|Y)+\epsilon-(1-\alpha)\left( \log(I(X;Y)+1)+4 \right),\\
	L_3^{\epsilon} &= \epsilon\frac{H(Y)}{I(X;Y)}+g_0(P_{XY})\left(1-\frac{\epsilon}{I(X;Y)}\right),
	\end{align*}
	and $\alpha=\frac{\epsilon}{H(X)}$.
	The lower bound in \eqref{th2} is tight if $H(X|Y)=0$, i.e., $X$ is a deterministic function of $Y$. Furthermore, if the lower bound $L_1$ is tight then we have $H(X|Y)=0$. 
\end{theorem}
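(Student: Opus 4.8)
The plan is to obtain each of the three lower bounds by exhibiting an explicit feasible mechanism and evaluating the decomposition \eqref{key}, namely $I(Y;U)=I(X;U)+H(Y|X)-H(Y|U,X)-I(X;U|Y)$; the two tightness statements then follow by pairing $L_1^\epsilon$ against a matching upper bound read off from the same identity. Throughout I would exploit that any $U$ produced by the extended lemmas is automatically feasible for $h_\epsilon$ (no Markov constraint is imposed in \eqref{main1}).

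For $L_1^\epsilon$ I would take $U$ from the Extended Functional Representation Lemma (Lemma~\ref{lemma3}), so that $I(X;U)=\epsilon$ and $H(Y|U,X)=0$. Substituting into \eqref{key} gives $I(Y;U)=\epsilon+H(Y|X)-I(X;U|Y)$, and the crude bound $I(X;U|Y)=H(X|Y)-H(X|Y,U)\le H(X|Y)$ yields $I(Y;U)\ge \epsilon+H(Y|X)-H(X|Y)=L_1^\epsilon$. For $L_2^\epsilon$ I would instead take $U$ from the Extended Strong Functional Representation Lemma (Lemma~\ref{lemma4}), which again gives $I(X;U)=\epsilon$ and $H(Y|U,X)=0$ but now controls the conditional leakage by $I(X;U|Y)\le \alpha H(X|Y)+(1-\alpha)[\log(I(X;Y)+1)+4]$ with $\alpha=\frac{\epsilon}{H(X)}$; inserting this into \eqref{key} reproduces $L_2^\epsilon$ verbatim.

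For $L_3^\epsilon$ the idea is time sharing between two extreme operating points. The pair $(I(X;Y),H(Y))$ of leakage and utility is achieved by $U=Y$, while $(0,g_0(P_{XY}))$ is achieved (in the limit of the supremum) by the optimal perfect-privacy mechanism, which satisfies $X-Y-U$ and $I(X;U)=0$ and is therefore feasible for $h$. Introducing a time-sharing variable $Q$ independent of $(X,Y)$ that selects the first mechanism with probability $\lambda=\frac{\epsilon}{I(X;Y)}$ and the second otherwise, and setting $U=(\tilde U,Q)$, one checks $I(X;U)=\lambda I(X;Y)=\epsilon$ and $I(Y;U)=\lambda H(Y)+(1-\lambda)g_0(P_{XY})=L_3^\epsilon$. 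For the tightness direction I would note that \eqref{key} also gives the universal upper bound $h_\epsilon(P_{XY})\le \epsilon+H(Y|X)$, obtained by using $I(X;U)\le\epsilon$ and dropping the nonnegative terms $H(Y|U,X)$ and $I(X;U|Y)$. When $H(X|Y)=0$ this upper bound equals $L_1^\epsilon$, so the chain $L_1^\epsilon\le \max\{L_1^\epsilon,L_2^\epsilon,L_3^\epsilon\}\le h_\epsilon(P_{XY})\le \epsilon+H(Y|X)=L_1^\epsilon$ collapses to equality and the bound is tight.

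For the converse I would sharpen the $L_1^\epsilon$ computation: the EFRL mechanism actually satisfies $I(Y;U)=L_1^\epsilon+H(X|Y,U)$, whence $h_\epsilon(P_{XY})\ge L_1^\epsilon+H(X|Y,U)$. If $L_1$ is tight this forces $H(X|Y,U)=0$, i.e.\ $X$ is a deterministic function of $(Y,U)$ while $Y$ is a deterministic function of $(X,U)$. The main obstacle is converting this into $H(X|Y)=0$: since the randomization injected by the extension is independent of $(X,Y)$ and the base functional-representation variable only resolves the uncertainty of $Y$ given $X$, one must argue that whenever $H(X|Y)>0$ there remains a feasible construction (e.g.\ an inverse-CDF–type representation) with $H(X|Y,U)>0$, which strictly beats $L_1^\epsilon$. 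I expect this strict-improvement step — showing that residual uncertainty of $X$ given $Y$ cannot be fully absorbed into the EFRL variable without violating $H(Y|X,U)=0$ — to be the delicate part, whereas the three lower bounds and the $H(X|Y)=0$ tightness direction are routine consequences of the extended lemmas together with \eqref{key}.
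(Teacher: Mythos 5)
Your derivations of $L_1^{\epsilon}$ and $L_2^{\epsilon}$ via the EFRL and ESFRL substituted into \eqref{key}, and your proof of tightness when $H(X|Y)=0$ via the upper bound $h_{\epsilon}(P_{XY})\le H(Y|X)+\epsilon$, coincide with the paper's argument. For $L_3^{\epsilon}$ the paper simply writes $h_{\epsilon}(P_{XY})\ge g_{\epsilon}(P_{XY})\ge L_3^{\epsilon}$ and cites an external remark, whereas you give an explicit time-sharing construction between $U=Y$ and the perfect-privacy optimizer; this is a legitimate self-contained route (the Markov chain $X-Y-U$ is preserved by both components, so the mixture is feasible for $g_{\epsilon}$ and a fortiori for $h_{\epsilon}$), modulo the standard care with the supremum in $g_0(P_{XY})$ that you already flag.

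The genuine gap is in the converse (``if $L_1$ is tight then $H(X|Y)=0$''). You correctly reduce it to showing that $H(X|Y)>0$ forces some feasible $U$ with $I(X;U)=\epsilon$, $H(Y|X,U)=0$ and $H(X|Y,U)>0$, but you leave exactly this claim unproven and label it ``the delicate part.'' The paper closes it concretely: take $\tilde U$ from the explicit FRL construction of \cite[Lemma~1]{kostala}; the argument in \cite[Th.~6]{kostala} shows that when $X$ is not a deterministic function of $Y$ there exist $x\in\mathcal{X}$ and $y_1,y_2\in\mathcal{Y}$ with $P_{X|\tilde{U},Y}(x|\tilde{u},y_1)>0$ and $P_{X|\tilde{U},Y}(x|\tilde{u},y_2)>0$, hence $H(X|Y,\tilde{U})>0$. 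Appending the randomized-response component $W$ of Lemma~\ref{lemma3} gives $U=(\tilde{U},W)$ with $H(X|Y,U)=(1-\alpha)H(X|Y,\tilde{U})>0$ (using $\alpha=\epsilon/H(X)<1$, which holds because $\epsilon<I(X;Y)\le H(X)$), so $I(U;Y)=\epsilon+H(Y|X)-H(X|Y)+H(X|Y,U)>L_1^{\epsilon}$, contradicting tightness. Without this construction-specific argument your converse is a statement of intent rather than a proof.
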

\begin{proof}
	The proof is provided in Appendix B. Similar to Theorem~\ref{an}, the lower bounds $L_1^{\epsilon}$ and $L_2^{\epsilon}$ are derived by using Lemma~\ref{lemma3} and Lemma~\ref{lemma4}.
\end{proof}
\begin{corollary}\label{sos}
	Using \eqref{key} utility achieved by FRL is $H(Y|X)-H(X|Y)$ which is less than or equal to utility achieved by SFRL, i.e., $H(Y|X)-H(X|Y)+\epsilon=L_1^{\epsilon}$. Furthermore, utility achieved by SFRL is $H(Y|X)-\left( \log(I(X;Y)+1)+4 \right)$ which is less than or equal to utility attained by ESFRL, i.e., $H(Y|X)+\epsilon-\alpha H(X|Y)-(1-\alpha)\left( \log(I(X;Y)+1)+4 \right)=L_2^{\epsilon}$, since we have
	\begin{align*}
	L_2^{\epsilon}-\left(H(Y|X)-\left( \log(I(X;Y)+1)+4 \right)\right)&=\epsilon+\frac{\epsilon}{H(X)}\left( \log(I(X;Y)+1)+4\right)-\frac{\epsilon}{H(X)}H(X|Y)\\ &\geq 0.
	\end{align*}
	The latter holds since $H(X|Y)\leq H(X)$. Equality holds if and only if $\epsilon=0$. Hence, for non-zero leakage EFRL and ESFRL strictly improve the bounds attained by FRL and SFRL.
\end{corollary}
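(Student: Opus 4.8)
The plan is to read off, from the single identity \eqref{key}, the utility that each of the four constructions (FRL, SFRL, EFRL, ESFRL) guarantees, and then to compare these expressions pairwise. First I would substitute into \eqref{key} the structural property common to all four lemmas, namely $H(Y|U,X)=0$, which collapses the identity to $I(Y;U)=I(X;U)+H(Y|X)-I(X;U|Y)$. The two ingredients that then distinguish the constructions are the value of $I(X;U)$, which is $0$ for FRL/SFRL and $\epsilon$ for EFRL/ESFRL, and the available upper bound on the penalty term $I(X;U|Y)$. Thus the entire corollary reduces to substituting the right control on $I(X;U|Y)$ in each case and performing an algebraic comparison.

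For FRL and EFRL the only available control on $I(X;U|Y)$ is the universal inequality $I(X;U|Y)\leq H(X|Y)$, which yields the guaranteed utilities $H(Y|X)-H(X|Y)$ and $\epsilon+H(Y|X)-H(X|Y)=L_1^{\epsilon}$, respectively; their difference is exactly $\epsilon\geq 0$, settling the first comparison with equality precisely when $\epsilon=0$. For SFRL and ESFRL I would instead insert the sharper bounds $I(X;U|Y)\leq \log(I(X;Y)+1)+4$ from Lemma~\ref{lemma2} and $I(X;U|Y)\leq \alpha H(X|Y)+(1-\alpha)(\log(I(X;Y)+1)+4)$ from Lemma~\ref{lemma4}, giving the guaranteed utilities $H(Y|X)-(\log(I(X;Y)+1)+4)$ and $L_2^{\epsilon}$.

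The core of the argument is the second comparison, which is a direct algebraic simplification. Subtracting the SFRL utility from $L_2^{\epsilon}$ and substituting $\alpha=\frac{\epsilon}{H(X)}$ collapses the difference to $\epsilon+\frac{\epsilon}{H(X)}(\log(I(X;Y)+1)+4)-\frac{\epsilon}{H(X)}H(X|Y)$, exactly the displayed expression. To establish nonnegativity I would group the first and last terms as $\epsilon\big(1-\frac{H(X|Y)}{H(X)}\big)=\epsilon\frac{I(X;Y)}{H(X)}\geq 0$, which is nonnegative precisely because $H(X|Y)\leq H(X)$, while the middle term $\frac{\epsilon}{H(X)}(\log(I(X;Y)+1)+4)$ is a product of nonnegative factors.

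Finally, for the equality claim I would note that the middle term is strictly positive whenever $\epsilon>0$, since $\log(I(X;Y)+1)+4\geq 4>0$ and $H(X)>0$; hence the total difference vanishes only at $\epsilon=0$ and is strictly positive otherwise, which is exactly the assertion that EFRL and ESFRL strictly improve on FRL and SFRL for non-zero leakage. There is no real obstacle here beyond bookkeeping: the only points needing care are substituting the correct $I(X;U|Y)$ bound for each lemma and keeping track of the definition of $\alpha$ so that the $(1-\alpha)$ cross terms cancel correctly against the $(\log(I(X;Y)+1)+4)$ contributed by the SFRL utility.
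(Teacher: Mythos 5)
Your proposal is correct and follows essentially the same route as the paper: substituting $H(Y|U,X)=0$ and the respective values of $I(X;U)$ and bounds on $I(X;U|Y)$ from Lemmas~\ref{lemma1}--\ref{lemma4} into \eqref{key}, then performing the same algebraic comparison, where your grouping $\epsilon\bigl(1-\tfrac{H(X|Y)}{H(X)}\bigr)=\epsilon\tfrac{I(X;Y)}{H(X)}\geq 0$ is just a cleaner restatement of the paper's appeal to $H(X|Y)\leq H(X)$. Your handling of the strict-improvement claim via the strictly positive term $\tfrac{\epsilon}{H(X)}\left(\log(I(X;Y)+1)+4\right)$ matches the paper's intent exactly (note the paper's first comparison says ``SFRL'' where it means EFRL, which you correctly read as the EFRL bound $L_1^{\epsilon}$).
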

In next corollary we let $\epsilon=0$ and derive lower bounds on $h_0(P_{XY})$.
\begin{corollary}\label{kooni}
	Let $\epsilon=0$. Then, for any pair of RVs $(X,Y)$ distributed according to $P_{XY}$ supported on alphabets $\mathcal{X}$ and $\mathcal{Y}$ we have
	\begin{align*}
	h_{0}(P_{XY})\geq \max\{L^0_1,L^0_2\},
	\end{align*}
	where 
	\begin{align*}
	L^0_1 &= H(Y|X)-H(X|Y)=H(Y)-H(X),\\
	L^0_2 &= H(Y|X) -\left( \log(I(X;Y)+1)+4 \right).\\
	\end{align*}
\end{corollary}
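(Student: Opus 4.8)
The plan is to obtain this corollary as the direct $\epsilon = 0$ specialization of Theorem~\ref{th.1}, so no new machinery is required; the whole task reduces to substitution and bookkeeping. Since $\alpha = \frac{\epsilon}{H(X)}$, setting $\epsilon = 0$ forces $\alpha = 0$. First I would substitute into $L_1^{\epsilon}$ to get $L_1^0 = H(Y|X) - H(X|Y)$, and rewrite this via the chain-rule identity $H(Y|X) - H(X|Y) = H(Y) - H(X)$ (which follows from $H(Y|X) = H(X,Y) - H(X)$ and $H(X|Y) = H(X,Y) - H(Y)$).

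Next I would substitute into $L_2^{\epsilon}$. With $\alpha = 0$, the term $\alpha H(X|Y)$ vanishes and the factor $(1-\alpha)$ becomes $1$, leaving $L_2^0 = H(Y|X) - \left(\log(I(X;Y)+1) + 4\right)$, as claimed. Finally, $L_3^{\epsilon}$ collapses to $L_3^0 = g_0(P_{XY})$, since the coefficient $\frac{\epsilon}{I(X;Y)}$ is zero. Theorem~\ref{th.1} then gives $h_0(P_{XY}) \geq \max\{L_1^0, L_2^0, g_0(P_{XY})\}$; because $g_0(P_{XY}) \leq h_0(P_{XY})$ holds trivially (the second scenario has access to both $X$ and $Y$), the term $g_0(P_{XY})$ is redundant as a lower bound and can be dropped, yielding the stated $h_0(P_{XY}) \geq \max\{L_1^0, L_2^0\}$.

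If instead I wanted a self-contained argument that does not invoke the full theorem, I would reproduce its $\epsilon = 0$ branch directly. For $L_1^0$, apply the Functional Representation Lemma (Lemma~\ref{lemma1}) to obtain $U$ with $I(X;U) = 0$ and $H(Y|U,X) = 0$; plugging these into the identity \eqref{key} gives $I(Y;U) = H(Y|X) - I(X;U|Y)$, and bounding $I(X;U|Y) = H(X|Y) - H(X|Y,U) \leq H(X|Y)$ yields $I(Y;U) \geq H(Y|X) - H(X|Y) = L_1^0$. For $L_2^0$, apply the Strong Functional Representation Lemma (Lemma~\ref{lemma2}), which additionally guarantees $I(X;U|Y) \leq \log(I(X;Y)+1) + 4$; substituting into \eqref{key} gives $I(Y;U) \geq H(Y|X) - \left(\log(I(X;Y)+1)+4\right) = L_2^0$. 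Since $h_0(P_{XY}) \geq I(Y;U)$ for each such admissible $U$, taking the maximum over the two constructions completes the argument.

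There is essentially no obstacle here: the only points requiring care are the bookkeeping that $\alpha \to 0$ removes precisely the $H(X|Y)$ contribution and the $(1-\alpha)$ factor in $L_2^{\epsilon}$, and the (implicit) admissibility that $\epsilon = 0$ lie in the range $0 \leq \epsilon < I(X;Y)$, which merely requires $I(X;Y) > 0$. When $I(X;Y) = 0$ the private and useful data are independent, $h_0(P_{XY}) = H(Y)$ is achieved by $U = Y$, and both $L_1^0 \leq H(Y)$ and $L_2^0 \leq H(Y)$, so the statement holds vacuously.
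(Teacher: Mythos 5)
Your proposal is correct and follows essentially the same route as the paper: Corollary~\ref{kooni} is obtained there simply by setting $\epsilon=0$ (hence $\alpha=0$) in Theorem~\ref{th.1}, which is exactly your first argument, and your self-contained fallback via the FRL/SFRL constructions and \eqref{key} just reproduces the $\epsilon=0$ branch of that theorem's proof. Your handling of the edge case $I(X;Y)=0$ and the observation that $L_3^0=g_0(P_{XY})$ is redundant are both sound.
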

Note that the lower bound $L^0_1$ has been derived in \cite[Th.~6]{kostala}, while the lower bound $L^0_2$ is new. Hence, the lower bound derived in Corollary~\ref{kooni} generalizes the bound found in \cite[Th.~6]{kostala}. \\
In the next two examples we compare the bounds $L_1^{\epsilon}$, $L_2^{\epsilon}$ and $L_3^{\epsilon}$ in special cases where $I(X;Y)=0$ and $H(X|Y)=0$.  
\begin{example}
	Let $X$ and $Y$ be independent. Then, we have
	\begin{align*}
	L_1^{\epsilon}&=H(Y)-H(X)+\epsilon,\\
	L_2^{\epsilon}&=H(Y)-\frac{\epsilon}{H(X)}H(X)+\epsilon-4(1-\frac{\epsilon}{H(X)}),\\
	&=H(Y)-4(1-\frac{\epsilon}{H(X)}).
	\end{align*}
	Thus,
	\begin{align*}
	L_2^{\epsilon}-L_1^{\epsilon}&=H(X)-4+\epsilon(\frac{4}{H(X)}-1),\\
	&=(H(X)-4)(1-\frac{\epsilon}{H(X)}).
	\end{align*}
	Consequently, for independent $X$ and $Y$ if $H(X)>4$, then $L_2^{\epsilon}>L_1^{\epsilon}$, i.e., the second lower bound is dominant and $h_{\epsilon}(P_{X}P_{Y})\geq L_2^{\epsilon}$. 
\end{example}
\begin{example}
	Let $X$ be a deterministic function of $Y$. As we have shown in Theorem~\ref{th.1}, if $H(X|Y)=0$, then
	\begin{align*}
	L_1^{\epsilon}&=L_3^{\epsilon}=H(Y|X)+\epsilon\\&\geq H(Y|X)+\epsilon-(1-\frac{\epsilon}{H(X)})(\log(H(X)+1)+4)\\&=L_2^{\epsilon}.
	\end{align*}
	Therefore, $L_1^{\epsilon}$ and $L_3^{\epsilon}$ become dominants.
\end{example}
	In Lemma~\ref{koonimooni} which is provided in Appendix B, we find a lower bound for $\sup_{U} H(U)$ where $U$ satisfies the leakage constraint $I(X;U)\leq\epsilon$, the bounded cardinality stated in Lemma~\ref{lemma3} and $ H(Y|U,X)=0$.\\
	In the next result, using \eqref{key} we derive an upper bound on $h_{\epsilon}(P_{XY})$.
\begin{lemma}\label{goh}
	For any $0\leq\epsilon< I(X;Y)$ and pair of RVs $(X,Y)$ distributed according to $P_{XY}$ supported on alphabets $\mathcal{X}$ and $\mathcal{Y}$ we have
	\begin{align*}
	g_{\epsilon}(P_{XY})\leq h_{\epsilon}(P_{XY})\leq H(Y|X)+\epsilon.
	\end{align*}
\end{lemma}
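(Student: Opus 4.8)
The plan is to derive the upper bound directly from the identity \eqref{key} by discarding the two nonnegative terms it contains and invoking the leakage constraint, then passing to the supremum. The left inequality $g_{\epsilon}(P_{XY})\leq h_{\epsilon}(P_{XY})$ requires no work: the feasible set of $g_{\epsilon}$ is exactly the feasible set of $h_{\epsilon}$ intersected with the additional Markov constraint $X-Y-U$, so $g_{\epsilon}$ maximizes the same objective $I(Y;U)$ over a smaller set. This is precisely the relation already recorded right after the definitions \eqref{main2}--\eqref{main1}.

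For the right inequality, fix any kernel $P_{U|Y,X}$ feasible for $h_{\epsilon}(P_{XY})$, i.e. any $U$ with $I(U;X)\leq\epsilon$. Recall \eqref{key},
\begin{align*}
I(Y;U)=I(X;U)+H(Y|X)-H(Y|U,X)-I(X;U|Y).
\end{align*}
Here $H(Y|U,X)\geq 0$ since the conditional entropy of a discrete RV is nonnegative, and $I(X;U|Y)\geq 0$ since mutual information is nonnegative; moreover $I(X;U)\leq\epsilon$ by feasibility. Dropping the two nonnegative terms and bounding the leakage gives
\begin{align*}
I(Y;U)\leq H(Y|X)+I(X;U)\leq H(Y|X)+\epsilon.
\end{align*}
Since this holds for every feasible $U$, taking the supremum yields $h_{\epsilon}(P_{XY})\leq H(Y|X)+\epsilon$.

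There is essentially no obstacle here; the whole content is that the bound follows from \eqref{key} together with the nonnegativity of entropy and mutual information, so the argument never uses the structure of an optimal $U$. The only point worth flagging is where the inequality becomes tight: equality forces simultaneously $H(Y|U,X)=0$ (i.e. $Y$ a deterministic function of $(U,X)$, the defining property exploited in EFRL/ESFRL), $I(X;U|Y)=0$ (i.e. $X-Y-U$), and $I(X;U)=\epsilon$. These are exactly the ingredients behind the lower bounds $L_1^{\epsilon}$--$L_3^{\epsilon}$ of Theorem~\ref{th.1}, which is why the upper bound $H(Y|X)+\epsilon$ is attained when $H(X|Y)=0$.
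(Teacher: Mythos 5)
Your proof is correct and follows exactly the paper's route: apply the identity \eqref{key}, drop the nonnegative terms $H(Y|U,X)$ and $I(X;U|Y)$, bound $I(X;U)$ by $\epsilon$, and take the supremum, with the left inequality being the set-inclusion remark already made after \eqref{main2}--\eqref{main1}. No gaps.
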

\begin{proof}
	By using \eqref{key} we have
	\begin{align*}
	h_{\epsilon}(P_{XY})\leq H(Y|X)+\sup I(U;X)\leq H(Y|X)+\epsilon. 
	\end{align*}
\end{proof}
\begin{corollary}\label{seff}
	If $X$ is a deterministic function of $Y$, then by using Theorem~\ref{th.1} and Lemma~\ref{goh} we have
	\begin{align*}
	g_{\epsilon}(P_{XY})=h_{\epsilon}(P_{XY})=H(Y|X)+\epsilon,
	\end{align*}
	since in this case the Markov chain $X-Y-U$ holds.
\end{corollary}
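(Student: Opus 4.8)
The plan is to sandwich $h_{\epsilon}(P_{XY})$ between a matching upper and lower bound, and then to argue that the optimizing mechanism for $h_{\epsilon}$ is already feasible for $g_{\epsilon}$, so that the two quantities coincide. Throughout I would use the hypothesis that $X$ is a deterministic function of $Y$, which is exactly the statement $H(X|Y)=0$.

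First I would invoke Lemma~\ref{goh}, which gives the upper bound $g_{\epsilon}(P_{XY})\leq h_{\epsilon}(P_{XY})\leq H(Y|X)+\epsilon$ for an arbitrary pair $(X,Y)$, requiring no special structure. For the matching lower bound I would specialize Theorem~\ref{th.1}: the bound $L_1^{\epsilon}=H(Y|X)-H(X|Y)+\epsilon$ reduces, under $H(X|Y)=0$, to $L_1^{\epsilon}=H(Y|X)+\epsilon$, whence $h_{\epsilon}(P_{XY})\geq L_1^{\epsilon}=H(Y|X)+\epsilon$. Combining the two inequalities pins down $h_{\epsilon}(P_{XY})=H(Y|X)+\epsilon$, consistent with the tightness assertion in Theorem~\ref{th.1} for $H(X|Y)=0$.

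It remains to upgrade this from $h_{\epsilon}$ to $g_{\epsilon}$. The key observation is that whenever $H(X|Y)=0$, every admissible mechanism for the second scenario automatically satisfies the Markov chain required in the first scenario: for any kernel $P_{U|Y,X}$ we have $I(X;U|Y)\leq H(X|Y)=0$, so $X-Y-U$ holds. Consequently the feasible set of the $h_{\epsilon}$ problem (kernels $P_{U|Y,X}$ with $I(U;X)\leq\epsilon$) collapses onto the feasible set of the $g_{\epsilon}$ problem (kernels $P_{U|Y}$ with $X-Y-U$ and $I(U;X)\leq\epsilon$), since conditioning on $(X,Y)$ is the same as conditioning on $Y$ alone when $X$ is a function of $Y$. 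Hence the mechanism attaining $h_{\epsilon}(P_{XY})=H(Y|X)+\epsilon$ is also feasible for $g_{\epsilon}$, giving $g_{\epsilon}(P_{XY})\geq H(Y|X)+\epsilon$; together with $g_{\epsilon}(P_{XY})\leq h_{\epsilon}(P_{XY})$ this yields $g_{\epsilon}(P_{XY})=h_{\epsilon}(P_{XY})=H(Y|X)+\epsilon$.

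The only point requiring care — and the one I would flag as the crux — is precisely this reduction of the constraint set: one must verify that imposing the Markov chain is not a genuine restriction once $H(X|Y)=0$, which is the content of the parenthetical remark in the statement. Everything else is a direct substitution into the two cited results.
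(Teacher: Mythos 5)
Your proposal is correct and follows essentially the same route as the paper: the upper bound from Lemma~\ref{goh}, the lower bound from specializing $L_1^{\epsilon}$ in Theorem~\ref{th.1} to $H(X|Y)=0$, and the identification $g_{\epsilon}=h_{\epsilon}$ via the observation that $I(X;U|Y)\leq H(X|Y)=0$ makes the Markov chain $X-Y-U$ automatic for every admissible kernel $P_{U|Y,X}$. Your explicit justification of the feasible-set collapse is exactly the content of the paper's parenthetical remark, so nothing is missing.
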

In the next result we find a larger set of distributions $P_{XY}$ compared to Corollary~\ref{seff} for which we have $g_{\epsilon}(P_{XY})=h_{\epsilon}(P_{XY})$, where common information corresponds to the Wyner \cite{wyner} or G{\'a}cs-K{\"o}rner \cite{gacs1973common} notions of common information. One advantage of having $g_{\epsilon}(P_{XY})=h_{\epsilon}(P_{XY})$ is discussed after Theorem~\ref{ziba}, where we show that under the assumption of equality between common information and mutual information the upper bound in Lemma~\ref{goh} is tight.
\begin{proposition}\label{sef}
	If the common information and the mutual information between $X$ and $Y$ are equal, then we have
	\begin{align*}
	g_{\epsilon}(P_{XY})=h_{\epsilon}(P_{XY}).
	\end{align*}
\end{proposition}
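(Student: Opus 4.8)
The plan is to prove the nontrivial inequality $h_{\epsilon}(P_{XY})\le g_{\epsilon}(P_{XY})$; the reverse $g_{\epsilon}\le h_{\epsilon}$ is already recorded in the text. The first step is to extract the structural consequence of the hypothesis. Let $W$ denote the common part of $X$ and $Y$, i.e. a RV that is simultaneously a deterministic function of $X$ and of $Y$. Using $I(X;W)=H(W)$ (since $W$ is a function of $X$) and $I(X;W|Y)=0$ (since $W$ is a function of $Y$), one obtains the identity $I(X;Y)=I(X;Y,W)=H(W)+I(X;Y|W)$. Hence equality of the common information $H(W)$ with $I(X;Y)$ forces $I(X;Y|W)=0$, so under the hypothesis there exists $W$ with $W=f(X)=g(Y)$ and the Markov chain $X-W-Y$, i.e. $X$ and $Y$ are conditionally independent given $W$. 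This is the only place the hypothesis enters, and isolating this structure is, I expect, the main conceptual step.

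Next I would set up a ``marginalization'' map from feasible second-scenario mechanisms to feasible first-scenario mechanisms. Fix any kernel $P_{U^*|Y,X}$ with $I(U^*;X)\le\epsilon$, and define a new RV $U$ through the channel $P_{U|Y}:=P_{U^*|Y}$ together with the Markov constraint $X-Y-U$. By construction $U$ is admissible for $g_{\epsilon}(P_{XY})$, and since $I(Y;U)$ depends on the joint law only through $P_Y$ and $P_{U|Y}$, we get $I(Y;U)=I(Y;U^*)$ at no loss of utility. It remains only to verify that $U$ still meets the leakage budget.

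The remaining work is the leakage bookkeeping, where the common structure pays off. Because $U$ is produced from $Y$ alone and $W=g(Y)$, the long Markov chain $X-W-Y-U$ holds, so in particular $I(X;U|W)=0$; combined with $W=f(X)$ this gives $I(X;U)=I(X,W;U)=I(W;U)$. Moreover $W$ is a function of $Y$ and $U,U^*$ share the same conditional law given $Y$, so the pairs $(W,U)$ and $(W,U^*)$ have the same joint distribution and $I(W;U)=I(W;U^*)$; finally, since $W$ is a function of $X$, the data-processing inequality yields $I(W;U^*)\le I(U^*;X)\le\epsilon$. Chaining these relations gives $I(X;U)\le\epsilon$, so $U$ is feasible for $g_{\epsilon}$ with utility $I(Y;U^*)$. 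Taking the supremum over admissible $U^*$ yields $g_{\epsilon}(P_{XY})\ge h_{\epsilon}(P_{XY})$, and together with $g_{\epsilon}\le h_{\epsilon}$ we conclude equality. The step to check with care is the validity of the chain $X-W-Y-U$ (equivalently $I(X;U|W)=0$), which rests on $W$ being a function of $Y$ and on the conditional independence $X-W-Y$ established in the first step; everything after that is data processing.
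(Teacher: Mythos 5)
Your proof is correct and follows essentially the same route as the paper, which simply invokes the construction of \cite[Th.~2]{7888175}: from a feasible $U^*$ for $h_{\epsilon}$ one builds a $Y$-only mechanism with the same utility $I(Y;U)=I(Y;U^*)$ and no larger leakage, which is exactly your marginalization map $P_{U|Y}:=P_{U^*|Y}$. The only difference is that you spell out the details the paper outsources to the reference — isolating the consequence $X-W-Y$ of the hypothesis, verifying the chain $X-W-Y-U$, and the data-processing bookkeeping $I(X;U)=I(W;U)=I(W;U^*)\le I(X;U^*)\le\epsilon$ — all of which check out.
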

\begin{proof}
	The proof follows similar arguments as the proof of \cite[Th.~2]{7888175}. Let $U^*$ be an optimizer of $h_{\epsilon}(P_{XY})$, then by using the proof of \cite[Th.~2]{7888175} we can construct $U'$ satisfying the Markov chain $X-Y-U'$, $I(U^*;Y)=I(U';Y)$ and $I(U';X)\leq I(U^*;X)$ which completes the proof.
\end{proof}
Clearly, if $X$ is a deterministic function of $Y$, then the common information and mutual information between $X$ and $Y$ are equal. Thus, the constraint in Proposition~\ref{sef} contains a larger set of joint distributions $P_{XY}$ compared to the constraint used in Corollary~\ref{seff}.\\ In the next lemma we provide an important property of an optimizer of $h_{\epsilon}(P_{XY})$ which is used to derive equivalencies in Theorem~\ref{ziba}. 
\begin{lemma}\label{kir}
	Let $\bar{U}$ be an optimizer of $h_{\epsilon}(P_{XY})$. We have
	\begin{align*}
	H(Y|X,\bar{U})=0.
	\end{align*}
\end{lemma}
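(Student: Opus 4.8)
The plan is to argue by contradiction, combining the decomposition \eqref{key} with a refinement of $\bar U$ built from the Functional Representation Lemma. Suppose $\bar U$ is an optimizer, so $I(Y;\bar U)=h_{\epsilon}(P_{XY})$ and $I(X;\bar U)\leq\epsilon$, and assume toward a contradiction that $H(Y|X,\bar U)>0$. Writing \eqref{key} for $\bar U$ gives
\begin{align*}
I(Y;\bar U)=I(X;\bar U)+H(Y|X)-H(Y|X,\bar U)-I(X;\bar U|Y),
\end{align*}
so the residual uncertainty $H(Y|X,\bar U)$ enters with a negative sign. The idea is to remove this term by resolving $Y$ given $(X,\bar U)$ without increasing the leakage, which should strictly raise the utility and contradict optimality.

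First I would construct the refinement. For each $u$ in the support of $\bar U$, apply Lemma~\ref{lemma1} to the conditional pair $(X,Y)$ distributed according to $P_{XY|\bar U=u}$, obtaining a variable $V$ such that, conditioned on $\{\bar U=u\}$, $V$ is independent of $X$ and $Y$ is a deterministic function of $(X,V)$. Setting $U'=(\bar U,V)$, I would verify the two structural properties. Since $I(X;V|\bar U)=\sum_u P_{\bar U}(u)\,I(X;V|\bar U=u)=0$, we obtain $I(X;U')=I(X;\bar U)+I(X;V|\bar U)=I(X;\bar U)\leq\epsilon$, so $U'$ is feasible; and $H(Y|X,U')=H(Y|X,V,\bar U)=0$ by construction, so $U'$ is functional. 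Monotonicity of mutual information then yields $I(Y;U')=I(Y;\bar U)+I(Y;V|\bar U)\geq I(Y;\bar U)=h_{\epsilon}(P_{XY})$, while feasibility forces $I(Y;U')\leq h_{\epsilon}(P_{XY})$; hence $U'$ is itself an optimizer.

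The hardest step is upgrading this to strict improvement, which is where I expect the main obstacle. Applying \eqref{key} to $U'$ and subtracting the expression for $\bar U$, the equality of the two optimal values collapses to $I(X;V|\bar U,Y)=H(Y|X,\bar U)=I(Y;V|X,\bar U)$, i.e.\ the gain from resolving $Y$ is exactly cancelled by the extra conditional leakage introduced by $V$. To derive a contradiction I would need to exhibit an appended $V$ for which this cancellation is strict, namely
\begin{align*}
\min_{V}\,I(X;V|\bar U,Y)<H(Y|X,\bar U),
\end{align*}
the minimum ranging over all $V$ with $V$ independent of $X$ given $\bar U$ and $H(Y|X,V,\bar U)=0$; such strictness would give $I(Y;U')>I(Y;\bar U)$ and force $H(Y|X,\bar U)=0$. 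This inequality is delicate, because a naive choice of $V$ mimicking the intrinsic noise can meet it with equality, so the argument must exploit the freedom in the Functional Representation Lemma construction or re-use the optimality of $\bar U$ in \eqref{key}. A cleaner route that already suffices for the use of this property in Theorem~\ref{ziba} is to note that $U'$ is a functional optimizer with the same leakage and utility as $\bar U$, so the optimizer may be taken to satisfy $H(Y|X,\bar U)=0$ without loss of generality.
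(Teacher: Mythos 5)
Your construction and the feasibility/monotonicity bookkeeping are fine, but the proof has a genuine gap exactly where you flag it: you never establish the strict inequality $I(Y;U')>I(Y;\bar U)$, and without it the contradiction never materializes. Your fallback --- ``the optimizer may be taken to satisfy $H(Y|X,\bar U)=0$ without loss of generality'' --- proves a strictly weaker statement (existence of \emph{one} optimizer with the property) than the lemma, which asserts the property for \emph{every} optimizer. The weaker form does not suffice downstream: in the step ii $\Rightarrow$ iii of Theorem~\ref{ziba} the lemma is applied to an optimizer of $g_{\epsilon}(P_{XY})$, i.e.\ to a specific $\bar U$ satisfying the Markov chain $X-Y-\bar U$, and replacing it by a different functional optimizer would destroy that Markov structure.

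The paper closes the gap with a different choice of refinement. Instead of applying the FRL conditionally on $\bar U=u$ (which makes $U'$ only conditionally independent of $X$ given $\bar U$ and leaves you needing $I(Y;V|\bar U)>0$), it applies the FRL to the \emph{pair} $(X,\bar U)$ in the role of the private variable, with the same $Y$. This yields $U'$ jointly independent of $(X,\bar U)$ with $H(Y|X,\bar U,U')=0$, and, crucially, \cite[Th.~5]{kostala} guarantees $I(Y;U')>0$ whenever $H(Y|X,\bar U)>0$. Feasibility of $U=(\bar U,U')$ follows as in your argument since $I(X;U'|\bar U)=0$, and strictness follows from superadditivity under independence:
\begin{align*}
I(Y;\bar U,U')=I(Y;U')+I(Y;\bar U)+I(U';\bar U|Y)-I(U';\bar U)\geq I(Y;U')+I(Y;\bar U)>I(Y;\bar U),
\end{align*}
using $I(U';\bar U)=0$. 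This is precisely the ingredient your route is missing: the strict positivity is imported from an external result (positive perfect-privacy utility whenever $Y$ is not a deterministic function of the conditioning pair), rather than extracted from the cancellation identity $I(X;V|\bar U,Y)=H(Y|X,\bar U)$, which, as you correctly observe, cannot be broken by the conditional FRL construction alone.
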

\begin{proof}
	The detailed proof is provided in Appendix B and is similar to the proof of \cite[Lemma~5]{kostala}. The proof is by contradiction and we show that if for an optimizer of $h_{\epsilon}(P_{XY})$, denoted by $\bar{U}$, we have $H(Y|X,\bar{U})>0$, then we can build $U$ such that it satisfies $I(U;X)\leq \epsilon$ and achieves strictly greater utility than $\bar{U}$, which contradicts the assumption.  
\end{proof}
In the next theorem we generalize the equivalent statements in \cite[Th.~7]{kostala} for bounded leakage between $X$ and $U$, i.e., $I(X;U)\leq \epsilon$.
\begin{theorem}\label{ziba}
	For any $\epsilon<I(X;Y)$, we have the following equivalencies
	\begin{itemize}
		\item [i.] $g_{\epsilon}(P_{XY})=H(Y|X)+\epsilon$,
		\item [ii.] $g_{\epsilon}(P_{XY})=h_{\epsilon}(P_{XY})$,
		\item [iii.] $h_{\epsilon}(P_{XY})=H(Y|X)+\epsilon$.
	\end{itemize}
\end{theorem}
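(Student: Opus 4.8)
The plan is to lean on the two structural facts already established: the sandwich $g_{\epsilon}(P_{XY})\le h_{\epsilon}(P_{XY})\le H(Y|X)+\epsilon$ from Lemma~\ref{goh}, and the fact from Lemma~\ref{kir} that any optimizer of $h_{\epsilon}(P_{XY})$ satisfies $H(Y|X,\bar U)=0$, together with the decomposition \eqref{key}. Because of the sandwich, statement~(i) immediately squeezes $h_{\epsilon}$ to the top of the interval, so (i)$\Rightarrow$(iii) and (i)$\Rightarrow$(ii) are free. Hence the whole content of the theorem reduces to the two reverse implications (iii)$\Rightarrow$(i) and (ii)$\Rightarrow$(i); establishing both closes the equivalence.

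For (iii)$\Rightarrow$(i), I would take a maximizer $\bar U$ of $h_{\epsilon}(P_{XY})$ and insert it into \eqref{key}. Since $H(Y|\bar U,X)=0$ by Lemma~\ref{kir}, this gives $h_{\epsilon}=I(X;\bar U)+H(Y|X)-I(X;\bar U|Y)$. Equating with the hypothesis $h_{\epsilon}=H(Y|X)+\epsilon$ yields $I(X;\bar U)-I(X;\bar U|Y)=\epsilon$. Because $I(X;\bar U)\le\epsilon$ and $I(X;\bar U|Y)\ge 0$, both inequalities must be tight, so $I(X;\bar U|Y)=0$, i.e.\ the Markov chain $X-Y-\bar U$ holds, and $I(X;\bar U)=\epsilon$. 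Thus $\bar U$ is feasible for $g_{\epsilon}(P_{XY})$ and attains utility $H(Y|X)+\epsilon$, so $g_{\epsilon}\ge H(Y|X)+\epsilon$, which with the sandwich is exactly (i).

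For (ii)$\Rightarrow$(i), let $U^{*}$ attain $g_{\epsilon}(P_{XY})$. Feasibility for $g_{\epsilon}$ is stronger than feasibility for $h_{\epsilon}$, so under $g_{\epsilon}=h_{\epsilon}$ the variable $U^{*}$ is also a maximizer of $h_{\epsilon}$; Lemma~\ref{kir} then gives $H(Y|X,U^{*})=0$, while the Markov constraint gives $I(X;U^{*}|Y)=0$. Feeding both into \eqref{key} produces $g_{\epsilon}=I(X;U^{*})+H(Y|X)$, so it only remains to show the leakage is saturated, $I(X;U^{*})=\epsilon$. This is the main obstacle, and I would dispatch it by time-sharing: introduce an independent coin $T$ with $\Pr\{T=1\}=\lambda$ and set $U'=(T,V)$ with $V=Y$ when $T=1$ and $V=U^{*}$ when $T=0$. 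Since $V$ is a function of $Y$ and fresh randomness, the chain $X-Y-U'$ is preserved, and independence of $T$ gives $I(X;U')=\lambda I(X;Y)+(1-\lambda)I(X;U^{*})$ and $I(Y;U')=\lambda H(Y)+(1-\lambda)I(Y;U^{*})$. If $I(X;U^{*})<\epsilon$, then for all sufficiently small $\lambda>0$ the leakage remains $\le\epsilon$, whereas the utility strictly increases because $I(Y;U^{*})=I(X;U^{*})+H(Y|X)<I(X;Y)+H(Y|X)=H(Y)$; this contradicts optimality of $U^{*}$ for $g_{\epsilon}$. Hence $I(X;U^{*})=\epsilon$ and $g_{\epsilon}=h_{\epsilon}=H(Y|X)+\epsilon$, which is (i).

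The points I expect to require care are the existence of the maximizers $\bar U$ and $U^{*}$, which I would justify exactly as in Lemma~\ref{kir} through the cardinality bound and continuity of mutual information on the resulting compact set, and the verification in the time-sharing step that $I(Y;U^{*})<H(Y)$ is \emph{strict}. The latter is precisely where the standing hypothesis $\epsilon<I(X;Y)$ is used: without it the mixing gain $H(Y)-I(Y;U^{*})$ could vanish and the contradiction forcing leakage saturation would fail.
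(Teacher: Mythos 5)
Your proof is correct and rests on the same three pillars as the paper's: the sandwich of Lemma~\ref{goh}, the property $H(Y|X,\bar U)=0$ of optimizers from Lemma~\ref{kir}, and the decomposition \eqref{key}, with a randomized-response construction to force leakage saturation. The only substantive difference is in that last step. The paper proves ii $\Rightarrow$ iii by appending to the optimizer $\bar U$ a component $W$ that equals $Y$ with probability $\alpha=(\epsilon-\epsilon_1)/I(X;Y|\bar U)$ and a constant otherwise, which requires it to first establish $I(X;Y|\bar U)>0$ (via a separate contradiction argument) so that $\alpha$ is well defined and less than one. Your time-sharing variable $U'=(T,V)$, mixing $U^*$ with $Y$ itself at rate $\lambda$, sidesteps that auxiliary claim: you only need the strict inequality $I(Y;U^*)=I(X;U^*)+H(Y|X)<H(Y)$, which falls out directly from $I(X;U^*)\le\epsilon<I(X;Y)$, and you need not saturate the leakage exactly since any strict utility gain at admissible leakage already yields the contradiction. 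Your reorganization of the cycle (proving ii $\Rightarrow$ i directly rather than ii $\Rightarrow$ iii) is equally valid, and your caveats about existence of optimizers and strictness of the mixing gain are exactly the right places to be careful; the paper shares the same implicit reliance on attainment of the suprema.
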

\begin{proof}
 The statements i $\Rightarrow$ ii and iii $\Rightarrow$ i can be shown by using Lemma~\ref{goh} and Lemma~\ref{kir}, respectively. For proving ii $\Rightarrow$ iii, let $\bar{U}$ be an optimizer of $g_{\epsilon}(P_{XY})$, by using Lemma~\ref{kir}, \eqref{key} and Markov chain $X-Y-U$ we show that $I(\bar{U};Y)=I(X;\bar{U})+H(Y|X)$. Furthermore, we show that $I(X;\bar{U})=\epsilon$, thus, $h_{\epsilon}(P_{XY})=H(Y|X)+\epsilon$. A detailed proof is provided in Appendix B.   
\end{proof}
By using Theorem~\ref{ziba} and Proposition~\ref{sef}, Corollary~\ref{seff} can be strengthened as follows.
\begin{corollary}\label{sefff}
	If the common information and mutual information between $X$ and $Y$ are equal then we have
	\begin{align*}
	g_{\epsilon}(P_{XY})=h_{\epsilon}(P_{XY})=H(Y|X)+\epsilon.
	\end{align*}
\end{corollary}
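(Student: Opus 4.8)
The plan is to obtain this corollary as a purely logical consequence of two results already in hand, namely Proposition~\ref{sef} and the chain of equivalencies in Theorem~\ref{ziba}; no new analysis is needed. First I would invoke Proposition~\ref{sef}: since by hypothesis the common information and the mutual information between $X$ and $Y$ coincide, that proposition directly yields $g_{\epsilon}(P_{XY}) = h_{\epsilon}(P_{XY})$. This is exactly statement (ii) in the list of equivalent conditions of Theorem~\ref{ziba}.

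Next I would feed this equality into Theorem~\ref{ziba}. Because statements (i), (ii) and (iii) there are all equivalent for every $\epsilon < I(X;Y)$, the validity of (ii) forces both (i) and (iii). In particular (iii) reads $h_{\epsilon}(P_{XY}) = H(Y|X)+\epsilon$, and combining this with the equality $g_{\epsilon}(P_{XY}) = h_{\epsilon}(P_{XY})$ from (ii) gives the full identity $g_{\epsilon}(P_{XY}) = h_{\epsilon}(P_{XY}) = H(Y|X)+\epsilon$, which is precisely the claim. Thus the corollary follows by transitivity of the two established equalities.

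There is essentially no analytical obstacle; the only point requiring a word of care is the range of $\epsilon$. Both Proposition~\ref{sef} and Theorem~\ref{ziba} are stated for $\epsilon < I(X;Y)$, which is the regime of interest throughout the section, so the chaining is legitimate. (For $\epsilon \geq I(X;Y)$ the statement is trivial, since both optimal values equal $H(Y)$, attained by $U=Y$.) Hence the corollary is immediate, and I would present it as a one-line deduction rather than a separate argument.
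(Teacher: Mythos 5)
Your proposal is correct and matches the paper's own derivation exactly: the paper likewise obtains this corollary by combining Proposition~\ref{sef} (which gives statement ii of Theorem~\ref{ziba}) with the equivalence ii $\Rightarrow$ iii. Your remark about the range $\epsilon < I(X;Y)$ is consistent with the standing assumption in that section, so nothing further is needed.
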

Lemma~\ref{kir} and Theorem~\ref{ziba} generalize \cite[Th.~7]{kostala} for non-zero leakage. Now let $\epsilon=0$ in Lemma~\ref{kir} and Theorem~\ref{ziba}. As argued in \cite{kostala} when $X$ is a deterministic function of $Y$, the necessary and sufficient conditions for having equality in Lemma~\ref{kir} are fulfilled. Furthermore, this result holds when $X$ and $Y$ are independent or $Y$ is a deterministic function of $X$. However, in this work we have shown that for any $0\leq \epsilon < I(X;Y)$, this statement can be generalized and we can substitute the condition that $X$ is a deterministic function of $Y$ by the condition that the common information and mutual information between $X$ and $Y$ are equal.  
\subsection* {Special case: $\epsilon=0$ (Independent $X$ and $U$)}
In this section we derive new lower and upper bounds for $h_0(P_{XY})$ and compare them with the previous bounds found in \cite{kostala}. We first state the definition of \emph{excess functional information} defined in
\cite{kosnane} as
\begin{align*}
\psi(X\rightarrow Y)=\inf_{\begin{array}{c} 
	\substack{P_{U|Y,X}: I(U;X)=0,\ H(Y|X,U)=0}
	\end{array}}I(X;U|Y),
\end{align*}
and the lower bound on $\psi(X\rightarrow Y)$ derived in \cite[Prop.~1]{kosnane} is given in the next lemma. Since this lemma is useful for deriving the upper bound on $h_{\epsilon}(P_{XY})$ we state it here.
\begin{lemma}\cite[Prop.~1]{kosnane}\label{haroomi}
	For discrete $Y$ we have 
	\begin{align}
	\psi(X\rightarrow Y)\geq -\sum_{y\in\mathcal{Y}}\!\int_{0}^{1}\!\!\! \mathbb{P}_X\{P_{Y|X}(y|X)\geq t\}\log (\mathbb{P}_X\{P_{Y|X}(y|X)\geq t\})dt-I(X;Y),\label{koonsher}
	\end{align}
	where for $|\mathcal{Y}|=2$ the equality holds and it is attained by the Poisson functional representation in \cite{kosnane}.
\end{lemma}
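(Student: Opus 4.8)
The plan is to turn the excess-information functional into a pure entropy-minimization problem and then to lower-bound that entropy by a rearrangement argument. \textbf{Reduction.} For any feasible $U$ in the definition of $\psi(X\rightarrow Y)$ we have $I(U;X)=0$ and $H(Y|U,X)=0$, so the identity \eqref{key} collapses to $I(X;U|Y)=H(Y|X)-I(Y;U)=H(Y|U)-I(X;Y)$. Hence
\[
\psi(X\rightarrow Y)=\Big(\inf_{U}H(Y|U)\Big)-I(X;Y),
\]
and it suffices to prove $\inf_U H(Y|U)\geq\sum_y\int_0^1\phi(g_y(t))\,dt$, where I write $\phi(q):=-q\log q$ and $g_y(t):=\mathbb{P}_X\{P_{Y|X}(y|X)\geq t\}$, so that the right-hand side of \eqref{koonsher} equals $\sum_y\int_0^1\phi(g_y(t))\,dt-I(X;Y)$.

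\textbf{Per-letter decomposition.} Since $I(U;X)=0$, conditioning on $U=u$ does not change the law of $X$. Writing $f$ for the deterministic map with $Y=f(X,U)$ (which exists because $H(Y|U,X)=0$) and $A_{y,u}:=\{x:f(x,u)=y\}$, I get $P_{Y|U}(y|u)=P_X(A_{y,u})$ together with the marginal constraint $P_{Y|X}(y|x)=\sum_u P_U(u)\mathbf{1}[x\in A_{y,u}]=\mathbb{P}_U\{x\in A_{y,U}\}$. Therefore $H(Y|U)=\sum_y\mathbb{E}_U[\phi(P_X(A_{y,U}))]$, and it is enough to establish, for each fixed $y$, the single-letter bound $\mathbb{E}_U[\phi(P_X(A_{y,U}))]\geq\int_0^1\phi(g_y(t))\,dt$.

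\textbf{The extremal coupling (the crux).} The integral on the right is exactly the value produced by the \emph{threshold coupling}: draw $\Theta\sim\mathrm{Unif}[0,1]$ and set $A=\{x:P_{Y|X}(y|x)\geq\Theta\}$; then $\mathbb{P}\{x\in A\}=P_{Y|X}(y|x)$ (so the marginal constraint holds) and $P_X(A)=g_y(\Theta)$, giving $\mathbb{E}[\phi(P_X(A))]=\int_0^1\phi(g_y(t))\,dt$. I would then argue that this nested family \emph{minimizes} $\mathbb{E}_U[\phi(P_X(A_{y,U}))]$ over all couplings with the prescribed marginals, via a convex-order argument: every feasible coupling gives $P_X(A_{y,U})$ the same mean $P_Y(y)=\sum_x P_X(x)P_{Y|X}(y|x)=\int_0^1 g_y(t)\,dt$, while the nested family makes $P_X(A_{y,U})$ maximal in the convex order, so since $\phi$ is concave its expectation is smallest for the nested family. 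Establishing this convex-order maximality is the hard part; I expect to prove it by a mass-transport/swap argument, showing that any non-nested coupling can be sorted toward the nested one through elementary exchanges of the form ``place the higher-$P_{Y|X}(y|\cdot)$ element into $A$ whenever the lower one is,'' each exchange being a mean-preserving spread of $P_X(A_{y,U})$ and hence not increasing $\mathbb{E}[\phi(\cdot)]$.

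\textbf{Summation and the binary case.} Summing the per-letter bound over $y$ and subtracting $I(X;Y)$ gives the claimed inequality. For $|\mathcal{Y}|=2$ the level-set families of the two symbols are complementary, so a \emph{single} threshold $\Theta$ realizes the extremal coupling simultaneously for both values of $y$; the induced $U$ meets every per-letter bound with equality, and this is precisely the Poisson functional representation specialized to two symbols, which yields the asserted equality. The reason the inequality becomes strict in general — and the main conceptual obstacle to equality for $|\mathcal{Y}|>2$ — is that the level sets of all symbols cannot be nested against one common threshold at once, so the per-letter minima are not in general simultaneously attainable by a single $U$.
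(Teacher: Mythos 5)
First, note that the paper does not actually prove this lemma; it is imported verbatim from \cite[Prop.~1]{kosnane}, so there is no in-paper argument to compare against and your proposal has to stand on its own (or against the proof in the cited reference, which likewise reduces to bounding $H(Y|U)$). Your reductions are correct: from \eqref{key} with $I(U;X)=0$ and $H(Y|U,X)=0$ one gets $I(X;U|Y)=H(Y|U)-I(X;Y)$, so $\psi(X\rightarrow Y)=\inf_U H(Y|U)-I(X;Y)$; the identities $P_{Y|U}(y|u)=P_X(A_{y,u})$ and $P_{Y|X}(y|x)=\mathbb{P}_U\{x\in A_{y,U}\}$ are exactly right; and the fixed mean $\mathbb{E}_U[P_X(A_{y,U})]=P_Y(y)=\int_0^1 g_y(t)\,dt$ checks out. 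The binary-case discussion is also sound: when $|\mathcal{Y}|=2$ the two level-set families are complementary, a single uniform threshold realizes both nested couplings simultaneously, the resulting $U$ is independent of $X$ and determines $Y$ jointly with $X$, and every per-letter bound is met with equality.

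The genuine gap is in the step you yourself flag as the crux, and specifically in the mechanism you propose for it. The claim that the nested (comonotone) coupling makes $P_X(A_{y,U})=\sum_x P_X(x)\mathbf{1}[x\in A_{y,U}]$ maximal in the convex order among all couplings with inclusion probabilities $\mathbb{P}\{x\in A_{y,U}\}=P_{Y|X}(y|x)$ is true --- it is the classical theorem that a sum of random variables with prescribed marginals is largest in convex order when the summands are comonotonic --- but your swap argument does not establish it as stated. If you take mass $\delta$ on a set $A_1\ni x,\ A_1\not\ni x'$ and mass $\delta$ on $A_2\ni x',\ A_2\not\ni x$ with $P_{Y|X}(y|x')>P_{Y|X}(y|x)$, and you move $x'$ from $A_2$ into $A_1$, the pair $\bigl(P_X(A_1)+P_X(x'),\,P_X(A_2)-P_X(x')\bigr)$ is a mean-preserving spread of $\bigl(P_X(A_1),P_X(A_2)\bigr)$ only when $P_X(A_1)\geq P_X(A_2)-P_X(x')$; for an arbitrary pair of atoms the transfer can just as well be a mean-preserving \emph{contraction}, so individual exchanges need not be monotone for $\mathbb{E}[\phi(\cdot)]$, and one must also argue that the sorting terminates at the nested family. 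The standard way to close this is not elementary swaps but the Fr\'echet--Hoeffding bound $\mathbb{P}\{x\in A,\,x'\in A\}\leq\min\bigl(P_{Y|X}(y|x),P_{Y|X}(y|x')\bigr)$ combined with the stop-loss characterization of convex order (show $\mathbb{E}[(P_X(A)-d)_+]$ is maximized by the nested coupling for every threshold $d$), or simply an appeal to the known comonotonicity theorem. With that lemma either cited or proved properly, the rest of your argument goes through and yields the stated bound.
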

\begin{remark}
	The lower bound in \eqref{koonsher} can be negative. For instance, let $Y$ be a deterministic function of $X$, i.e., $H(Y|X)=0$. In this case we have $-\sum_{y\in\mathcal{Y}}\!\int_{0}^{1} \mathbb{P}_X\{P_{Y|X}(y|X)\geq t\}\log (\mathbb{P}_X\{P_{Y|X}(y|X)\geq t\})dt-I(X;Y)=-I(X;Y)=-H(Y).$
\end{remark}
In the next theorem lower and upper bounds on $h_0(P_{XY})$ are provided.
\begin{theorem}\label{khata}
	For any pair of RVs $(X,Y)$ distributed according to $P_{XY}$ supported on alphabets $\mathcal{X}$ and $\mathcal{Y}$ we have
	\begin{align*}
	\max\{L^0_1,L^0_2\} \leq h_{0}(P_{XY})\leq \min\{U^0_1,U^0_2\},
	\end{align*}
	where $L^0_1$ and $L^0_2$ are defined in Corollary~\ref{kooni} and 
	\begin{align*}
	&U^0_1 = H(Y|X),\\
	&U^0_2 =  H(Y|X) +\sum_{y\in\mathcal{Y}}\int_{0}^{1} \mathbb{P}_X\{P_{Y|X}(y|X)\geq t\}\log (\mathbb{P}_X\{P_{Y|X}(y|X)\geq t\})dt+I(X;Y).
	\end{align*}	
	Furthermore, if $|\mathcal{Y}|=2$, then we have 
	\begin{align*}
	h_{0}(P_{XY}) = U^0_2.
	\end{align*}
\end{theorem}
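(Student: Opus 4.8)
The plan is to reduce the entire statement to the \emph{excess functional information} $\psi(X\rightarrow Y)$ and then invoke Lemma~\ref{haroomi}. The lower bound $\max\{L_1^0,L_2^0\}\le h_0(P_{XY})$ is precisely Corollary~\ref{kooni}, so nothing new is needed there, and the bound $h_0(P_{XY})\le U_1^0=H(Y|X)$ is just Lemma~\ref{goh} evaluated at $\epsilon=0$. Thus the only substantive work is the bound $h_0(P_{XY})\le U_2^0$ together with the equality for $|\mathcal{Y}|=2$.

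The heart of the argument is to establish the exact identity $h_0(P_{XY})=H(Y|X)-\psi(X\rightarrow Y)$. I would start from \eqref{key}, which at $\epsilon=0$ (so that every feasible $U$ has $I(X;U)=0$) reads $I(Y;U)=H(Y|X)-H(Y|U,X)-I(X;U|Y)$. By Lemma~\ref{kir} any optimizer $\bar{U}$ of $h_0(P_{XY})$ satisfies $H(Y|X,\bar{U})=0$, whence $h_0(P_{XY})=H(Y|X)-I(X;\bar{U}|Y)$. Since $\bar{U}$ satisfies both $I(X;\bar{U})=0$ and $H(Y|X,\bar{U})=0$, it is feasible for the infimum defining $\psi(X\rightarrow Y)$, so $I(X;\bar{U}|Y)\ge\psi(X\rightarrow Y)$ and therefore $h_0(P_{XY})\le H(Y|X)-\psi(X\rightarrow Y)$. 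For the reverse inequality, any $U$ that nearly attains the infimum in $\psi$ has $I(X;U)=0$, $H(Y|X,U)=0$, and $I(X;U|Y)$ arbitrarily close to $\psi(X\rightarrow Y)$; because $\epsilon=0$ forces only $I(X;U)=0$, such a $U$ is admissible for $h_0(P_{XY})$ and achieves $I(Y;U)=H(Y|X)-I(X;U|Y)$, giving $h_0(P_{XY})\ge H(Y|X)-\psi(X\rightarrow Y)$. This proves the identity.

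Combining the identity with the lower bound on $\psi(X\rightarrow Y)$ from Lemma~\ref{haroomi} yields $h_0(P_{XY})=H(Y|X)-\psi(X\rightarrow Y)\le U_2^0$, where the final step is exactly the definition of $U_2^0$: substituting the bracketed integral term of \eqref{koonsher} for $-\psi$ reproduces $H(Y|X)+\sum_{y}\int_0^1(\cdots)\,dt+I(X;Y)$. Together with $h_0(P_{XY})\le U_1^0$ this gives $h_0(P_{XY})\le\min\{U_1^0,U_2^0\}$. Finally, when $|\mathcal{Y}|=2$, Lemma~\ref{haroomi} holds with equality, so $\psi(X\rightarrow Y)$ equals the right-hand side of \eqref{koonsher} exactly, and the identity $h_0(P_{XY})=H(Y|X)-\psi(X\rightarrow Y)$ immediately collapses to $h_0(P_{XY})=U_2^0$.

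The main obstacle is justifying the exact identity $h_0(P_{XY})=H(Y|X)-\psi(X\rightarrow Y)$. The $\le$ direction depends crucially on Lemma~\ref{kir} to guarantee that an optimizer of $h_0$ may be taken with $H(Y|X,\bar{U})=0$, so that it lands inside the feasible set of $\psi$; without this one only obtains a weaker bound carrying the stray term $H(Y|U,X)\ge 0$. The $\ge$ direction rests on the observation that a near-minimizer of $\psi$ is automatically feasible for $h_0$ precisely because the perfect-privacy constraint $\epsilon=0$ coincides with the independence constraint $I(X;U)=0$ in the definition of $\psi$. Once these two matchings of feasible sets are in place, the remaining steps are direct substitutions of Lemma~\ref{haroomi} and Lemma~\ref{goh}.
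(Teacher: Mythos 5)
Your proof is correct and follows essentially the same route as the paper: both reduce the bound $U_2^0$ to the relation $h_0(P_{XY})\le H(Y|X)-\psi(X\rightarrow Y)$ via \eqref{key} and then invoke Lemma~\ref{haroomi}, handling the $|\mathcal{Y}|=2$ case through the equality case of that lemma. Your explicit appeal to Lemma~\ref{kir} to justify restricting attention to optimizers with $H(Y|X,\bar{U})=0$ (and your explicit reverse inequality $h_0\ge H(Y|X)-\psi$) merely fills in steps the paper leaves implicit, so the argument is the same.
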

\begin{proof}
	$L^0_1$ and $L^0_2$ can be obtained by letting $\epsilon=0$ in Theorem~\ref{th.1}. $U^0_1$ which has been derived in \cite[Th.~7]{kostala} can be obtained by \eqref{key}. $U^0_1$ can be derived as follows. Since $X$ and $U$ are independent, \eqref{key} can be rewritten as 
	\begin{align*}
	I(Y;U)=H(Y|X)-H(Y|U,X)-I(X;U|Y),
	\end{align*} 
	thus, using Lemma~\ref{haroomi}
	\begin{align*}
	h_0(P_{XY})&\leq H(Y|X)-\inf_{H(Y|U,X)=0,\ I(X;U)=0}I(X;U|Y)\\ &= H(Y|X)-\psi(X\rightarrow Y)\\&\leq H(Y|X)+\sum_{y\in\mathcal{Y}}\int_{0}^{1} \mathbb{P}_X\{P_{Y|X}(y|X)\geq t\}\log (\mathbb{P}_X\{P_{Y|X}(y|X)\geq t\})dt+I(X;Y). 	 \end{align*}
	For $|\mathcal{Y}|=2$ using Lemma~\ref{haroomi} we have $\psi(X\rightarrow Y)=-\sum_{y\in\mathcal{Y}}\int_{0}^{1} \mathbb{P}_X\{P_{Y|X}(y|X)\geq t\}\log (\mathbb{P}_X\{P_{Y|X}(y|X)\geq t\})dt-I(X;Y)$ and let $\bar{U}$ be the RV that attains this bound. Thus,
	\begin{align*}
	I(\bar{U};Y)&=H(Y|X)+\sum_{y\in\mathcal{Y}}\int_{0}^{1} \mathbb{P}_X\{P_{Y|X}(y|X)\geq t\}\log (\mathbb{P}_X\{P_{Y|X}(y|X)\geq t\})dt+I(X;Y).
	\end{align*}
	Therefore, $\bar{U}$ attains $U_2^0$ and $h_0(P_{XY})=U_0^2$.
\end{proof}
As mentioned before the upper bound $U^0_1$ has been derived in \cite[Th.~7]{kostala}. The upper bound $U^0_2$ is a new upper bound. Thus, the lower and upper bounds on $h_{0}(P_{XY})$ stated in Theorem~\ref{khata} generalize the bounds in \cite{kostala}. Furthermore, in case of binary $Y$ the exact expression for $h_{0}(P_{XY})$ has been derived.
\begin{lemma}\label{ankhar}
	If $X$ is a deterministic function of $Y$, i.e., $H(X|Y)=0$, we have
	\begin{align*}
	&\sum_{y\in\mathcal{Y}}\int_{0}^{1} \mathbb{P}_X\{P_{Y|X}(y|X)\geq t\}\log (\mathbb{P}_X\{P_{Y|X}(y|X)\geq t\})dt+I(X;Y)=0.
	\end{align*}
\end{lemma}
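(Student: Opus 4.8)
The plan is to exploit the fact that, when $X$ is a deterministic function of $Y$, say $X=f(Y)$, the conditional law $P_{Y|X}(y|\cdot)$ collapses to a single atom for each fixed $y$. First I would record the elementary consequence $I(X;Y)=H(X)-H(X|Y)=H(X)$, so that proving the claim reduces to showing that the sum of integrals equals $-H(X)$.

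Next, for each fixed $y\in\mathcal{Y}$ I would examine the random variable $P_{Y|X}(y|X)$ viewed as a function of $X$. Since observing $Y=y$ forces $X=f(y)$, we have $P_{Y|X}(y|x)=0$ for every $x\neq f(y)$; the only positive value taken is $P_{Y|X}(y|f(y))$, attained precisely on the event $\{X=f(y)\}$. Consequently the tail probability in the integrand is piecewise constant: for $0<t\leq P_{Y|X}(y|f(y))$ one has $\mathbb{P}_X\{P_{Y|X}(y|X)\geq t\}=P_X(f(y))$, while for $t>P_{Y|X}(y|f(y))$ it vanishes. Using the convention $0\log 0=0$, the inner integral then evaluates to $P_{Y|X}(y|f(y))\,P_X(f(y))\log P_X(f(y))$, which equals $P_Y(y)\log P_X(f(y))$ because $P_{Y|X}(y|f(y))\,P_X(f(y))=P_{XY}(f(y),y)=P_Y(y)$ (as $f(y)$ is the unique $x$ with positive joint mass when $Y=y$).

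Finally I would sum over $y$ and group the terms according to the value $x=f(y)$. Using $\sum_{y:f(y)=x}P_Y(y)=P_X(x)$ — valid because the event $\{X=x\}$ coincides with $\{Y\in f^{-1}(x)\}$ — the double sum collapses to $\sum_{x\in\mathcal{X}}P_X(x)\log P_X(x)=-H(X)$. Adding $I(X;Y)=H(X)$ yields the desired zero. The only delicate points are the careful determination of the piecewise-constant tail probability and the bookkeeping of the preimages under $f$; there is otherwise no real obstacle. Indeed, the result can be read as the statement that the general lower bound of Lemma~\ref{haroomi} degenerates to zero in this deterministic regime, which is consistent with the fact that $\psi(X\rightarrow Y)=-I(X;Y)=-H(Y)$ when $Y$ is itself determined by $X$.
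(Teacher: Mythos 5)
Your proof is correct and follows essentially the same route as the paper's: identify $P_{Y|X}(y|\cdot)$ as a single atom at $x=f(y)$, evaluate the piecewise-constant tail integral to get $P_Y(y)\log P_X(f(y))$, and regroup the sum over the preimages of $f$ to obtain $-H(X)=-I(X;Y)$. (Only your closing aside slightly conflates the regimes $H(X|Y)=0$ and $H(Y|X)=0$ --- the value $\psi(X\rightarrow Y)=-H(Y)$ belongs to the latter --- but this does not affect the argument.)
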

\begin{proof}
	The proof is provided in Appendix B.
\end{proof}
\begin{remark}
	According to Lemma~\ref{ankhar}, if $X$ is a deterministic function of $Y$, then we have $U_2^0=U_1^0$.  
\end{remark}
In the next example we compare the bounds $U^0_1$ and $U^0_2$ for a $BSC(\theta)$.
\begin{figure}[]
	\centering
	\includegraphics[scale = .2]{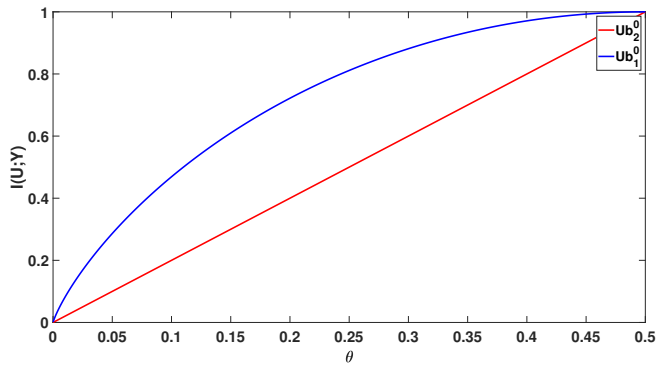}
	\caption{Comparing the upper bounds $U_1^0$ and $U_2^0$ for $BSC(\theta)$. The blue curve illustrates the upper bound found in \cite{kostala} and the red line shows the upper bound found in this work. In fact by using Theorem~\ref{khata} the red curve corresponding to $U_2^0$ can be achieved and it presents the solution for $h_{0}(P_{XY})$.}
	\label{fig:kir}
\end{figure}
\begin{example}(Binary Symmetric Channel)
	Let the binary RVs $X\in\{0,1\}$ and $Y\in\{0,1\}$ have the following joint distribution
	\begin{align*}
	P_{XY}(x,y)=\begin{cases}
	\frac{1-\theta}{2}, \ &x=y\\
	\frac{\theta}{2}, \ &x\neq y
	\end{cases},
	\end{align*}
	where $\theta<\frac{1}{2}$. We obtain
	\begin{align*}
	\sum_{y\in\mathcal{Y}}\int_{0}^{1} \mathbb{P}_X\{P_{Y|X}(y|X)\geq t\}\log (\mathbb{P}_X\{P_{Y|X}(y|X)\geq t\})dt&=\int_{\theta}^{1-\theta} \mathbb{P}_X\{P_{Y|X}(0|X)\geq t\}\log (\mathbb{P}_X\{P_{Y|X}(0|X)\geq t\})dt\\&+\int_{\theta}^{1-\theta} \mathbb{P}_X\{(P_{Y|X}(1|X)\geq t\}\log (\mathbb{P}_X\{P_{Y|X}(1|X)\geq t\})dt\\&=(1-2\theta)\left(P_X(0)\log (P_X(0))+P_X(1)\log (P_X(1))\right)\\&=-(1-2\theta)H(X)=-(1-2\theta).
	\end{align*}
	Thus,
	\begin{align*}
	U_2^0&=H(Y|X) +\sum_{y\in\mathcal{Y}}\int_{0}^{1} \mathbb{P}_X\{P_{Y|X}(y|X)\geq t\}\log (\mathbb{P}_X\{P_{Y|X}(y|X)\geq t\})dt+I(X;Y)\\&=h(\theta)-(1-2\theta)+(1-h(\theta))=2\theta,\\
	U_1^0&=H(Y|X)=h(\theta),
	\end{align*}
	where $h(\cdot)$ corresponds to the binary entropy function. As shown in Fig.~ \ref{fig:kir}, we have
	\begin{align*}
	h_{0}(P_{XY})\leq U^0_2\leq U^0_1.
	\end{align*}
	However by using Theorem~\ref{khata}, since $|\mathcal{Y}|=2$ the upper bound $U_2^0$ is achieved and we have
	\begin{align*}
	h_{0}(P_{XY})= U^0_2.
	\end{align*}
\end{example}
\begin{example}(Erasure Channel)
	Let the RVs $X\in\{0,1\}$ and $Y\in\{0,e,1\}$ have the following joint distribution
	\begin{align*}
	P_{XY}(x,y)=\begin{cases}
	\frac{1-\theta}{2}, \ &x=y\\
	\frac{\theta}{2}, \ &y=e\\
	0, \ & \text{else}
	\end{cases},
	\end{align*}
	where $\theta<\frac{1}{2}$. We have
	\begin{align*}
	&\sum_{y\in\mathcal{Y}}\int_{0}^{1} \mathbb{P}_X\{P_{Y|X}(y|X)\geq t\}\log (\mathbb{P}_X\{P_{Y|X}(y|X)\geq t\})dt\\&=\int_{0}^{1-\theta} \mathbb{P}_X\{P_{Y|X}(0|X)\geq t\}\log (\mathbb{P}_X\{P_{Y|X}(0|X)\geq t\})dt\\&+\int_{0}^{1-\theta} \mathbb{P}_X\{P_{Y|X}(1|X)\geq t\}\log (\mathbb{P}_X\{P_{Y|X}(1|X)\geq t\})dt\\&=-(1-\theta)H(X)=-(1-\theta).
	\end{align*}
	Thus, 
	\begin{align*}
	U_2^0&=H(Y|X) +\sum_{y\in\mathcal{Y}}\int_{0}^{1} \mathbb{P}_X\{P_{Y|X}(y|X)\geq t\}\log (\mathbb{P}_X\{P_{Y|X}(y|X)\geq t\})dt+I(X;Y)\\&=h(\theta)-(1-\theta)+h(\theta)+1-\theta-h(\theta) \\&=h(\theta),\\
	U_1^0&=H(Y|X)=h(\theta). 
	\end{align*}
	Hence, in this case, $U_1^0=U_2^0=h(\theta)$. Furthermore, in \cite[Example~8]{kostala}, it has been shown that for this pair of $(X,Y)$ we have $g_0(P_{XY})=h_0(P_{XY})=h(\theta)$.
\end{example}
In \cite[Prop.~2]{kosnane} it has been shown that for every $\alpha\geq 0$, there exist a pair $(X,Y)$ such that $I(X;Y)\geq \alpha$ and 
\begin{align}
\psi(X\rightarrow Y)\geq \log(I(X;Y)+1)-1.\label{kir1}
\end{align}
\begin{lemma}\label{choon}
	Let $(X,Y)$ be as in \cite[Prop.~2]{kosnane}, i.e. $(X,Y)$ satisfies \eqref{kir1}. Then for such pair we have
	\begin{align*}
	H(Y|X)-\log(I(X;Y)+1)-4\leq h_0(P_{XY})\leq H(Y|X)-\log(I(X;Y)+1)+1. 
	\end{align*}
	\begin{proof}
		The lower bound follows from Corollary~1. For the upper bound, we use \eqref{key} and \eqref{kir1} so that
		\begin{align*}
		I(U;Y)&\leq H(Y|X)-\psi(X\rightarrow Y)\\ &\leq H(Y|X)-\log(I(X;Y)+1)+1.
		\end{align*}
	\end{proof}
\end{lemma}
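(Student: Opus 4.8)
The plan is to treat the two inequalities separately, obtaining each by pairing a previously established bound on $h_0(P_{XY})$ with the defining hypothesis \eqref{kir1} on the pair $(X,Y)$.

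For the lower bound, I would simply invoke Corollary~\ref{kooni}. Setting $\epsilon = 0$ in Theorem~\ref{th.1} yields $h_0(P_{XY}) \geq L_2^0 = H(Y|X) - \left(\log(I(X;Y)+1)+4\right)$, which is exactly the claimed left-hand inequality. Since this bound is uniform over all pairs $(X,Y)$, it holds in particular for the pair described in \cite[Prop.~2]{kosnane}, and no further argument is needed.

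For the upper bound, the key is the relation $h_0(P_{XY}) \leq H(Y|X) - \psi(X\rightarrow Y)$, which I would derive as in the proof of Theorem~\ref{khata}. Specializing \eqref{key} to $\epsilon=0$ (so that $I(X;U)=0$) gives $I(Y;U) = H(Y|X) - H(Y|U,X) - I(X;U|Y)$. Letting $\bar{U}$ be an optimizer of $h_0(P_{XY})$, Lemma~\ref{kir} ensures $H(Y|X,\bar{U})=0$, while feasibility gives $I(\bar{U};X)=0$; hence $\bar{U}$ is admissible in the infimum defining the excess functional information, so that $I(X;\bar{U}|Y)\geq \psi(X\rightarrow Y)$. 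Combining these observations, $h_0(P_{XY}) = H(Y|X) - I(X;\bar{U}|Y) \leq H(Y|X) - \psi(X\rightarrow Y)$, and applying the hypothesis $\psi(X\rightarrow Y)\geq \log(I(X;Y)+1)-1$ from \eqref{kir1} yields the desired right-hand inequality.

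The argument is essentially bookkeeping, so I do not expect a genuine obstacle; the only step requiring care is the justification of $h_0(P_{XY})\leq H(Y|X)-\psi(X\rightarrow Y)$. This is not immediate from the definition of $h_0$, whose feasible set only imposes $I(U;X)=0$ and does not a priori require the extra constraint $H(Y|X,U)=0$ appearing in the definition of $\psi$. Lemma~\ref{kir} is precisely what closes this gap, guaranteeing an optimizer with $H(Y|X,\bar{U})=0$ and thereby placing it inside the feasible set of the $\psi$-infimum, which legitimizes the comparison $I(X;\bar{U}|Y)\geq \psi(X\rightarrow Y)$.
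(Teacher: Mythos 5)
Your proof is correct and follows essentially the same route as the paper: the lower bound is $L_2^0$ from Corollary~\ref{kooni}, and the upper bound combines $h_0(P_{XY})\leq H(Y|X)-\psi(X\rightarrow Y)$ with \eqref{kir1}. Your explicit appeal to Lemma~\ref{kir} to place an optimizer inside the feasible set of the $\psi$-infimum is precisely the justification the paper leaves implicit when it writes that the upper bound follows from \eqref{key} and \eqref{kir1}.
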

\begin{remark}
	From Lemma~\ref{choon} and Corollary~1 we can conclude that the lower bound $L_2^0=H(Y|X)-(\log(I(X;Y)+1)+4)$ is tight within $5$ bits.
\end{remark}
\subsection{Privacy-utility trade-off with non-zero leakage and per-letter privacy constraints}
In this section, we first introduce similar lemmas as Lemma~\ref{lemma3} and Lemma~\ref{lemma4}, where we have replaced the mutual information constraint, i.e., $I(U;X)=\epsilon$, with the strong privacy criterion 1 defined in \eqref{main22} and \eqref{main11}. In the remaining part of this work $d(\cdot,\cdot)$ corresponds to the total variation distance, i.e., $d(P,Q)=\sum_x |P(x)-Q(x)|$. 
\begin{lemma}\label{lemma11}
	For any $0\leq\epsilon< \sqrt{2I(X;Y)}$ and any pair of RVs $(X,Y)$ distributed according to $P_{XY}$ supported on alphabets $\mathcal{X}$ and $\mathcal{Y}$ where $|\mathcal{X}|$ is finite and $|\mathcal{Y}|$ is finite or countably infinite, there exists a RV $U$ supported on $\mathcal{U}$ such that $X$ and $U$ satisfy the strong privacy criterion 1, i.e., we have
	\begin{align}\label{c11}
	d(P_{X,U}(\cdot,u),P_XP_{U}(u))\leq\epsilon,\ \forall u,
	\end{align}
	$Y$ is a deterministic function of $(U,X)$, i.e., we have
	\begin{align}
	H(Y|U,X)=0,\label{c21}
	\end{align}
	and 
	\begin{align}
	|\mathcal{U}|\leq \left[|\mathcal{X}|(|\mathcal{Y}|-1)+1\right]\left[|\mathcal{X}|+1\right].\label{c31}
	\end{align}
\end{lemma}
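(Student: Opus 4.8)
The plan is to reduce this statement to the Extended Functional Representation Lemma (Lemma~\ref{lemma3}) combined with the Pinsker-type implication of Proposition~\ref{trash1}, rather than redoing a construction from scratch. The key observation is that the admissible range $0\leq\epsilon<\sqrt{2I(X;Y)}$ is exactly what is needed so that the auxiliary parameter $\epsilon'=\frac{\epsilon^2}{2}$ falls in the range $0\leq\epsilon'<I(X;Y)$ for which EFRL is applicable: the hypothesis $\epsilon<\sqrt{2I(X;Y)}$ is equivalent to $\frac{\epsilon^2}{2}<I(X;Y)$.

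First I would invoke Lemma~\ref{lemma3} with leakage parameter $\epsilon'=\frac{\epsilon^2}{2}$. Since $0\leq\epsilon'<I(X;Y)$, this produces a RV $U$ supported on $\mathcal{U}$ with $I(U;X)=\epsilon'$, $H(Y|U,X)=0$, and the cardinality bound $|\mathcal{U}|\leq\left[|\mathcal{X}|(|\mathcal{Y}|-1)+1\right]\left[|\mathcal{X}|+1\right]$. This already establishes \eqref{c21} and \eqref{c31} verbatim, so only the weighted strong $\ell_1$-privacy criterion \eqref{c11} remains to be checked.

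To obtain \eqref{c11}, I would apply the implication \eqref{tr1} of Proposition~\ref{trash1} to the pair $(X,U)$ constructed above. With $I(U;X)=\frac{\epsilon^2}{2}$, that implication yields, for every $u\in\mathcal{U}$,
\begin{align*}
d(P_{X,U}(\cdot,u),P_XP_U(u))\leq\sqrt{2\cdot\tfrac{\epsilon^2}{2}}=\epsilon,
\end{align*}
which is precisely \eqref{c11}. Collecting the three properties then completes the proof.

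Because every ingredient is already in hand, there is no genuine obstacle; the only point requiring care is the parameter matching. One must choose $\epsilon'=\frac{\epsilon^2}{2}$ (and not, say, $\epsilon'=\epsilon$) so that the square root coming from Pinsker's inequality produces exactly the target bound $\epsilon$, and one must verify that the hypothesis $\epsilon<\sqrt{2I(X;Y)}$ guarantees $\epsilon'<I(X;Y)$ so that Lemma~\ref{lemma3} is in force. A purely cosmetic alternative would be to re-run the randomized-response construction underlying EFRL and bound the $\ell_1$ distance directly, but this is strictly more work and yields no additional generality.
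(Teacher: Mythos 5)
Your proposal is correct and follows essentially the same route as the paper: the paper also takes $U$ from the EFRL with leakage $\frac{\epsilon^2}{2}$ and then derives the per-letter bound via Pinsker's inequality and the factor $P_U(u)\geq P_U(u)^2$, which is exactly the content of Proposition~\ref{trash1} that you cite. The only difference is that the paper repeats that Pinsker computation inline rather than invoking the proposition, so your version is a slightly cleaner packaging of the identical argument.
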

\begin{proof}
	Let $U$ be found by the EFRL, where we let the leakage be $\frac{\epsilon^2}{2}$. Thus, we have
	\begin{align*}
	\frac{\epsilon^2}{2}&=I(U;X)\\&=\sum_u P_U(u)D(P_{X|U}(\cdot|u),P_X)\\ &\stackrel{(a)}{\geq} \sum_u \frac{P_U(u)}{2}\left( d(P_{X|U}(\cdot|u),P_X)\right)^2\\&\stackrel{(b)}{\geq} \sum_u \frac{P_U(u)^2}{2}\left( d(P_{X|U}(\cdot|u),P_X)\right)^2\\ &\geq \frac{P_U(u)^2}{2}\left( d(P_{X|U}(\cdot|u),P_X)\right)^2\\ &= \frac{\left( d(P_{X,U}(\cdot,u),P_XP_U(u))\right)^2}{2},
	\end{align*}
	where $D(\cdot,\cdot)$ corresponds to KL-divergence. Furthermore, (a) follows by the Pinsker's inequality \cite{verdu} and (b) follows since $0\leq P_U(u)\leq 1$. Using the last line we obtain
	\begin{align*}
	d(P_{X,U}(\cdot,u),P_XP_U(u))\leq \epsilon,\ \forall u.
	\end{align*}
	The other constraints can be obtained by using Lemma~\ref{lemma3}.	
\end{proof}
\begin{remark}
	RV $U$, which is specified by the FRL (Lemma~\ref{lemma1}), satisfies all constraints in Lemma~\ref{lemma11}. However, as we show later, it achieves less utility compared to the RV $U$ which is used in the proof of Lemma~\ref{lemma11}. Furthermore, we can add constraints such as $0<I(U;X)$ and $0<\epsilon< \sqrt{2I(X;Y)}$ to Lemma~\ref{lemma11} while the RV $U$ found by the FRL does not satisfy them. 
\end{remark}
\begin{lemma}\label{lemma22} 
	For any $0\leq\epsilon< \sqrt{2I(X;Y)}$ and pair of RVs $(X,Y)$ distributed according to $P_{XY}$ supported on alphabets $\mathcal{X}$ and $\mathcal{Y}$ where $|\mathcal{X}|$ is finite and $|\mathcal{Y}|$ is finite or countably infinite with $I(X,Y)< \infty$, there exists a RV $U$ defined on $\mathcal{U}$ such that $X$ and $U$ satisfy the strong privacy criterion 1, i.e., we have
	\begin{align*}
	d(P_{X,U}(\cdot,u),P_XP_{U}(u))\leq\epsilon,\ \forall u,
	\end{align*}
	$Y$ is a deterministic function of $(U,X)$, i.e., we have 
	\begin{align*}
	H(Y|U,X)=0,
	\end{align*}
	$I(X;U|Y)$ can be  upper bounded as follows 
	\begin{align}
	I(X;U|Y)\!\leq \alpha H(X|Y)\!+\!(1-\alpha)\!\left[ \log(I(X;Y)+1)+4\right],\label{bala}
	\end{align}
	and 
	$
	|\mathcal{U}|\leq \left[|\mathcal{X}|(|\mathcal{Y}|-1)+2\right]\left[|\mathcal{X}|+1\right],
	$
	where $\alpha =\frac{\epsilon^2}{2H(X)}$.
\end{lemma}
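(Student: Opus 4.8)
The plan is to mirror the proof of Lemma~\ref{lemma11}, but to invoke the Extended Strong Functional Representation Lemma (Lemma~\ref{lemma4}) in place of the EFRL so that the extra conditional-leakage bound is carried along for free. Concretely, I would first apply Lemma~\ref{lemma4} with the mutual-information leakage set to $\frac{\epsilon^2}{2}$ rather than to $\epsilon$. This substitution is admissible precisely because the hypothesis $\epsilon<\sqrt{2I(X;Y)}$ is equivalent to $\frac{\epsilon^2}{2}<I(X;Y)$, which is exactly the range required by Lemma~\ref{lemma4}. The RV $U$ produced this way immediately inherits three of the four claimed properties: it satisfies $H(Y|U,X)=0$, it meets the cardinality bound $|\mathcal{U}|\leq \left[|\mathcal{X}|(|\mathcal{Y}|-1)+2\right]\left[|\mathcal{X}|+1\right]$, and it obeys the conditional-leakage bound of Lemma~\ref{lemma4} with its parameter equal to $\frac{\epsilon^2}{2H(X)}$, which is exactly the $\alpha$ announced in the statement. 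Hence \eqref{bala} holds verbatim.

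It then remains only to convert the average constraint $I(U;X)=\frac{\epsilon^2}{2}$ into the per-letter weighted strong $\ell_1$-privacy criterion \eqref{c11}. For this I would reuse the chain of estimates from the proof of Lemma~\ref{lemma11}: starting from $I(U;X)=\sum_u P_U(u)D(P_{X|U}(\cdot|u),P_X)$, apply Pinsker's inequality termwise to lower-bound each divergence by $\frac12\bigl(d(P_{X|U}(\cdot|u),P_X)\bigr)^2$, then use $0\leq P_U(u)\leq 1$ to replace $P_U(u)$ by $P_U(u)^2$ and drop all but a single index, and finally note the identity $P_U(u)\,d(P_{X|U}(\cdot|u),P_X)=d(P_{X,U}(\cdot,u),P_XP_U(u))$. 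This yields $\frac{\epsilon^2}{2}\geq \frac12\bigl(d(P_{X,U}(\cdot,u),P_XP_U(u))\bigr)^2$ for every $u$, and taking square roots gives $d(P_{X,U}(\cdot,u),P_XP_U(u))\leq\epsilon$ for all $u$, as required.

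Since the construction is a direct instantiation of Lemma~\ref{lemma4} followed by the same Pinsker estimate already carried out in Lemma~\ref{lemma11}, I do not expect any genuine obstacle here. The only points that deserve explicit care are verifying that the chosen leakage value $\frac{\epsilon^2}{2}$ falls inside the admissible interval of Lemma~\ref{lemma4} (which is where the hypothesis $\epsilon<\sqrt{2I(X;Y)}$ is consumed) and checking that the parameter produced by Lemma~\ref{lemma4} indeed equals the stated $\alpha=\frac{\epsilon^2}{2H(X)}$, so that the announced bound on $I(X;U|Y)$ is reproduced exactly rather than up to a constant.
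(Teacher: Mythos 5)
Your proposal matches the paper's proof exactly: the paper also takes $U$ from the ESFRL (Lemma~\ref{lemma4}) with the leakage set to $\frac{\epsilon^2}{2}$, obtains the per-letter criterion via the same Pinsker chain as in the proof of Lemma~\ref{lemma11}, and reads off \eqref{bala} and the cardinality bound directly from Lemma~\ref{lemma4}. Your additional remarks on checking that $\frac{\epsilon^2}{2}<I(X;Y)$ and that the parameter equals $\alpha=\frac{\epsilon^2}{2H(X)}$ are correct and only make the argument more explicit than the paper's.
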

\begin{proof}
	Let $U$ be found by the ESFRL, where we let the leakage be $\frac{\epsilon^2}{2}$. The first constraint in this statement can be obtained by using the same proof as Lemma \ref{lemma11}. Furthermore, \eqref{bala} can be derived using Lemma~\ref{lemma4}. 
\end{proof}
\begin{remark}
	RV $U$ produced by the SFRL (Lemma~\ref{lemma2}) does not satisfy \eqref{bala} in general. However, in case of satisfying \eqref{bala}, by using similar arguments as Corollary~\ref{sos}, it achieves less or equal utility compared to the RV $U$ which is used in the proof of Lemma~\ref{lemma22}. Similar to Corollary~\ref{sos}, we later show that the RV found by proof of Lemma~\ref{lemma22} strictly improves the utility for non-zero leakage. 
\end{remark}
In the next proposition we find a lower bound on $h_{\epsilon}^{w\ell}(P_{XY})$ using Lemma~1 and Lemma~2. 
\begin{proposition}\label{prop111}
	For any $0\leq \epsilon< \sqrt{2I(X;Y)}$ and pair of RVs $(X,Y)$ distributed according to $P_{XY}$ supported on alphabets $\mathcal{X}$ and $\mathcal{Y}$ we have
	\begin{align}\label{prop12}
	h_{\epsilon}^{w\ell}(P_{XY})\geq \max\{L_{h^{w\ell}}^{1}(\epsilon),L_{h^{w\ell}}^{2}(\epsilon)\},
	\end{align}
	where
	\begin{align*}
	L_{h^{w\ell}}^{1}(\epsilon) &= H(Y|X)-H(X|Y)+\frac{\epsilon^2}{2},\\
	L_{h^{w\ell}}^{2}(\epsilon) &= H(Y|X)-\alpha H(X|Y)+\frac{\epsilon^2}{2} -(1-\alpha)\left( \log(I(X;Y)+1)+4 \right),\\
	\end{align*}
	with $\alpha=\frac{\epsilon^2}{2H(X)}$.
\end{proposition}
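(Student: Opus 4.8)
The plan is to mirror the argument of Theorem~\ref{th.1}, but feeding in the random variables furnished by the two per-letter lemmas Lemma~\ref{lemma11} and Lemma~\ref{lemma22} in place of the EFRL/ESFRL constructions of Lemma~\ref{lemma3} and Lemma~\ref{lemma4}. The central tool is the decomposition \eqref{key}, namely $I(Y;U)=I(X;U)+H(Y|X)-H(Y|U,X)-I(X;U|Y)$, evaluated on an explicitly constructed feasible $U$. Since both lemmas produce $U$ by running the (E)SFRL machinery with the mutual-information budget set to $\frac{\epsilon^2}{2}$, every such $U$ automatically satisfies the weighted strong $\ell_1$-privacy criterion $d(P_{X,U}(\cdot,u),P_XP_U(u))\leq\epsilon$ for all $u$, hence is feasible for the supremum defining $h_{\epsilon}^{w\ell}(P_{XY})$; consequently any value of $I(Y;U)$ I compute on such a $U$ is a legitimate lower bound.

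For $L_{h^{w\ell}}^{1}(\epsilon)$ I first take $U$ from Lemma~\ref{lemma11}. By construction $I(X;U)=\frac{\epsilon^2}{2}$ and $H(Y|U,X)=0$, so \eqref{key} collapses to $I(Y;U)=\frac{\epsilon^2}{2}+H(Y|X)-I(X;U|Y)$. I then bound the last term via $I(X;U|Y)=H(X|Y)-H(X|Y,U)\leq H(X|Y)$, which gives $I(Y;U)\geq \frac{\epsilon^2}{2}+H(Y|X)-H(X|Y)=L_{h^{w\ell}}^{1}(\epsilon)$, and the identity $H(Y|X)-H(X|Y)=H(Y)-H(X)$ recovers the stated form.

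For $L_{h^{w\ell}}^{2}(\epsilon)$ I instead take $U$ from Lemma~\ref{lemma22}, which again yields $I(X;U)=\frac{\epsilon^2}{2}$ and $H(Y|U,X)=0$ but additionally the sharper estimate $I(X;U|Y)\leq \alpha H(X|Y)+(1-\alpha)\left(\log(I(X;Y)+1)+4\right)$ with $\alpha=\frac{\epsilon^2}{2H(X)}$. Substituting this into \eqref{key} gives $I(Y;U)\geq \frac{\epsilon^2}{2}+H(Y|X)-\alpha H(X|Y)-(1-\alpha)\left(\log(I(X;Y)+1)+4\right)=L_{h^{w\ell}}^{2}(\epsilon)$. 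Taking the better of the two feasible constructions yields $h_{\epsilon}^{w\ell}(P_{XY})\geq \max\{L_{h^{w\ell}}^{1}(\epsilon),L_{h^{w\ell}}^{2}(\epsilon)\}$.

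\textbf{Main obstacle.} The argument is essentially bookkeeping, and no genuine difficulty arises beyond keeping the leakage budget consistent throughout. The one point requiring care is that the effective leakage handed to the underlying lemmas is $\frac{\epsilon^2}{2}$ rather than $\epsilon$ (this is where the Pinsker step inside Lemma~\ref{lemma11} and Lemma~\ref{lemma22} enters), and that this substitution must be propagated into the definition $\alpha=\frac{\epsilon^2}{2H(X)}$. I would also verify that the stated hypothesis $0\leq\epsilon<\sqrt{2I(X;Y)}$ is precisely what guarantees $\frac{\epsilon^2}{2}<I(X;Y)$, so that the domain condition $I(U;X)<I(X;Y)$ of the EFRL/ESFRL-based lemmas is met and the construction is well defined.
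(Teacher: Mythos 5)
Your proposal matches the paper's own proof essentially step for step: both take the $U$ constructed in Lemma~\ref{lemma11} (resp.\ Lemma~\ref{lemma22}) with mutual-information budget $\frac{\epsilon^2}{2}$, note its feasibility for $h_{\epsilon}^{w\ell}$, and evaluate \eqref{key} with $H(Y|U,X)=0$, bounding $I(X;U|Y)$ by $H(X|Y)$ in the first case and by the ESFRL-type estimate in the second. The additional remarks on propagating the $\frac{\epsilon^2}{2}$ budget into $\alpha$ and on the domain condition $\epsilon<\sqrt{2I(X;Y)}$ are correct and consistent with the paper.
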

\begin{proof}
	For deriving $L_{h^{w\ell}}^{1}(\epsilon)$ let $U$ be produced as in the proof of Lemma \ref{lemma11}. Thus, $I(X;U)=\frac{\epsilon^2}{2}$ and $U$ satisfies \eqref{c11} and \eqref{c21}. We have
	\begin{align*}
	h_{\epsilon}^{w\ell}(P_{XY})&\geq
	I(U;Y)\\&=I(X;U)\!+\!H(Y|X)\!-\!I(X;U|Y)\!-\!H(Y|X,U)\\&=\frac{\epsilon^2}{2}+H(Y|X)-H(X|Y)+H(X|Y,U)\\ &\geq \frac{\epsilon^2}{2}+H(Y|X)-H(X|Y).
	\end{align*}
	Next for deriving $L_{h^{w\ell}}^{2}(\epsilon)$ let $U$ be produced by Lemma \ref{lemma22}. Hence, $I(X;U)=\frac{\epsilon^2}{2}$ and $U$ satisfies \eqref{c11}, \eqref{c21}, and \eqref{bala}. We obtain
	\begin{align*}
	h_{\epsilon}^{w\ell}(P_{XY})&\geq I(U;Y) = \frac{\epsilon^2}{2}+H(Y|X)-I(X;U|Y)\\ &\geq \frac{\epsilon^2}{2}+H(Y|X)-\alpha H(X|Y)-(1-\alpha)\left( \log(I(X;Y)+1)+4 \right).
	\end{align*}
\end{proof}
In the next section, we provide a lower bound on $g_{\epsilon}^{w\ell}(P_{XY})$ by following the same approach as in \cite{Khodam22}. For more details about the proofs and steps of approximation see \cite[Section III]{Khodam22}.
\subsection*{Lower bound on $g_{\epsilon}^{w\ell}(P_{XY})$}
In \cite{Khodam22}, we show that $g_{\epsilon}^{\ell}(P_{XY})$ can be approximated by a linear program using information geometry concepts. Using this result we can derive a lower bound for $g_{\epsilon}^{\ell}(P_{XY})$. In this part, we follow a similar approach to approximate $g_{\epsilon}^{w\ell}(P_{XY})$, which results in a lower bound. Similar to \cite{Khodam22}, for sufficiently small $\epsilon$, by using the leakage constraint in $g_{\epsilon}^{\ell}(P_{XY})$, i.e., the strong privacy criterion 1, we can rewrite the distribution $P_{X,U}(\cdot,u)$ as a perturbation of $P_XP_U(u)$. Thus, for any $u$ we can write $P_{X,U}(\cdot,u)=P_XP_U(u)+\epsilon J_u$, where $J_u\in \mathbb{R}^{|\mathcal{X}|}$ is a perturbation vector and satisfies the following properties:
\begin{align}
\bm{1}^T\cdot J_u&=0,\ \forall u, \label{koon1}\\
\sum_u J_u&=\bm{0}\in \mathbb{R}^{|\mathcal{X}|},\label{koon2}\\
\bm{1}^T\cdot |J_u|&\leq 1,\ \forall u, \label{koon3}
\end{align} 
where $|\cdot|$ corresponds to the absolute value of the vector. 
We define matrix $M\in \mathbb{R}^{|\mathcal{X}|\times|\mathcal{Y}|}$, which is used in the remaining part, as follows: Let $V$ be the matrix of right eigenvectors of $P_{X|Y}$, i.e., $P_{X|Y}=U\Sigma V^T$ and $V=[v_1,\ v_2,\ ... ,\ v_{|\mathcal{Y}|}]$, then $M$ is defined as
\begin{align*}
M \triangleq \left[v_1,\ v_2,\ ... ,\ v_{|\mathcal{X}|}\right]^T.  
\end{align*}  
Similar to \cite[Proposition~2]{Khodam22}, we have the following result.
\begin{proposition}\label{prop222}
	In \eqref{main2}, it suffices to consider $U$ such that $|\mathcal{U}|\leq|\mathcal{Y}|$. Since the supremum in \eqref{main2} is achieved, we can replace the supremum by the maximum.
\end{proposition}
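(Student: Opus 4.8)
The plan is to prove both assertions with the Fenchel--Eggleston--Carath\'eodory support lemma: the cardinality bound follows by reducing the support of the optimizer while holding the utility and the privacy constraint fixed, and the replacement of the supremum by a maximum then follows from a compactness argument. First I would fix an arbitrary feasible kernel $P_{U|Y}$ (so that $X-Y-U$ holds) and view the induced law as a mixture over $u$ of the conditionals $p_u:=P_{Y|U}(\cdot|u)$ with weights $P_U(u)$, giving $P_Y=\sum_u P_U(u)\,p_u$. Because $P_{X|Y}$ is fixed, every relevant quantity is a function of the pairs $(P_U(u),p_u)$: the marginals $P_X,P_Y$, the utility $I(Y;U)=H(Y)-\sum_u P_U(u)H(p_u)$, and the per-letter leakage of \eqref{main22}.

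The reduction targets exactly $|\mathcal{Y}|$ continuous functionals: the $|\mathcal{Y}|-1$ independent coordinates of the marginal $P_Y$ (the last being redundant since distributions sum to one) together with the single scalar $\sum_u P_U(u)H(p_u)$, which pins down $H(Y|U)$ and hence $I(Y;U)$. Since the set of admissible atoms over which I apply the lemma will be connected, the Fenchel--Eggleston strengthening of Carath\'eodory yields a representation of this vector of $|\mathcal{Y}|$ target values using at most $|\mathcal{Y}|$ atoms. This produces a new kernel with $|\mathcal{U}|\le|\mathcal{Y}|$ that leaves $P_Y$ and $I(Y;U)$ unchanged, which is the claimed cardinality bound provided feasibility is also maintained.

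The key step, and the main obstacle, is maintaining feasibility, because in the weighted criterion the constraint $\lVert P_{X,U}(\cdot,u)-P_XP_U(u)\rVert_1\le\epsilon$ couples the conditional $p_u$ with its own weight $P_U(u)$; phrased directly in terms of $p_u$ the admissible set depends on $P_U(u)$, and the atoms produced by a naive support lemma could violate the constraint. I would resolve this by passing to the sub-distributions $r_u:=P_{Y,U}(\cdot,u)=P_U(u)\,p_u$, for which $P_U(u)=\lVert r_u\rVert_1$ (as $P_{X|Y}$ is column-stochastic) and the constraint becomes $\lVert P_{X|Y}r_u-P_X\lVert r_u\rVert_1\rVert_1\le\epsilon$. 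This is convex in $r_u$, since $r\mapsto P_{X|Y}r-P_X\lVert r\rVert_1$ is linear on the nonnegative orthant and the $\ell_1$-norm is convex; hence the feasible region is convex, therefore connected, and applying the support lemma over this set keeps each new atom feasible while respecting the decomposition $\sum_u r_u=P_Y$. The care needed here is precisely in carrying out the reduction in the $r_u$-coordinates rather than the conditional coordinates, so that both the utility and the coupled per-letter constraint are preserved simultaneously.

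Finally, for the attainment claim I would note that once $|\mathcal{U}|\le|\mathcal{Y}|$ is imposed, the feasible kernels form a closed and bounded subset of a finite-dimensional Euclidean space, the constraints in \eqref{main22} are closed conditions (total variation is continuous), and $I(Y;U)$ is continuous in $P_{U|Y}$; the feasible set is nonempty (a constant $U$ gives leakage zero). By compactness the continuous objective attains its supremum on this set, so the supremum may be replaced by a maximum.
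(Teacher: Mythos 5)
Your overall strategy coincides with the paper's: the paper proves this proposition by a one-line appeal to \cite[Proposition~2]{Khodam22}, whose mechanism is exactly the Fenchel--Eggleston--Carath\'eodory support lemma applied to the $|\mathcal{Y}|$ functionals you list ($|\mathcal{Y}|-1$ coordinates of the output distribution plus the conditional entropy), followed by the same compactness argument to replace the supremum by a maximum. Your functional count, your nonemptiness and closedness observations, and your attainment argument are all correct and match what the paper (implicitly) does; you also correctly read the proposition as concerning the weighted per-letter problem \eqref{main22} rather than \eqref{main2}, which is how it is used in the proof of Theorem~\ref{choon1}.

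The gap sits exactly at the step you single out as the crux. The support lemma reduces the support of a probability measure over a \emph{fixed} connected compact set of atoms while preserving finitely many \emph{averaged} functionals; its output is a collection of at most $|\mathcal{Y}|$ atoms from that set together with \emph{new} weights. Passing to the sub-distributions $r_u=P_{Y,U}(\cdot,u)$ does make the constraint $\lVert P_{X|Y}r_u-P_X(\bm{1}^{T}r_u)\rVert_1\le\epsilon$ a fixed convex condition, but it simultaneously turns the probability-weighted mixture $P_Y=\sum_u P_U(u)p_u$ into an additive decomposition $P_Y=\sum_u r_u$, which is not the object the support lemma acts on. If you renormalize to recover a genuine mixture, the admissible atom set becomes the whole (Markov-compatible) simplex again, since every conditional satisfies the weighted constraint once its weight is small enough, and the atoms returned by the lemma then carry new weights for which the per-letter bound $P_{U'}(u)\,d(P_{X|Y}p_u,P_X)\le\epsilon$ is not guaranteed. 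Hence the sentence ``applying the support lemma over this set keeps each new atom feasible while respecting the decomposition $\sum_u r_u=P_Y$'' is precisely the assertion that requires proof, and none is given. A complete argument would have to either run a conic version of Carath\'eodory directly on the $r_u$ and verify that the selected sub-distributions remain in the convex set $\{r\ge 0:\lVert P_{X|Y}r-P_X(\bm{1}^{T}r)\rVert_1\le\epsilon\}$ (which is not closed under upscaling, so this is not automatic), or exploit the concavity and $1$-homogeneity of the perspective entropy $r\mapsto -\sum_y r(y)\log\bigl(r(y)/\bm{1}^{T}r\bigr)$ to locate the optimum at extreme points of $\{(r_u):\ r_u\ \text{feasible},\ \sum_u r_u=P_Y\}$, in the spirit of the extreme-point argument used for the unweighted criterion in \cite{Khodam22}. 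The paper itself sidesteps the issue by absorbing the weight into the definition of the admissible set $\Psi$ via the factor $\epsilon/P_U(u)$, so your proposal is no less rigorous than the published proof, but it does not close this step either.
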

\begin{proof}
	The proof follows the similar lines as the proof of \cite[Proposition~2]{Khodam22}. The only difference is that the new convex and compact set is as follows
	\begin{align*}
	\Psi\!=\!\left\{\!y\in\mathbb{R}_{+}^{|\mathcal{Y}|}|My\!=\!MP_Y\!+\!\frac{\epsilon}{P_U(u)} M\!\!\begin{bmatrix}
	P_{X|Y_1}^{-1}J_u\\0
	\end{bmatrix}\!\!,J_u\in\mathcal{J} \!\right\}\!,
	\end{align*}
	where $\mathcal{J}=\{J\in\mathbb{R}^{|\mathcal{X}|}_{+}|\left\lVert J\right\rVert_1\leq 1,\ \bm{1}^{T}\cdot J=0\}$ and $\mathbb{R}_{+}$ corresponds to non-negative real numbers. Only non-zero weights $P_U(u)$ are considered since in the other case the corresponding $P_{Y|U}(\cdot|u)$ does not appear in $H(Y|U)$. 
\end{proof}
\begin{lemma}\label{madar1}
	If the Markov chain $X-Y-U$ holds, for sufficiently small $\epsilon$ and every $u\in\mathcal{U}$, the vector $P_{Y|U}(\cdot|u)$ lies in the following convex polytope
	\begin{align*}
	\mathbb{S}_{u} = \left\{y\in\mathbb{R}_{+}^{|\mathcal{Y}|}|My=MP_Y+\frac{\epsilon}{P_U(u)} M\begin{bmatrix}
	P_{X|Y_1}^{-1}J_u\\0
	\end{bmatrix}\right\},
	\end{align*}
	where $J_u$ satisfies \eqref{koon1}, \eqref{koon2}, and \eqref{koon3}. Furthermore, $P_U(u)>0$, otherwise $P_{Y|U}(\cdot|u)$ does not appear in $I(Y;U)$.
\end{lemma}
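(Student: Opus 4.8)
The plan is to use the Markov chain $X-Y-U$ to convert the weighted strong $\ell_1$-privacy constraint on the pair $(X,U)$ into an affine constraint on the vector $P_{Y|U}(\cdot|u)$, and then to re-express that constraint through $M$ by invoking the SVD of $P_{X|Y}$. First I would exploit the Markov chain, which gives $P_{X|U}(\cdot|u)=P_{X|Y}P_{Y|U}(\cdot|u)$, together with the marginal identity $P_X=P_{X|Y}P_Y$. Multiplying the first relation by $P_U(u)$ and subtracting $P_XP_U(u)$ yields $P_{X,U}(\cdot,u)-P_XP_U(u)=P_U(u)\,P_{X|Y}\bigl(P_{Y|U}(\cdot|u)-P_Y\bigr)$. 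On the other hand, for sufficiently small $\epsilon$ the weighted privacy criterion permits the perturbation decomposition $P_{X,U}(\cdot,u)-P_XP_U(u)=\epsilon J_u$, where $J_u$ obeys \eqref{koon1}, \eqref{koon2}, and \eqref{koon3}, the last of these coming from $\lVert P_{X,U}(\cdot,u)-P_XP_U(u)\rVert_1\le\epsilon$. Since only indices $u$ with $P_U(u)>0$ enter $I(Y;U)$, I may divide and obtain $P_{X|Y}\bigl(P_{Y|U}(\cdot|u)-P_Y\bigr)=\frac{\epsilon}{P_U(u)}J_u$.

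Next I would construct the particular solution that appears inside $\mathbb{S}_u$. Because $|\mathcal{X}|<|\mathcal{Y}|$ and $P_{X|Y}$ has full row rank, I partition the columns as $P_{X|Y}=[P_{X|Y_1},P_{X|Y_2}]$ with $P_{X|Y_1}$ an invertible $|\mathcal{X}|\times|\mathcal{X}|$ block, and set $w=\begin{bmatrix}P_{X|Y_1}^{-1}J_u\\0\end{bmatrix}$, so that $P_{X|Y}w=P_{X|Y_1}P_{X|Y_1}^{-1}J_u=J_u$. The constraint from the previous paragraph then reads $P_{X|Y}\bigl(P_{Y|U}(\cdot|u)-P_Y-\frac{\epsilon}{P_U(u)}w\bigr)=0$, i.e. the residual vector lies in the kernel of $P_{X|Y}$.

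The crux — the step I expect to be the main obstacle — is passing from the kernel of $P_{X|Y}$ to the matrix $M$. Writing the SVD $P_{X|Y}=U\Sigma V^T$ with $V=[v_1,\dots,v_{|\mathcal{Y}|}]$, full row rank forces the $|\mathcal{X}|$ nonzero singular values to be exactly those indexed by $1,\dots,|\mathcal{X}|$, so the rows $v_1,\dots,v_{|\mathcal{X}|}$ of $M$ span the row space of $P_{X|Y}$. Hence $\ker M=\ker P_{X|Y}$, and for any vector $z$ one has $P_{X|Y}z=0$ if and only if $Mz=0$. Applying this equivalence to $z=P_{Y|U}(\cdot|u)-P_Y-\frac{\epsilon}{P_U(u)}w$ converts the kernel condition into $M P_{Y|U}(\cdot|u)=MP_Y+\frac{\epsilon}{P_U(u)}M\begin{bmatrix}P_{X|Y_1}^{-1}J_u\\0\end{bmatrix}$, which is precisely the defining equation of $\mathbb{S}_u$.

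Finally, since $P_{Y|U}(\cdot|u)$ is a genuine conditional distribution it lies in $\mathbb{R}_+^{|\mathcal{Y}|}$, so it satisfies both the nonnegativity requirement and the affine equation of $\mathbb{S}_u$, establishing the claimed membership; the caveat $P_U(u)>0$ is exactly what makes the division legitimate and matches the remark that a zero-weight $u$ does not contribute to $I(Y;U)$. In the write-up I would state the kernel identity $\ker M=\ker P_{X|Y}$ as the one genuinely non-routine algebraic fact, since everything else is bookkeeping with the perturbation vector $J_u$ and the Markov relation $P_{X|U}(\cdot|u)=P_{X|Y}P_{Y|U}(\cdot|u)$.
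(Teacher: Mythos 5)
Your proposal is correct and follows essentially the same route as the paper: the Markov chain plus the perturbation decomposition $P_{X,U}(\cdot,u)-P_XP_U(u)=\epsilon J_u$ gives $P_{X|Y}\bigl(P_{Y|U}(\cdot|u)-P_Y\bigr)=\tfrac{\epsilon}{P_U(u)}J_u$, and the paper then invokes exactly the two facts you identify — the particular solution $\bigl[\,(P_{X|Y_1}^{-1}J_u)^T\ 0\,\bigr]^T$ built from the invertible block and the identity $\mathrm{Null}(M)=\mathrm{Null}(P_{X|Y})$ (cited there as Lemmas 1 and 2 of the earlier work rather than re-derived). Your presentation is marginally more streamlined in that you apply the kernel equivalence directly to the residual instead of first writing the general solution as particular-plus-null-space, but the substance is identical.
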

\begin{proof}
	Using the Markov chain $X-Y-U$, we have
	\begin{align*}
	P_{X|U=u}-P_X=P_{X|Y}[P_{Y|U=u}-P_Y]=\epsilon \frac{J_u}{P_U(u)}.
	\end{align*}
	Thus, by following the similar lines as \cite[Lemma~2]{Khodam22} and using the properties of Null($M$) as \cite[Lemma~1]{Khodam22}, we have
	\begin{align*}
	MP_{Y|U}(\cdot|u)=MP_Y+\frac{\epsilon}{P_U(u)} M\begin{bmatrix}
	P_{X|Y_1}^{-1}J_u\\0
	\end{bmatrix}.
	\end{align*} 
\end{proof}
By using the same arguments as \cite[Lemma~3]{Khodam22}, it can be shown that any vector inside $\mathbb{S}_{u}$ is a standard probability vector. Thus, by using \cite[Lemma~3]{Khodam22} and Lemma~2 we have following result.
\begin{theorem}
	We have the following equivalency 
	\begin{align}\label{equii}
	\min_{\begin{array}{c} 
		\substack{P_{U|Y}:X-Y-U\\ d(P_{X,U}(\cdot,u),P_XP_U(u))\leq\epsilon,\ \forall u\in\mathcal{U}}
		\end{array}}\! \! \! \!\!\!\!\!\!\!\!\!\!\!\!\!\!\!\!H(Y|U) =\!\!\!\!\!\!\!\!\! \min_{\begin{array}{c} 
		\substack{P_U,\ P_{Y|U=u}\in\mathbb{S}_u,\ \forall u\in\mathcal{U},\\ \sum_u P_U(u)P_{Y|U=u}=P_Y,\\ J_u \text{satisfies}\ \eqref{koon1},\ \eqref{koon2},\ \text{and}\ \eqref{koon3}}
		\end{array}} \!\!\!\!\!\!\!\!\!\!\!\!\!\!\!\!\!\!\!H(Y|U).
	\end{align}
\end{theorem}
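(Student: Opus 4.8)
The plan is to prove the equality by exhibiting an objective-preserving correspondence between the feasible sets of the two minimizations: both minimize the same functional $H(Y|U)$, which depends only on $P_U$ and the vectors $P_{Y|U=u}$, so it suffices to show that each feasible point of one problem maps to a feasible point of the other with identical $P_U$ and $\{P_{Y|U=u}\}_u$. I would establish the two inequalities separately, throughout working in the small-$\epsilon$ regime of Lemma~\ref{madar1}.

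\emph{The right-hand minimum is at most the left-hand minimum.} I would take any $P_{U|Y}$ feasible for the left problem. The Markov chain $X-Y-U$ together with the constraint $d(P_{X,U}(\cdot,u),P_XP_U(u))\leq\epsilon$ lets me set $J_u\triangleq\big(P_{X,U}(\cdot,u)-P_XP_U(u)\big)/\epsilon$ for each $u$, and a direct check shows that this $J_u$ satisfies \eqref{koon1}, \eqref{koon2}, and \eqref{koon3}: the first two follow from $\sum_x P_{X,U}(x,u)=P_U(u)$ and $\sum_u P_{X,U}(x,u)=P_X(x)$, while \eqref{koon3} is precisely the privacy constraint divided by $\epsilon$. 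By Lemma~\ref{madar1}, $P_{Y|U}(\cdot|u)\in\mathbb{S}_u$ for every $u$, and the constraint $\sum_u P_U(u)P_{Y|U=u}=P_Y$ holds by the law of total probability. Hence $(P_U,\{P_{Y|U=u}\}_u,\{J_u\}_u)$ is feasible for the right problem and attains the same value $H(Y|U)$; minimizing over left-feasible points gives the inequality.

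\emph{The left-hand minimum is at most the right-hand minimum.} I would start from any right-feasible triple $(P_U,\{P_{Y|U=u}\}_u,\{J_u\}_u)$ and reconstruct a channel via $P_{U|Y}(u|y)\triangleq P_U(u)P_{Y|U=u}(y)/P_Y(y)$. This is well defined and non-negative because each element of $P_Y$ is assumed non-zero and, by \cite[Lemma~3]{Khodam22}, every vector in $\mathbb{S}_u$ is a standard probability vector; it is a valid kernel since $\sum_u P_{U|Y}(u|y)=1$ follows from $\sum_u P_U(u)P_{Y|U=u}=P_Y$. By construction $U$ depends on $(X,Y)$ only through $Y$, so the Markov chain $X-Y-U$ holds. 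It then remains to verify the privacy constraint: using $P_{Y|U=u}\in\mathbb{S}_u$ together with the null-space characterization of $M$ from \cite[Lemma~1]{Khodam22}, the vector $P_{Y|U=u}-P_Y-\tfrac{\epsilon}{P_U(u)}\begin{bmatrix}P_{X|Y_1}^{-1}J_u\\ \bm{0}\end{bmatrix}$ lies in $\mathrm{Null}(M)$, and applying $P_{X|Y}$ annihilates this null-space component while mapping the particular preimage $\begin{bmatrix}P_{X|Y_1}^{-1}J_u\\ \bm{0}\end{bmatrix}$ back to $J_u$. This yields $P_{X|U=u}-P_X=P_{X|Y}(P_{Y|U=u}-P_Y)=\tfrac{\epsilon}{P_U(u)}J_u$, whence $d(P_{X,U}(\cdot,u),P_XP_U(u))=\epsilon\lVert J_u\rVert_1\leq\epsilon$ by \eqref{koon3}. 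The constructed channel is therefore left-feasible with objective $H(Y|U)$, giving the reverse inequality.

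The main obstacle I anticipate is the privacy verification in the second direction, namely faithfully inverting the polytope constraint defining $\mathbb{S}_u$ back into the leakage statement $P_{X|U=u}-P_X=\tfrac{\epsilon}{P_U(u)}J_u$. This is exactly where the structural properties of $M$ and of $\mathrm{Null}(M)$ established in \cite[Lemma~1]{Khodam22} (and already used in the proof of Lemma~\ref{madar1}) are indispensable, since without them one cannot conclude that the abstract constraint on $MP_{Y|U=u}$ is equivalent to the $\ell_1$ leakage bound; one must also be careful that the reconstructed $P_{Y|U=u}$ are genuine probability vectors, which is guaranteed by \cite[Lemma~3]{Khodam22}. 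As all of these ingredients are available from \cite{Khodam22} and Lemma~\ref{madar1}, the two inequalities combine to yield the asserted equivalence \eqref{equii}.
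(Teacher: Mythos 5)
Your proposal is correct and follows essentially the same route as the paper: the forward inclusion via Lemma~\ref{madar1} and the reverse inclusion via the fact that every element of $\mathbb{S}_u$ is a standard probability vector together with the null-space characterization of $M$ from \cite[Lemmas~1 and 3]{Khodam22}, which recovers $P_{X|U=u}-P_X=\tfrac{\epsilon}{P_U(u)}J_u$ and hence the weighted $\ell_1$ constraint. You simply spell out the two-sided correspondence that the paper compresses into a citation of those lemmas.
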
  
Furthermore, similar to \cite[Proposition~3]{Khodam22}, it can be shown that the minimum of $H(Y|U)$ occurs at the extreme points of the sets $\mathbb{S}_{u}$, i.e., for each $u\in \mathcal{U}$, $P_{Y|U}^*(\cdot|u)$ that minmizes $H(Y|U)$ must belong to the extreme points of $\mathbb{S}_{u}$. To find the extreme points of $\mathbb{S}_{u}$ let $\Omega$ be the set of indices which correspond to $|\mathcal{X}|$ linearly independent columns of $M$, i.e., $|\Omega|=|\mathcal{X}|$ and $\Omega\subset \{1,..,|\mathcal{Y}|\}$. Let $M_{\Omega}\in\mathbb{R}^{|\mathcal{X}|\times|\mathcal{X}|}$ be the submatrix of $M$ with columns indexed by the set $\Omega$. Assume that $\Omega = \{\omega_1,..,\omega_{|\mathcal{X}|}\}$, where $\omega_i\in\{1,..,|\mathcal{Y}|\}$ and all elements are arranged in an increasing order. The $\omega_i$-th element of the extreme point $V_{\Omega}^*$ can be found as $i$-th element of $M_{\Omega}^{-1}(MP_Y+\frac{\epsilon}{P_U(u)} M\begin{bmatrix}
P_{X|Y_1}^{-1}J_u\\0\end{bmatrix})$, i.e., for $1\leq i \leq |\mathcal{X}|$ we have
\begin{align}\label{defin1}
V_{\Omega}^*(\omega_i)= \left(M_{\Omega}^{-1}MP_Y+\frac{\epsilon}{P_U(u)} M_{\Omega}^{-1}M\begin{bmatrix}
P_{X|Y_1}^{-1}J_u\\0\end{bmatrix}\right)(i).
\end{align}
Other elements of $V_{\Omega}^*$ are set to be zero. Now we approximate the entropy of $V_{\Omega}^*$.
\begin{proposition}\label{koonkos}
	Let $V_{\Omega_u}^*$ be an extreme point of the set $\mathbb{S}_u$, then we have
	\begin{align*}
	H(P_{Y|U=u}) &=\sum_{y=1}^{|\mathcal{Y}|}-P_{Y|U=u}(y)\log(P_{Y|U=u}(y))\\&=-(b_u+\frac{\epsilon}{P_{U}(u)} a_uJ_u)+o(\epsilon),
	\end{align*}
	with $b_u = l_u \left(M_{\Omega_u}^{-1}MP_Y\right),\ 
	a_u = l_u\left(M_{\Omega_u}^{-1}M(1\!\!:\!\!|\mathcal{X}|)P_{X|Y_1}^{-1}\right)\in\mathbb{R}^{1\times|\mathcal{X}|},\
	l_u = \left[\log\left(M_{\Omega_u}^{-1}MP_{Y}(i)\right)\right]_{i=1:|\mathcal{X}|}\in\mathbb{R}^{1\times|\mathcal{X}|},
	$ and $M_{\Omega_u}^{-1}MP_{Y}(i)$ stands for $i$-th ($1\leq i\leq |\mathcal{X}|$) element of the vector $M_{\Omega_u}^{-1}MP_{Y}$. Furthermore, $M(1\!\!:\!\!|\mathcal{X}|)$ stands for submatrix of $M$ with first $|\mathcal{X}|$ columns.
\end{proposition}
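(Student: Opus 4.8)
The plan is to treat the extreme point $V_{\Omega_u}^*$ as an affine perturbation of its $\epsilon=0$ value and to Taylor-expand the entropy in $\epsilon$. First I would note from \eqref{defin1} that the only nonzero coordinates of $V_{\Omega_u}^*$ are the $\omega_i$-th ones for $1\le i\le|\mathcal{X}|$, so that, with the convention $0\log 0=0$,
\[
H(P_{Y|U=u})=-\sum_{i=1}^{|\mathcal{X}|}p_i\log p_i,\qquad p_i\triangleq V_{\Omega_u}^*(\omega_i).
\]
Since the last $|\mathcal{Y}|-|\mathcal{X}|$ columns of $M$ multiply the zero block, I would rewrite $M\begin{bmatrix}P_{X|Y_1}^{-1}J_u\\0\end{bmatrix}=M(1\!\!:\!\!|\mathcal{X}|)P_{X|Y_1}^{-1}J_u$ and hence split each coordinate as $p_i=c_i+\tfrac{\epsilon}{P_U(u)}d_i$, where $c_i=(M_{\Omega_u}^{-1}MP_Y)(i)$ and $d_i=(M_{\Omega_u}^{-1}M(1\!\!:\!\!|\mathcal{X}|)P_{X|Y_1}^{-1}J_u)(i)$.

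Next I would expand $-p_i\log p_i$ around $\epsilon=0$. As the $c_i$ are the strictly positive coordinates of the (non-degenerate) unperturbed extreme point, for sufficiently small $\epsilon$ we have $p_i>0$ and $\log\!\left(c_i+\tfrac{\epsilon}{P_U(u)}d_i\right)=\log c_i+\tfrac{\epsilon}{P_U(u)}\tfrac{d_i}{c_i}+o(\epsilon)$. Multiplying out and absorbing the $O(\epsilon^2)$ term into $o(\epsilon)$ gives
\[
-p_i\log p_i=-c_i\log c_i-\frac{\epsilon}{P_U(u)}\,d_i\bigl(1+\log c_i\bigr)+o(\epsilon).
\]
Summing over $i$, the zeroth-order term is $-\sum_i c_i\log c_i=-l_u\bigl(M_{\Omega_u}^{-1}MP_Y\bigr)=-b_u$ by the definitions of $l_u$ and $b_u$.

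The step I would flag as the crux is the disposal of the stray contribution $-\tfrac{\epsilon}{P_U(u)}\sum_i d_i$. For this I would invoke the fact, established by the \cite[Lemma~3]{Khodam22} argument quoted just before the proposition together with Lemma~\ref{madar1}, that every vector of $\mathbb{S}_u$—in particular the extreme point $V_{\Omega_u}^*$—is a standard probability vector, so $\sum_i p_i=1$. Because $\sum_i p_i=\sum_i c_i+\tfrac{\epsilon}{P_U(u)}\sum_i d_i$ is affine in $\epsilon$ and equals $1$ for every small $\epsilon$ (the support $\Omega_u$ being fixed along this family), the $\epsilon$-coefficient must vanish, i.e. $\sum_i d_i=0$ (and $\sum_i c_i=1$). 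With this the first-order part collapses to
\[
-\frac{\epsilon}{P_U(u)}\sum_i d_i\log c_i=-\frac{\epsilon}{P_U(u)}\,l_u\bigl(M_{\Omega_u}^{-1}M(1\!\!:\!\!|\mathcal{X}|)P_{X|Y_1}^{-1}\bigr)J_u=-\frac{\epsilon}{P_U(u)}\,a_uJ_u,
\]
and collecting the two orders yields the claimed expansion $H(P_{Y|U=u})=-\bigl(b_u+\tfrac{\epsilon}{P_U(u)}a_uJ_u\bigr)+o(\epsilon)$.

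The only remaining technical care is the uniformity of the remainder: since there are finitely many coordinates and each $c_i$ is bounded away from $0$, the per-coordinate $O(\epsilon^2)$ errors combine into a single $o(\epsilon)$ term. I would also emphasize that the strict positivity of the $c_i$ is precisely what legitimizes the logarithm's expansion, and that this is the same ``sufficiently small $\epsilon$'' regime under which Lemma~\ref{madar1} and the extreme-point characterization in \eqref{defin1} were derived.
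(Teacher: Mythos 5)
Your argument is correct and is essentially the paper's: the stated proof defers to a first-order Taylor expansion of $\log(1+x)$ applied to the nonzero coordinates $c_i+\frac{\epsilon}{P_U(u)}d_i$ of the extreme point in \eqref{defin1}, which is exactly the expansion you carry out, with the zeroth-order term giving $-b_u$ and the first-order term giving $-\frac{\epsilon}{P_U(u)}a_uJ_u$. The only step you handle differently is the vanishing of $\sum_i d_i$: you deduce it from the normalization of the extreme points as probability vectors (affine in $\epsilon$ and identically equal to one), whereas the paper obtains it from the algebraic identity $\bm{1}^T M_{\Omega_u}^{-1}M(1\!:\!|\mathcal{X}|)P_{X|Y_1}^{-1}J_u=\bm{1}^TJ_u=0$ via the null-space characterization of $M$; both are valid and lead to the same conclusion.
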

\begin{proof}
	The proof follows similar lines as \cite[Lemma~4]{Khodam22} and is based on first order Taylor expansion of $\log(1+x)$.
\end{proof}
By using Proposition \ref{koonkos} we can approximate \eqref{main2} as follows.
\begin{proposition}\label{baghal}
	For sufficiently small $\epsilon$, the minimization problem in \eqref{equii} can be approximated as follows
	\begin{align}\label{kospa}
	&\min_{P_U(.),\{J_u, u\in\mathcal{U}\}} -\left(\sum_{u=1}^{|\mathcal{Y}|} P_U(u)b_u+\epsilon a_uJ_u\right)\\\nonumber
	&\text{subject to:}\\\nonumber
	&\sum_{u=1}^{|\mathcal{Y}|} P_U(u)V_{\Omega_u}^*=P_Y,\ \sum_{u=1}^{|\mathcal{Y}|} J_u=0,\ P_U\in \mathbb{R}_{+}^{|\cal Y|},\\\nonumber
	&\bm{1}^T |J_u|\leq 1,\  \bm{1}^T\cdot J_u=0,\ \forall u\in\mathcal{U},
	\end{align} 
	where $a_u$ and $b_u$ are defined in Proposition~\ref{koonkos}.
\end{proposition}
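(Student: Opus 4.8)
The plan is to start from the equivalent minimization in \eqref{equii} and reduce it, term by term, to the linear program \eqref{kospa} by combining the extreme-point characterization with the first-order entropy expansion of Proposition~\ref{koonkos}. First I would invoke the fact (the analogue of \cite[Proposition~3]{Khodam22} stated just above) that the minimizer of $H(Y|U)$ over the feasible set of \eqref{equii} is attained at the extreme points of the polytopes $\mathbb{S}_u$, so that without loss of optimality we may set $P_{Y|U}(\cdot|u)=V_{\Omega_u}^*$ for each $u\in\mathcal{U}$, with $V_{\Omega_u}^*$ given by \eqref{defin1}. By Proposition~\ref{prop222} it also suffices to take $|\mathcal{U}|\le|\mathcal{Y}|$, which fixes the dimension of the optimization variable $P_U\in\mathbb{R}_{+}^{|\mathcal{Y}|}$.

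Next I would expand the objective as $H(Y|U)=\sum_{u}P_U(u)\,H(P_{Y|U=u})$ and substitute the approximation of Proposition~\ref{koonkos}, namely $H(P_{Y|U=u})=-(b_u+\tfrac{\epsilon}{P_U(u)}a_uJ_u)+o(\epsilon)$. The key observation is that multiplying by the weight $P_U(u)$ cancels the factor $1/P_U(u)$ in front of the perturbation term, giving $P_U(u)H(P_{Y|U=u})=-P_U(u)b_u-\epsilon\,a_uJ_u+P_U(u)o(\epsilon)$. This cancellation --- which is precisely why the \emph{weighted} strong $\ell_1$-privacy criterion is used in \eqref{main22} rather than the unweighted one --- makes the summed objective affine in the pair $(P_U,\{J_u\})$. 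Summing over $u$ and using $\sum_u P_U(u)=1$ to aggregate the error terms yields
\begin{align*}
H(Y|U)=-\left(\sum_{u=1}^{|\mathcal{Y}|}P_U(u)b_u+\epsilon\sum_{u=1}^{|\mathcal{Y}|}a_uJ_u\right)+o(\epsilon),
\end{align*}
and dropping the $o(\epsilon)$ remainder produces exactly the objective of \eqref{kospa}.

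Finally I would translate the feasibility constraints. Substituting $P_{Y|U}(\cdot|u)=V_{\Omega_u}^*$ into the marginal constraint $\sum_u P_U(u)P_{Y|U=u}=P_Y$ gives $\sum_u P_U(u)V_{\Omega_u}^*=P_Y$; here, because of \eqref{defin1}, the weight again cancels the $1/P_U(u)$ in the perturbation term, so this constraint is affine in $(P_U,\{J_u\})$. The conditions \eqref{koon1}, \eqref{koon2}, \eqref{koon3} on the perturbation vectors carry over verbatim as $\bm{1}^T\cdot J_u=0$, $\sum_u J_u=\bm{0}$, and $\bm{1}^T\cdot|J_u|\le 1$, while $P_U\in\mathbb{R}_{+}^{|\mathcal{Y}|}$ is the probability-vector condition. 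Collecting the affine objective with these affine constraints gives the linear program \eqref{kospa}.

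The main obstacle I expect is the uniform control of the $o(\epsilon)$ remainder. Proposition~\ref{koonkos}'s expansion relies on the first-order Taylor approximation of $\log(1+x)$ with $x\propto \epsilon/P_U(u)$, so it is accurate only when $\epsilon/P_U(u)$ is small; for weights $P_U(u)$ comparable to $\epsilon$ the per-letter remainder is not genuinely $o(\epsilon)$. I would handle this exactly as in \cite{Khodam22}, arguing that letters with vanishing weight contribute negligibly to both $H(Y|U)$ and the constraints (as already noted after Lemma~\ref{madar1} and Proposition~\ref{prop222}), so that the expansion holds uniformly over the effective support and the aggregated error remains $o(\epsilon)$ for sufficiently small $\epsilon$.
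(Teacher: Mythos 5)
Your proposal is correct and follows essentially the same route as the paper: restrict to extreme points $V_{\Omega_u}^*$ of the sets $\mathbb{S}_u$, substitute the first-order entropy expansion of Proposition~\ref{koonkos} into $H(Y|U)=\sum_u P_U(u)H(P_{Y|U=u})$ so that the weight $P_U(u)$ cancels the $1/P_U(u)$ in the perturbation term, and carry the constraints \eqref{koon1}--\eqref{koon3} over verbatim. Your added remarks on why the weighting makes the objective affine and on uniformly controlling the $o(\epsilon)$ remainder for small-weight letters are more explicit than the paper's one-line argument but do not change the approach.
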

\begin{proof}
	The proof follows directly from Proposition~\ref{koonkos} and the fact that the minimum of $H(Y|U)$ occurs at the extreme points of the sets $\mathbb{S}_u$. Thus, for $P_{Y|U=u}=V_{\Omega_u}^*,\ u\in\{1,..,|\mathcal{Y}|\}$, where $V_{\Omega_u}^*$ is defined in \eqref{defin1}, $H(Y|U)$ can be approximated as follows
	\begin{align*}
	H(Y|U) \!= \sum_u\! P_U(u)H(P_{Y|U=u})\!\cong \sum_{u=1}^{|\mathcal{Y}|}\! P_U(u)b_u\!+\!\epsilon a_uJ_u.
	\end{align*} 
\end{proof}
By using the vector
$\eta_u=P_U(u)\left(M_{\Omega_u}^{-1}MP_Y\right)+\epsilon \left(M_{\Omega_u}^{-1}M(1:|\mathcal{X}|)P_{X|Y_1}^{-1}\right)(J_u)$ for all $u\in \mathcal{U}$, where $\eta_u\in\mathbb{R}^{|\mathcal{X}|}$, we can write \eqref{kospa} as a linear program. The vector $\eta_u$ corresponds to multiple of non-zero elements of the extreme point $V_{\Omega_u}^*$, furthermore, $P_U(u)$ and $J_u$ can be uniquely found as
\begin{align*}
P_U(u)&=\bm{1}^T\cdot \eta_u,\\
J_u&=\frac{P_{X|Y_1}M(1:|\mathcal{X}|)^{-1}M_{\Omega_u}[\eta_u\!-\!(\bm{1}^T \eta_u)M_{\Omega_u}^{-1}MP_Y]}{\epsilon}.
\end{align*}
By solving the linear program we obtain $P_U$ and $J_u$ for all $u$, thus, $P_{Y|U}(\cdot|u)$ can be computed using \eqref{defin1}.
\begin{lemma}\label{lg1}
	Let $P_{U|Y}^*$ be found by the linear program, which solves \eqref{kospa}, and let $I(U^*;Y)$ be evaluated by this kernel. Then we have
	\begin{align*}
	g_{\epsilon}^{w\ell}(P_{XY})\geq I(U^*;Y) = L_{g^{w\ell}}^1(\epsilon).
	\end{align*}
\end{lemma}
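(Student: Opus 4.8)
The plan is to exploit the fact that $g_{\epsilon}^{w\ell}(P_{XY})$ is defined in \eqref{main22} as a maximization of $I(Y;U)$ over all kernels $P_{U|Y}$ obeying the Markov chain $X-Y-U$ together with the weighted strong $\ell_1$-privacy criterion. Consequently, the value $I(Y;U)$ attained by \emph{any} feasible kernel is automatically a lower bound on $g_{\epsilon}^{w\ell}(P_{XY})$. The whole content of the lemma is therefore to certify that the kernel $P_{U|Y}^*$ returned by the linear program \eqref{kospa} is feasible for the original problem; the claimed equality $I(U^*;Y)=L_{g^{w\ell}}^1(\epsilon)$ is simply the \emph{definition} of $L_{g^{w\ell}}^1(\epsilon)$ as the exact mutual information evaluated at this kernel. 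Note that the approximation from Proposition~\ref{baghal} enters only through the objective (the first-order entropy expansion), so the optimizer of the linear program need not optimize the true $I(Y;U)$; feasibility alone suffices for a valid bound.

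First I would reconstruct the kernel from an optimal solution of \eqref{kospa}. Solving the linear program yields $P_U$ together with the perturbation vectors $\{J_u\}_{u\in\mathcal{U}}$; from these, each conditional $P_{Y|U}(\cdot|u)=V_{\Omega_u}^*$ is read off via \eqref{defin1}, and the kernel is recovered as $P_{U|Y}^*(u|y)=V_{\Omega_u}^*(y)\,P_U(u)/P_Y(y)$. By \cite[Lemma~3]{Khodam22}, for sufficiently small $\epsilon$ each extreme point $V_{\Omega_u}^*$ is a standard probability vector, so this reconstruction is a bona fide stochastic kernel and the marginal constraint $\sum_u P_U(u)V_{\Omega_u}^*=P_Y$ enforced in \eqref{kospa} guarantees consistency with $P_Y$.

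Next I would verify feasibility directly. The Markov chain $X-Y-U$ holds by construction, since $P_{U|Y}^*$ generates $U$ from $Y$ alone. For the privacy constraint, the exact representation $P_{X,U}(\cdot,u)=P_XP_U(u)+\epsilon J_u$ together with \eqref{koon3} gives
\begin{align*}
d\!\left(P_{X,U}(\cdot,u),P_XP_U(u)\right)=\left\lVert \epsilon J_u\right\rVert_1=\epsilon\,\bm{1}^T|J_u|\leq\epsilon,\quad\forall u,
\end{align*}
so the weighted strong $\ell_1$-privacy criterion is met \emph{exactly}. Hence $P_{U|Y}^*$ is a feasible point of \eqref{main22}, and since every feasible kernel lower-bounds the supremum, $g_{\epsilon}^{w\ell}(P_{XY})\geq I(U^*;Y)$; Proposition~\ref{prop222} moreover guarantees that restricting to $|\mathcal{U}|\leq|\mathcal{Y}|$ is without loss and that the supremum is attained.

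The main obstacle is the reconstruction step: one must ensure that the vectors $V_{\Omega_u}^*$ coming out of the linear program are genuine probability distributions and that they reproduce \emph{exactly} the relation $P_{X|U=u}-P_X=\epsilon J_u/P_U(u)$ rather than merely an approximation of it. This rests on the small-$\epsilon$ regime and on the structure of $M$ and $\mathrm{Null}(M)$ established in Lemma~\ref{madar1} and \cite[Lemma~1, Lemma~3]{Khodam22}; outside this regime the extreme points need not remain inside the probability simplex and feasibility could fail, which is precisely why the statement is restricted to sufficiently small $\epsilon<\sqrt{2I(X;Y)}$.
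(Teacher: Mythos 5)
Your proposal is correct and takes essentially the same route as the paper, whose proof is the one-line observation that the kernel $P_{U|Y}^*$ produced by the linear program satisfies the constraints of the original problem and is therefore feasible, so its exact mutual information lower-bounds the supremum. Your additional verification — reconstructing the kernel from $(P_U,\{J_u\})$, checking that $d(P_{X,U}(\cdot,u),P_XP_U(u))=\epsilon\,\bm{1}^T|J_u|\leq\epsilon$ via \eqref{koon3}, and noting that only feasibility (not optimality of the approximate objective) is needed — simply makes explicit what the paper leaves implicit.
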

\begin{proof}
	The proof directly follows since the kernel $P_{U|Y}^*$ that achieves the approximate solution satisfies the constraints in \eqref{main2}.
\end{proof}
In the next result we present lower and upper bounds of $g_{\epsilon}^{w\ell}(P_{XY})$ and $h_{\epsilon}^{w\ell}(P_{XY})$.
\begin{theorem}\label{choon1}
	For sufficiently small $\epsilon\geq 0$ and any pair of RVs $(X,Y)$ distributed according to $P_{XY}$ supported on alphabets $\mathcal{X}$ and $\mathcal{Y}$ we have
	\begin{align*}
	L_{g^{w\ell}}^1(\epsilon)\leq g_{\epsilon}^{w\ell}(P_{XY}),
	\end{align*}	
	and for any $\epsilon\geq 0$ we obtain
	\begin{align*}
	g_{\epsilon}^{w\ell}(P_{XY})&\leq \frac{\epsilon|\mathcal{Y}||\mathcal{X}|}{\min P_X}+H(Y|X)=U_{g^{w\ell}}(\epsilon),\\
	g_{\epsilon}^{w\ell}(P_{XY})&\leq h_{\epsilon}^{w\ell}(P_{XY}).
	\end{align*}
	Furthermore, for any $0\leq \epsilon\leq \sqrt{2I(X;Y)}$ we have
	\begin{align*}
	\max\{L_{h^{w\ell}}^{1}(\epsilon),L_{h^{w\ell}}^{2}(\epsilon)\}\leq h_{\epsilon}^{w\ell}(P_{XY}),
	\end{align*}
	where $L_{h^{w\ell}}^{1}(\epsilon)$ and $L_{h^{w\ell}}^{2}(\epsilon)$ are defined in Proposition~\ref{prop111}.
\end{theorem}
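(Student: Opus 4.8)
The plan is to assemble the four displayed bounds from results already established, the only genuinely new content being the upper bound $U_{g^{w\ell}}(\epsilon)$. The lower bound $L_{g^{w\ell}}^1(\epsilon)\le g_{\epsilon}^{w\ell}(P_{XY})$ is exactly Lemma~\ref{lg1}: the kernel $P_{U|Y}^*$ returned by the linear program \eqref{kospa} is feasible for \eqref{main22}, so $I(U^*;Y)$ is an achievable value and hence a lower bound, with the "sufficiently small $\epsilon$" caveat inherited from the validity of the perturbation construction in Proposition~\ref{baghal}. The inequality $g_{\epsilon}^{w\ell}(P_{XY})\le h_{\epsilon}^{w\ell}(P_{XY})$ follows from a feasible-set inclusion: every $P_{U|Y}$ satisfying the Markov chain $X-Y-U$ together with the weighted criterion is a valid choice of $P_{U|Y,X}$ obeying the same per-letter constraint, so $h_{\epsilon}^{w\ell}$ optimizes over a larger set. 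Finally, $\max\{L_{h^{w\ell}}^{1}(\epsilon),L_{h^{w\ell}}^{2}(\epsilon)\}\le h_{\epsilon}^{w\ell}(P_{XY})$ is precisely Proposition~\ref{prop111} restated.

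For the upper bound I would start from the identity \eqref{key}. Since $P_{U|Y}$ enforces the Markov chain $X-Y-U$ we have $I(X;U|Y)=0$, and dropping the nonnegative term $H(Y|U,X)$ gives $I(Y;U)\le I(X;U)+H(Y|X)$. It therefore suffices to bound $I(X;U)$ \emph{linearly} in $\epsilon$ under the weighted strong $\ell_1$-criterion. Writing $I(X;U)=\sum_u P_U(u)D(P_{X|U}(\cdot|u),P_X)$ and applying the reverse Pinsker inequality exactly as in Proposition~\ref{trash3} yields $I(X;U)\le \frac{1}{\min P_X}\sum_u P_U(u)\,d(P_{X|U}(\cdot|u),P_X)^2$, which is well defined because every entry of $P_X$ is assumed nonzero.

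The key step is to spend exactly one factor of $d$ against the constraint and bound the remaining factor crudely. From the weighted criterion, $P_U(u)\,d(P_{X|U}(\cdot|u),P_X)=d(P_{X,U}(\cdot,u),P_XP_U(u))\le\epsilon$, so $P_U(u)\,d(P_{X|U}(\cdot|u),P_X)^2\le \epsilon\,d(P_{X|U}(\cdot|u),P_X)\le \epsilon|\mathcal{X}|$, where the last estimate uses that $d(P_{X|U}(\cdot|u),P_X)$ is a sum of $|\mathcal{X}|$ terms each at most one. Summing over $u$ and invoking the cardinality reduction $|\mathcal{U}|\le|\mathcal{Y}|$ from Proposition~\ref{prop222} gives $I(X;U)\le \frac{\epsilon|\mathcal{X}||\mathcal{Y}|}{\min P_X}$; combined with the first step this produces $U_{g^{w\ell}}(\epsilon)$.

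The main obstacle is securing a bound that is linear rather than quadratic in $\epsilon$. A naive use of reverse Pinsker (bounding $d(P_{X|U}(\cdot|u),P_X)^2\le(\epsilon/P_U(u))^2$) leaves a factor $\sum_u 1/P_U(u)$ that cannot be controlled, and a second-order Taylor expansion of the mutual information runs into the same obstruction. Cancelling a single $d$ factor against the per-letter constraint, while estimating the surviving factor by $|\mathcal{X}|$, is exactly what collapses the estimate to an $O(\epsilon)$ bound; checking that this crude constant together with $|\mathcal{U}|\le|\mathcal{Y}|$ reproduces the stated $\frac{\epsilon|\mathcal{Y}||\mathcal{X}|}{\min P_X}$ is then routine.
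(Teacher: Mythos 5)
Your proposal is correct and follows essentially the same route as the paper: the three inequalities other than $U_{g^{w\ell}}(\epsilon)$ are obtained by citing Lemma~\ref{lg1}, the feasible-set inclusion, and Proposition~\ref{prop111}, and the upper bound is derived from \eqref{key} with the Markov chain, the reverse Pinsker inequality, the estimate $d(P_{X|U}(\cdot|u),P_X)\leq|\mathcal{X}|$, the per-letter constraint $P_U(u)\,d(P_{X|U}(\cdot|u),P_X)\leq\epsilon$, and the cardinality bound $|\mathcal{U}|\leq|\mathcal{Y}|$ from Proposition~\ref{prop222}. The only cosmetic difference is the order in which you cancel the two factors of $d$, which changes nothing.
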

\begin{proof}
	The proof is provided in Appendix C.
\end{proof}
In the next section we provide bounds for $g_{\epsilon}^{\ell}(P_{XY})$ and $h_{\epsilon}^{\ell}(P_{XY})$.
\subsection*{Lower and Upper bounds on $g_{\epsilon}^{\ell}(P_{XY})$ and $h_{\epsilon}^{\ell}(P_{XY})$}
As we mentioned earlier in \cite{Khodam22}, we have provided an approximate solution for $g_{\epsilon}^{\ell}(P_{XY})$ using a local approximation of $H(Y|U)$ for sufficiently small $\epsilon$. Furthermore, in \cite[Proposition~8]{Khodam22} we specified permissible leakages. By using \cite[Proposition~8]{Khodam22}, we can write
\begin{align}
g_{\epsilon}^{\ell}(P_{XY})&=\sup_{\begin{array}{c} 
	\substack{P_{U|Y}:X-Y-U\\ \ d(P_{X|U}(\cdot|u),P_X)\leq\epsilon,\ \forall u}
	\end{array}}I(Y;U)\nonumber\\&= \max_{\begin{array}{c} 
	\substack{P_{U|Y}:X-Y-U\\ \ d(P_{X|U}(\cdot|u),P_X)\leq\epsilon,\ \forall u\\ |\mathcal{U}|\leq |\mathcal{Y}|}
	\end{array}}I(Y;U).\label{antar11}
\end{align}
In the next lemma we find a lower bound for $g_{\epsilon}^{\ell}(P_{XY})$, where we use the approximate problem for \eqref{main222}.
\begin{lemma}\label{lg2}
	Let the kernel $P_{U^*|Y}$ achieve the optimum solution in \cite[Theorem~2]{Khodam22}. Thus, $I(U^*;Y)$ evaluated by this kernel is a lower bound for $g_{\epsilon}^{\ell}(P_{XY})$. In other words, we have
	\begin{align*}
	g_{\epsilon}^{\ell}(P_{XY})\geq I(U^*;Y) = L_{g^{\ell}}^1(\epsilon).
	\end{align*}
\end{lemma}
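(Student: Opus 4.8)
The plan is to establish the bound by the same elementary feasibility argument used for Lemma~\ref{lg1}: exhibit a single admissible mechanism whose utility equals $L_{g^{\ell}}^1(\epsilon)$, so that the supremum defining $g_{\epsilon}^{\ell}(P_{XY})$ in \eqref{main222} cannot be smaller. The key observation is that the approximation carried out in \cite[Theorem~2]{Khodam22} affects only the \emph{objective} $H(Y|U)$, which is replaced by its first-order expansion, while the \emph{constraints} of the equivalent problem \eqref{antar11}---namely the Markov chain $X-Y-U$ and the per-letter criterion $d(P_{X|U}(\cdot|u),P_X)\leq\epsilon$ for all $u$---are kept intact.

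First I would recall that the kernel $P_{U^*|Y}$ is built from the perturbation vectors of the approximate linear program through an extreme-point construction, and that by \cite[Proposition~8]{Khodam22} for sufficiently small $\epsilon$ these perturbations are \emph{permissible}; that is, the resulting vectors $P_{Y|U}(\cdot|u)$ are valid probability distributions, $P_U$ is a valid marginal, and the induced $P_{X|U}(\cdot|u)$ obeys $d(P_{X|U}(\cdot|u),P_X)\leq\epsilon$. This guarantees that $P_{U^*|Y}$ is a genuine stochastic kernel satisfying all the constraints of \eqref{antar11}, not merely a formal minimizer of the relaxed objective.

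Second, since $P_{U^*|Y}$ lies in the feasible set of \eqref{antar11}, evaluating the \emph{exact} mutual information $I(U^*;Y)$ for this kernel produces an achievable utility. By the definition of the supremum, $g_{\epsilon}^{\ell}(P_{XY})\geq I(U^*;Y)$, and setting $L_{g^{\ell}}^1(\epsilon)\triangleq I(U^*;Y)$ yields the claim. This mirrors the one-line argument behind Lemma~\ref{lg1}.

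The only delicate point---and the step I expect to require the most care---is justifying feasibility despite the objective having been approximated: one must confirm that the minimizer returned by the linear program does not exploit the Taylor approximation to drift outside the exact feasible region, which is precisely what the permissibility result \cite[Proposition~8]{Khodam22} secures for small enough $\epsilon$. Once feasibility is in hand, the inequality is immediate from the definition of the supremum.
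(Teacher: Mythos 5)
Your proposal is correct and is essentially the paper's own argument: the kernel returned by the approximate (linear-program) solution remains feasible for the exact problem \eqref{main222}, so evaluating the true $I(U^*;Y)$ at that kernel lower-bounds the supremum. Your added care about permissibility via \cite[Proposition~8]{Khodam22} is a reasonable elaboration of the feasibility step the paper states in one line, not a different route.
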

\begin{proof}
	The proof follows since the kernel $P_{U|Y}^*$ that achieves the approximate solution satisfies the constraints in \eqref{main222}.
\end{proof}
Next we provide upper bounds for $g_{\epsilon}^{\ell}(P_{XY})$. To do so, we first bound the approximation error in \cite[Theorem~2]{Khodam22}. Let $\Omega^1$ be the set of all $\Omega_i\subset\{1,..,|\mathcal{Y}|\},\ |\Omega_i|=|\cal X|$, such that each $\Omega_i$ produces a valid standard distribution vector $M_{\Omega_i}^{-1}MP_Y$, i.e., all elements in the vector $M_{\Omega_i}^{-1}MP_Y$ are positive.
\begin{proposition}\label{mos}
	Let the approximation error be the distance between $H(Y|U)$ and the approximation derived in \cite[Theorem~2]{Khodam22}. Then, for all $\epsilon<\frac{1}{2}\epsilon_2$, we have 
	\begin{align*}
	|\text{Approximation\  error}|<\frac{3}{4}.
	\end{align*}
	Furthermore, for all $\epsilon<\frac{1}{2}\frac{\epsilon_2}{\sqrt{|\mathcal{X}|}}$ the upper bound can be strengthened as follows
	\begin{align*}
	|\text{Approximation\  error}|<\frac{1}{2(2\sqrt{|\mathcal{X}|}-1)^2}+\frac{1}{4|\mathcal{X}|}.
	\end{align*}
	where $\epsilon_2=\frac{\min_{y,\Omega\in \Omega^1} M_{\Omega}^{-1}MP_Y(y)}{\max_{\Omega\in \Omega^1} |\sigma_{\max} (H_{\Omega})|}$, $H_{\Omega}=M_{\Omega}^{-1}M(1:|\mathcal{X}|)P_{X|Y_1}^{-1}$ and $\sigma_{\max}$ is the largest right singular value. 
\end{proposition}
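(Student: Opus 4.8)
The plan is to control, for each symbol $u$ with $P_U(u)>0$, the gap between the exact conditional entropy $H(P_{Y|U=u})$ and the first-order surrogate $-(b_u+\epsilon a_uJ_u)$ used in \cite[Theorem~2]{Khodam22}, and then average over $u$. Writing the nonzero entries of the optimal extreme point $P^*_{Y|U=u}=V^*_{\Omega_u}$ as $q_i=p_i+\delta_i$, where $p_i=(M_{\Omega_u}^{-1}MP_Y)(i)$ is the zeroth-order part and $\delta_i$ is the $O(\epsilon)$ perturbation appearing in \eqref{defin1}, the definitions of $a_u,b_u$ in Proposition~\ref{koonkos} give $b_u=\sum_i p_i\log p_i$ and $\epsilon a_uJ_u=\sum_i \delta_i\log p_i$, so the surrogate equals $-\sum_i q_i\log p_i$. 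Hence the per-letter error is
\begin{align*}
H(P_{Y|U=u})-\Big(\!-\!\sum_i q_i\log p_i\Big)=-\sum_i q_i\log\frac{q_i}{p_i}=-\sum_i q_i\log(1+x_i),\qquad x_i\triangleq\frac{\delta_i}{p_i}.
\end{align*}
Because $V^*_{\Omega_u}$ is a genuine probability vector for all admissible $\epsilon$, we have $\sum_i q_i=\sum_i p_i=1$ and therefore $\sum_i\delta_i=0$; this cancellation is what makes the error quadratic in $\epsilon$ and will be used twice below. Since $H(Y|U)=\sum_u P_U(u)H(P_{Y|U=u})$ is a convex combination, a uniform bound on $|\text{error}_u|$ transfers directly to the claimed bound on $|H(Y|U)-\text{approximation}|$.

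The first step is a uniform bound on the perturbation ratio $x_i$. I would estimate $|\delta_i|\le\lVert\delta\rVert_2\le\epsilon\,\sigma_{\max}(H_{\Omega_u})\lVert J_u\rVert_2\le\epsilon\max_{\Omega}|\sigma_{\max}(H_\Omega)|$, using $\lVert J_u\rVert_2\le\lVert J_u\rVert_1\le1$ from the strong $\ell_1$ constraint, and lower bound $p_i\ge\min_{y,\Omega\in\Omega^1}M_\Omega^{-1}MP_Y(y)$. Together these give $|x_i|\le\epsilon/\epsilon_2=:a$. Thus $a<\tfrac12$ when $\epsilon<\tfrac12\epsilon_2$, and $a<\tfrac1{2\sqrt{|\mathcal X|}}$ when $\epsilon<\tfrac12\epsilon_2/\sqrt{|\mathcal X|}$, matching exactly the two hypotheses of the proposition.

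Next I would split $\log(1+x_i)=x_i+(\log(1+x_i)-x_i)$ and treat the two contributions separately. For the linear part, using $q_i=p_i(1+x_i)$ and $\sum_i p_ix_i=\sum_i\delta_i=0$, one gets $\sum_i q_ix_i=\sum_i p_ix_i^2$, hence $|\sum_i q_ix_i|\le\max_i x_i^2\le a^2$. For the remainder, the Lagrange form $\log(1+x)-x=-\tfrac{x^2}{2(1+\xi)^2}$ with $\xi$ between $0$ and $x$ yields $|\log(1+x_i)-x_i|\le\tfrac{x_i^2}{2(1-a)^2}\le\tfrac{a^2}{2(1-a)^2}$, and summing against $\sum_i q_i=1$ preserves this bound. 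Combining the two pieces gives
\begin{align*}
|\text{error}_u|\le a^2+\frac{a^2}{2(1-a)^2}.
\end{align*}
Substituting $a=\tfrac12$ yields $\tfrac14+\tfrac12=\tfrac34$, and substituting $a=\tfrac1{2\sqrt{|\mathcal X|}}$ yields $\tfrac1{4|\mathcal X|}+\tfrac1{2(2\sqrt{|\mathcal X|}-1)^2}$, which are precisely the two claimed bounds. The main obstacle I anticipate is the bookkeeping in the first step: one must verify that the optimizing $P^*_{Y|U=u}$ is indeed an extreme point of the form \eqref{defin1}, so that $q_i=p_i+\delta_i$ with the $H_{\Omega_u}=M_{\Omega_u}^{-1}M(1\!\!:\!\!|\mathcal X|)P_{X|Y_1}^{-1}$ structure, and that the relevant index set $\Omega_u$ lies in $\Omega^1$ so that $p_i$ stays bounded away from zero by $\min_{y,\Omega\in\Omega^1}M_\Omega^{-1}MP_Y(y)$; once these facts from \cite{Khodam22} are invoked, everything downstream is an elementary estimate of $(1+x)\log(1+x)$.
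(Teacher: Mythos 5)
Your proposal is correct and follows essentially the same route as the paper's proof: you identify the surrogate as $-\sum_i q_i\log p_i$, use $\sum_i\delta_i=0$ to make the linear term quadratic (this is exactly the paper's $\sum_{ij}P_j\frac{b_{ij}^2}{a_{ij}}\epsilon^2$ term), control the Lagrange remainder of $\log(1+x_i)$ uniformly via $|x_i|\le\epsilon/\epsilon_2$ from the same singular-value and minimum-entry estimates that define $\epsilon_2$, and evaluate at $a=\tfrac12$ and $a=\tfrac{1}{2\sqrt{|\mathcal{X}|}}$. The only cosmetic difference is that you bound the quadratic term by $\max_i x_i^2$ while the paper routes through $|\mathcal{X}|\min_{ij}a_{ij}\le 1$; both yield the same constants.
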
 
\begin{proof}
	The proof is provided in Appendix~C.
\end{proof}
As a result we can find an upper bound on $g_{\epsilon}^{\ell}(P_{XY})$. To do so let $\text{approx}(g_{\epsilon}^{\ell})$ be the value that the kernel $P_{U^*|Y}$ in Lemma~\ref{lg2} achieves, i.e., the approximate value in \cite[(7)]{Khodam22}.
\begin{corollary}\label{ghahve}
	For any $0\leq\epsilon<\frac{1}{2}\epsilon_2$ we have
	\begin{align*}
	g_{\epsilon}^{\ell}(P_{XY})\leq \text{approx}(g_{\epsilon}^{\ell})+\frac{3}{4}=U_{g^{\ell}}^1(\epsilon),
	\end{align*}
	furthermore, for any $0\leq\epsilon<\frac{1}{2}\frac{\epsilon_2}{\sqrt{|\mathcal{X}|}}$ the upper bound can be strengthened as
	\begin{align*}
	g_{\epsilon}^{\ell}(P_{XY})\!\leq \text{approx}(g_{\epsilon}^{\ell})+\frac{1}{2(2\sqrt{|\mathcal{X}|}-1)^2}\!+\frac{1}{4|\mathcal{X}|}\!=\!U_{g^{\ell}}^2(\epsilon).
	\end{align*}
\end{corollary}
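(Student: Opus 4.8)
The plan is to treat $g_{\epsilon}^{\ell}(P_{XY})$ and $\text{approx}(g_{\epsilon}^{\ell})$ as two minimizations of the \emph{same} conditional-entropy objective over the \emph{same} feasible family — the true $H(Y|U)$ versus its linearized surrogate — and to pay the uniform approximation error exactly once. First I would put both quantities in minimization form. By \eqref{antar11} together with the extreme-point characterization of $g_{\epsilon}^{\ell}(P_{XY})$ carried over from \cite{Khodam22}, we have $g_{\epsilon}^{\ell}(P_{XY}) = H(Y) - H(Y|U^{**})$, where $U^{**}$ denotes a true optimizer whose conditionals $P_{Y|U^{**}}(\cdot|u)$ are extreme points of the polytopes $\mathbb{S}_u$. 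Writing $\tilde{H}(Y|U)$ for the linear approximation of $H(Y|U)$ from \cite[Theorem~2]{Khodam22}, the approximate program minimizes $\tilde{H}(Y|U)$ over the same family of extreme-point configurations, so that $\text{approx}(g_{\epsilon}^{\ell}) = H(Y) - \tilde{H}(Y|U^{*})$ with $U^{*}$ the minimizer of $\tilde{H}$ (the kernel of Lemma~\ref{lg2}).

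Next I would bound the gap directly. Subtracting the two expressions,
\begin{align*}
g_{\epsilon}^{\ell}(P_{XY}) - \text{approx}(g_{\epsilon}^{\ell}) = \tilde{H}(Y|U^{*}) - H(Y|U^{**}).
\end{align*}
Since $U^{*}$ minimizes the surrogate, $\tilde{H}(Y|U^{*}) \leq \tilde{H}(Y|U^{**})$; and since $P_{Y|U^{**}}(\cdot|u)$ realizes an extreme-point configuration to which Proposition~\ref{mos} applies, $\tilde{H}(Y|U^{**}) \leq H(Y|U^{**}) + |\text{Approximation error}| < H(Y|U^{**}) + \tfrac{3}{4}$ whenever $\epsilon < \tfrac{1}{2}\epsilon_2$. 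Chaining these two steps gives $\tilde{H}(Y|U^{*}) - H(Y|U^{**}) < \tfrac{3}{4}$, which is precisely $g_{\epsilon}^{\ell}(P_{XY}) \leq \text{approx}(g_{\epsilon}^{\ell}) + \tfrac{3}{4} = U_{g^{\ell}}^{1}(\epsilon)$. The strengthened bound $U_{g^{\ell}}^{2}(\epsilon)$ follows verbatim by substituting the tighter error estimate $\tfrac{1}{2(2\sqrt{|\mathcal{X}|}-1)^2} + \tfrac{1}{4|\mathcal{X}|}$ from Proposition~\ref{mos}, valid on the smaller range $\epsilon < \tfrac{1}{2}\,\epsilon_2/\sqrt{|\mathcal{X}|}$.

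The delicate point, and the only step requiring care, is that the one-sided error bound must be applied at the \emph{true} optimizer $U^{**}$ rather than merely at $U^{*}$. This is legitimate precisely because the minimum of $H(Y|U)$ is attained at the extreme points of the $\mathbb{S}_u$, and Proposition~\ref{mos} bounds the error uniformly over exactly those extreme-point configurations; the permissible-leakage analysis of \cite[Proposition~8]{Khodam22} guarantees that for $\epsilon$ in the stated ranges the conditionals $P_{Y|U^{**}}(\cdot|u)$ genuinely lie in the polytopes $\mathbb{S}_u$, so no configuration outside the scope of Proposition~\ref{mos} is ever evaluated. Note that only the one-sided inequality $\tilde{H} \leq H + |\text{error}|$ is needed, so the absolute-value form of Proposition~\ref{mos} is more than sufficient.
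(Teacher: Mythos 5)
Your proposal is correct and is essentially the argument the paper leaves implicit: the corollary is stated as an immediate consequence of Proposition~\ref{mos}, and the intended reasoning is exactly your two-step chain, namely that the surrogate minimizer satisfies $\tilde{H}(Y|U^{*})\leq \tilde{H}(Y|U^{**})\leq H(Y|U^{**})+|\text{Approximation error}|$, with the error bound of Proposition~\ref{mos} applied uniformly over the extreme-point configurations at which both minima are attained. Your added care about applying the error bound at the true optimizer $U^{**}$ (legitimate because the bound in Proposition~\ref{mos} depends only on the uniform properties of the coefficients $a_{ij},b_{ij}$ over $\Omega\in\Omega^1$, not on which extreme points are selected) is exactly the point the paper glosses over, and it is handled correctly.
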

In the next theorem we summarize the bounds for $g_{\epsilon}^{\ell}(P_{XY})$ and $h_{\epsilon}^{\ell}(P_{XY})$, furthermore, a new upper bound for $h_{\epsilon}^{\ell}(P_{XY})$ is derived.
\begin{theorem}\label{koontala1}
	For any $0\leq\epsilon<\frac{1}{2}\epsilon_2$ and pair of RVs $(X,Y)$ distributed according to $P_{XY}$ supported on alphabets $\mathcal{X}$ and $\mathcal{Y}$ we have
	\begin{align*}
	L_{g^{\ell}}^1(\epsilon)\leq g_{\epsilon}^{\ell}(P_{XY})\leq U_{g^{\ell}}^1(\epsilon),
	\end{align*}
	and for any $0\leq\epsilon<\frac{1}{2}\frac{\epsilon_2}{\sqrt{|\mathcal{X}|}}$ we get
	\begin{align*}
	L_{g^{\ell}}^1(\epsilon)\leq g_{\epsilon}^{\ell}(P_{XY})\leq U_{g^{\ell}}^2(\epsilon),
	\end{align*}
	furthermore, for any $0\leq\epsilon$ 
	\begin{align*}
	g_{\epsilon}^{\ell}(P_{XY})\leq h_{\epsilon}^{\ell}(P_{XY})\leq \frac{\epsilon^2}{\min P_X}+H(Y|X)=U_{h^{\ell}}(\epsilon).
	\end{align*}
\end{theorem}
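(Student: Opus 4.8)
The plan is to assemble the two-sided bounds on $g_{\epsilon}^{\ell}(P_{XY})$ directly from results already in hand, and then to establish the genuinely new ingredient, the upper bound on $h_{\epsilon}^{\ell}(P_{XY})$, from the decomposition \eqref{key} together with the reverse-Pinsker estimate of Proposition~\ref{trash3}. First I would dispatch the bounds on $g_{\epsilon}^{\ell}(P_{XY})$. The lower bound $L_{g^{\ell}}^1(\epsilon)\leq g_{\epsilon}^{\ell}(P_{XY})$ is exactly Lemma~\ref{lg2}: the kernel $P_{U^*|Y}$ achieving the approximate optimum is feasible for \eqref{main222} (via the reduction \eqref{antar11}), so its objective value lower-bounds the true supremum. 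The matching upper bounds $U_{g^{\ell}}^1(\epsilon)$ on the range $\epsilon<\frac{1}{2}\epsilon_2$ and $U_{g^{\ell}}^2(\epsilon)$ on the smaller range $\epsilon<\frac{1}{2}\epsilon_2/\sqrt{|\mathcal{X}|}$ are precisely Corollary~\ref{ghahve}, which itself rests on the approximation-error control in Proposition~\ref{mos}. Thus these two chains of inequalities are obtained simply by conjoining the earlier statements over the stated $\epsilon$-ranges.

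Next I would note that $g_{\epsilon}^{\ell}(P_{XY})\leq h_{\epsilon}^{\ell}(P_{XY})$ is immediate from the problem formulations: any $P_{U|Y}$ satisfying the Markov chain $X-Y-U$ together with the per-letter constraint $d(P_{X|U}(\cdot|u),P_X)\leq\epsilon$ is also an admissible kernel $P_{U|Y,X}$ in \eqref{main12} (the mechanism may ignore its extra access to $X$), so the feasible set of \eqref{main222} is contained in that of \eqref{main12} and the supremum can only increase.

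Finally, for the new upper bound $h_{\epsilon}^{\ell}(P_{XY})\leq U_{h^{\ell}}(\epsilon)$, I would start from the identity \eqref{key}, namely $I(Y;U)=I(X;U)+H(Y|X)-H(Y|U,X)-I(X;U|Y)$, and drop the two nonnegative terms $H(Y|U,X)\geq 0$ and $I(X;U|Y)\geq 0$ to get $I(Y;U)\leq I(X;U)+H(Y|X)$ for every admissible $U$. The per-letter constraint $d(P_{X|U}(\cdot|u),P_X)\leq\epsilon,\ \forall u$ is exactly the hypothesis of Proposition~\ref{trash3}, whose reverse-Pinsker argument yields $I(X;U)\leq \frac{\epsilon^2}{\min P_X}$. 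Substituting this and taking the supremum over all admissible kernels $P_{U|Y,X}$ gives $h_{\epsilon}^{\ell}(P_{XY})\leq \frac{\epsilon^2}{\min P_X}+H(Y|X)=U_{h^{\ell}}(\epsilon)$.

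I do not expect a real obstacle in this argument, since the theorem is essentially a consolidation: the only substantive new step is the $h_{\epsilon}^{\ell}$ upper bound, and it follows in two lines once \eqref{key} is combined with Proposition~\ref{trash3}. The points requiring care are bookkeeping rather than depth: the bound $I(X;U)\leq \frac{\epsilon^2}{\min P_X}$ should be invoked through Proposition~\ref{trash3} rather than re-derived, and the admissible ranges of $\epsilon$ for the $g$-bounds must be carried over faithfully from Proposition~\ref{mos} (the $\frac{1}{2}\epsilon_2$ versus $\frac{1}{2}\epsilon_2/\sqrt{|\mathcal{X}|}$ thresholds) so that the weaker and the strengthened upper bounds are each stated on the correct interval.
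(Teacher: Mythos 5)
Your proposal is correct and matches the paper's proof essentially step for step: the $g_{\epsilon}^{\ell}$ bounds are assembled from Lemma~\ref{lg2} and Corollary~\ref{ghahve}, the inequality $g_{\epsilon}^{\ell}\leq h_{\epsilon}^{\ell}$ follows from feasible-set containment, and the new upper bound $U_{h^{\ell}}(\epsilon)$ is obtained from \eqref{key} by dropping the nonnegative terms and applying the reverse Pinsker inequality to get $I(X;U)\leq \epsilon^2/\min P_X$. The only cosmetic difference is that you route the last step through Proposition~\ref{trash3} while the paper applies reverse Pinsker inline; the computation is identical.
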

\begin{proof}
	It is sufficient to show that the upper bound on $h_{\epsilon}^{\ell}(P_{XY})$ holds, i.e., $U_{h^{\ell}}(\epsilon)$. To do so, let $U$ satisfy $d(P_{X|U}(\cdot|u),P_X)\leq\epsilon$, then we have
	\begin{align*}
	I(U;Y) &= I(X;U)\!+\!H(Y|X)\!-\!I(X;U|Y)\!-\!H(Y|X,U)\\ &\leq I(X;U)\!+\!H(Y|X)\\ &\stackrel{(a)}{\leq}\!\sum_u\!\! P_U(u)\frac{\left(d(P_{X|U}(\cdot|u),\!P_X)\right)^2}{\min P_X}\!+\!H(Y|X)  \\& = \frac{\epsilon^2}{\min P_X}+H(Y|X),
	\end{align*}
	where (a) follows by the reverse Pinsker inequality.
\end{proof}
In next section we study the special case where $X$ is a deterministic function of $Y$, i.e., $H(X|Y)=0$. 
\subsection*{Special case: $X$ is a deterministic function of $Y$}
In this case we have 
\begin{align}
h_{\epsilon}^{w\ell}(P_{XY})&=g_{\epsilon}^{w\ell}(P_{XY})\label{khatakar}\\&= \max_{\begin{array}{c} 
	\substack{P_{U|Y}:X-Y-U\\ \ d(P_{X,U}(\cdot,u),P_XP_{U}(u))\leq\epsilon,\ \forall u\\ |\mathcal{U}|\leq |\mathcal{Y}|}
	\end{array}}I(Y;U)\nonumber\\&= \sup_{\begin{array}{c} 
	\substack{P_{U|Y}: d(P_{X,U}(\cdot,u),P_XP_{U}(u))\leq\epsilon,\ \forall u\\ |\mathcal{U}|\leq |\mathcal{Y}|}
	\end{array}}I(Y;U),\nonumber\\
h_{\epsilon}^{\ell}(P_{XY})&=g_{\epsilon}^{\ell}(P_{XY})\label{khatakoon}\\&= \max_{\begin{array}{c} 
	\substack{P_{U|Y}:X-Y-U\\ \ d(P_{X|U}(\cdot|u),P_X)\leq\epsilon,\ \forall u\\ |\mathcal{U}|\leq |\mathcal{Y}|}
	\end{array}}I(Y;U)\nonumber\\&= \sup_{\begin{array}{c} 
	\substack{P_{U|Y}: d(P_{X|U}(\cdot|u),P_X)\leq\epsilon,\ \forall u\\ |\mathcal{U}|\leq |\mathcal{Y}|}
	\end{array}}I(Y;U),\nonumber
\end{align}
since the Markov chain $X-Y-U$ holds. Consequently, by using Theorem~2 and \eqref{khatakar} we have the next corollary.
\begin{corollary}\label{chooni}
	For any $0\leq \epsilon\leq \sqrt{2I(X;Y)}$ we have
	\begin{align*}
	\max\{L_{h^{w\ell}}^{1}(\epsilon),L_{h^{w\ell}}^{2}(\epsilon),L_{g^{w\ell}}^1(\epsilon)\}\leq g_{\epsilon}^{w\ell}(P_{XY})\leq U_{g^{w\ell}}(\epsilon).	
	\end{align*}
\end{corollary}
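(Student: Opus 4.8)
The plan is to obtain this corollary without any new computation, simply by assembling the bounds already established in Theorem~\ref{choon1} (together with Proposition~\ref{prop111} and Lemma~\ref{lg1}) and transporting them across the scenario collapse recorded in \eqref{khatakar}. The starting observation is that when $X$ is a deterministic function of $Y$, say $X=f(Y)$, every kernel $P_{U|Y,X}$ factors through $Y$: conditioning on $Y$ already fixes $X$, so $I(X;U|Y)=0$ and the Markov chain $X-Y-U$ holds for \emph{any} feasible mechanism of the second scenario. The privacy constraint $d(P_{X,U}(\cdot,u),P_XP_U(u))\leq\epsilon$ is identical in both problems, so the feasible sets of $h_{\epsilon}^{w\ell}(P_{XY})$ and $g_{\epsilon}^{w\ell}(P_{XY})$ coincide. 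This is precisely the identity $h_{\epsilon}^{w\ell}(P_{XY})=g_{\epsilon}^{w\ell}(P_{XY})$ asserted in \eqref{khatakar}, and I would state it first since every subsequent inequality is transported across it.

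For the lower bound, I would invoke the two bounds $L_{h^{w\ell}}^{1}(\epsilon)$ and $L_{h^{w\ell}}^{2}(\epsilon)$, which by Theorem~\ref{choon1} (equivalently Proposition~\ref{prop111}) are valid on $0\leq\epsilon\leq\sqrt{2I(X;Y)}$ and give $\max\{L_{h^{w\ell}}^{1}(\epsilon),L_{h^{w\ell}}^{2}(\epsilon)\}\leq h_{\epsilon}^{w\ell}(P_{XY})$. Substituting \eqref{khatakar} converts these into lower bounds on $g_{\epsilon}^{w\ell}(P_{XY})$. The third lower bound $L_{g^{w\ell}}^1(\epsilon)$ is already a direct lower bound on $g_{\epsilon}^{w\ell}(P_{XY})$ by Lemma~\ref{lg1}, since the linear-program optimizer is feasible for \eqref{main22}, so no transport is required. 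Taking the maximum of the three yields the claimed lower bound. For the upper half I would simply quote $g_{\epsilon}^{w\ell}(P_{XY})\leq U_{g^{w\ell}}(\epsilon)$ from Theorem~\ref{choon1}, which holds for every $\epsilon\geq 0$ and hence on the interval considered here; the deterministic-function hypothesis is not even needed for this direction.

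The only point requiring care is the reconciliation of the validity ranges. The two $h^{w\ell}$-type bounds and the upper bound $U_{g^{w\ell}}(\epsilon)$ hold on the full interval $0\leq\epsilon\leq\sqrt{2I(X;Y)}$, whereas $L_{g^{w\ell}}^1(\epsilon)$ is produced by the information-geometry approximation of Lemma~\ref{lg1} and is guaranteed only for sufficiently small $\epsilon$. I would therefore note explicitly that on the common range where all three lower bounds are defined the displayed chain holds, while outside the small-$\epsilon$ regime the maximum is taken over the two remaining bounds. Beyond this bookkeeping the result is immediate; I expect no genuine obstacle, since the entire substance of the corollary already resides in Theorem~\ref{choon1} and in the scenario collapse \eqref{khatakar}.
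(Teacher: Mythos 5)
Your proposal is correct and follows essentially the same route as the paper: the deterministic-function hypothesis forces the Markov chain $X-Y-U$ for every feasible mechanism, giving the collapse $h_{\epsilon}^{w\ell}(P_{XY})=g_{\epsilon}^{w\ell}(P_{XY})$ of \eqref{khatakar}, after which the corollary is just the assembly of the bounds from Theorem~\ref{choon1} (via Proposition~\ref{prop111} and Lemma~\ref{lg1}). Your extra remark on the restricted validity range of $L_{g^{w\ell}}^1(\epsilon)$ is a fair bookkeeping point that the paper leaves implicit, but it does not change the argument.
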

We can see that the bounds in Corollary~\ref{chooni} are asymptotically optimal. The latter follows since in the high privacy regime, i.e., the leakage tends to zero, $U_{g^{w\ell}}(\epsilon)$ and $L_{h^{w\ell}}^{1}(\epsilon)$ both tend to $H(Y|X)$, which is the optimal solution to $g_{0}(P_{XY})$ when $X$ is a deterministic function of $Y$, \cite[Theorem~6]{kostala}.
Furthermore, by using Theorem~3 and \eqref{khatakoon} we obtain the next result.
\begin{corollary}
	For any $0\leq\epsilon<\frac{1}{2}\epsilon_2$ we have
	\begin{align*}
	L_{g^{\ell}}^1(\epsilon)\leq g_{\epsilon}^{\ell}(P_{XY}) \leq \min\{U_{g^{\ell}}^1(\epsilon),U_{h^{\ell}}(\epsilon)\}.
	\end{align*}	
\end{corollary}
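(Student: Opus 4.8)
The plan is to assemble the corollary directly from Theorem~\ref{koontala1} together with the equality $g_{\epsilon}^{\ell}(P_{XY})=h_{\epsilon}^{\ell}(P_{XY})$ established in \eqref{khatakoon} for the present special case $H(X|Y)=0$. All the analytic work has already been carried out in the preceding results; what remains is only to specialize them to $H(X|Y)=0$, to combine the two available upper bounds, and to take their minimum.

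First I would invoke the first assertion of Theorem~\ref{koontala1}, which states that for any $0\leq\epsilon<\frac{1}{2}\epsilon_2$ we have $L_{g^{\ell}}^1(\epsilon)\leq g_{\epsilon}^{\ell}(P_{XY})\leq U_{g^{\ell}}^1(\epsilon)$. This immediately yields the lower bound claimed in the corollary, together with one of the two upper bounds, namely $g_{\epsilon}^{\ell}(P_{XY})\leq U_{g^{\ell}}^1(\epsilon)$, valid precisely on the range $0\leq\epsilon<\frac{1}{2}\epsilon_2$ on which the approximation-error bound $U_{g^{\ell}}^1(\epsilon)$ was shown to hold in Corollary~\ref{ghahve}.

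Next I would obtain the second upper bound. The third assertion of Theorem~\ref{koontala1} gives $g_{\epsilon}^{\ell}(P_{XY})\leq h_{\epsilon}^{\ell}(P_{XY})\leq U_{h^{\ell}}(\epsilon)$ for every $\epsilon\geq 0$, so in particular $g_{\epsilon}^{\ell}(P_{XY})\leq U_{h^{\ell}}(\epsilon)$ throughout the range of interest. Here it is worth noting that because $X$ is a deterministic function of $Y$, the Markov chain $X-Y-U$ holds automatically and \eqref{khatakoon} even guarantees the exact identity $g_{\epsilon}^{\ell}(P_{XY})=h_{\epsilon}^{\ell}(P_{XY})$; thus applying the bound for $h_{\epsilon}^{\ell}$ loses nothing, and the two problems coincide on this special class of $P_{XY}$.

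Finally, since both $U_{g^{\ell}}^1(\epsilon)$ and $U_{h^{\ell}}(\epsilon)$ are valid upper bounds on $g_{\epsilon}^{\ell}(P_{XY})$ for $0\leq\epsilon<\frac{1}{2}\epsilon_2$, their pointwise minimum is also an upper bound, which gives $g_{\epsilon}^{\ell}(P_{XY})\leq\min\{U_{g^{\ell}}^1(\epsilon),U_{h^{\ell}}(\epsilon)\}$ and completes the argument. I do not expect any genuine obstacle: the corollary is a repackaging of Theorem~\ref{koontala1} specialized to $H(X|Y)=0$. The only point requiring attention is the admissible range of $\epsilon$, which is dictated by the validity of $U_{g^{\ell}}^1(\epsilon)$; the companion bound $U_{h^{\ell}}(\epsilon)$ holds for all $\epsilon\geq 0$ and hence imposes no additional restriction.
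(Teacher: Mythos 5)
Your proposal is correct and follows exactly the route the paper takes: the paper derives this corollary in one line by combining Theorem~\ref{koontala1} (which supplies the lower bound $L_{g^{\ell}}^1(\epsilon)$, the upper bound $U_{g^{\ell}}^1(\epsilon)$ on the range $\epsilon<\frac{1}{2}\epsilon_2$, and the unrestricted bound $g_{\epsilon}^{\ell}\leq h_{\epsilon}^{\ell}\leq U_{h^{\ell}}(\epsilon)$) with the identity \eqref{khatakoon} valid when $H(X|Y)=0$, then taking the minimum of the two upper bounds. Your observation that the $U_{h^{\ell}}(\epsilon)$ bound imposes no additional restriction on $\epsilon$ is also consistent with the paper's presentation.
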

\begin{remark}
	For deriving the upper bound $U_{h^{\ell}}(\epsilon)$ and lower bounds $L_{h^{w\ell}}^1(\epsilon)$ and $L_{h^{w\ell}}^2(\epsilon)$ we do not use the assumption that the leakage matrix $P_{X|Y}$ is of full row rank. Thus, these bounds hold for all $P_{X|Y}$ and all $\epsilon\geq 0$.  
\end{remark}
Next result shows a property of the optimizers of $h_{\epsilon}^{w\ell}(P_{XY})$ and $h_{\epsilon}^{\ell}(P_{XY})$.
\begin{proposition}\label{kirr}
	Let $\bar{U}_1$ and $\bar{U}_2$ be any optimizers of $h_{\epsilon}^{w\ell}(P_{XY})$ and $h_{\epsilon}^{\ell}(P_{XY})$, respectively. Then we have
	\begin{align*}
	H(Y|X,\bar{U}_1)=H(Y|X,\bar{U}_2)=0.
	\end{align*}
\end{proposition}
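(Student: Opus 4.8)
The plan is to argue by contradiction, mirroring the proof of Lemma~\ref{kir} and \cite[Lemma~5]{kostala}. Writing $\bar U$ for either optimizer $\bar U_1$ or $\bar U_2$, I assume that $H(Y|X,\bar U)>0$ and then build a mechanism $U'$ that is feasible for the respective problem but attains strictly larger utility, contradicting optimality. Since $H(Y|X,\bar U)=\sum_u P_{\bar U}(u)H(Y|X,\bar U=u)>0$, there is a symbol $u_0$ with $P_{\bar U}(u_0)>0$ and $H(Y|X,\bar U=u_0)>0$. The refinement I use is $U'=(\bar U,W)$, where $W$ is constant for every $u\neq u_0$ and, conditioned on $\bar U=u_0$, $W$ is independent of $X$ but correlated with $Y$.

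The existence of such a $W$ is the key step, and I expect it to be the main obstacle. I would show that $H(Y|X,\bar U=u_0)>0$ implies there is a binary $W$ with $I(X;W|\bar U=u_0)=0$ but $I(Y;W|\bar U=u_0)>0$. Concretely, pick $x_0,y_1\neq y_2$ with $P_{XY|\bar U=u_0}(x_0,y_1)>0$ and $P_{XY|\bar U=u_0}(x_0,y_2)>0$, and design $P_{W|X,Y,\bar U=u_0}$ so that $\Pr\{W=1\mid X=x\}$ equals a constant $c\in(0,1)$ for every $x$ (this enforces $W\perp X$ given $u_0$) while $\Pr\{W=1\mid X=x_0,Y=y_1\}=1$ and $\Pr\{W=1\mid X=x_0,Y=y_2\}=0$. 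Averaging over $X$ given $Y$ then yields $\Pr\{W=1\mid Y=y_1\}\neq\Pr\{W=1\mid Y=y_2\}$, so $W$ is not independent of $Y$ and hence $I(Y;W|\bar U=u_0)>0$.

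Next I would verify feasibility. Because $W\perp X$ given $\bar U$, we have $P_{X|U'}(\cdot|u,w)=P_{X|\bar U}(\cdot|u)$ for every $(u,w)$, which immediately preserves the strong $\ell_1$-criterion of $h_{\epsilon}^{\ell}(P_{XY})$, i.e. $d(P_{X|U'}(\cdot|u,w),P_X)=d(P_{X|\bar U}(\cdot|u),P_X)\leq\epsilon$. For the weighted criterion of $h_{\epsilon}^{w\ell}(P_{XY})$ I would use the factorization $P_{X,U'}(x,(u,w))-P_XP_{U'}(u,w)=P_{W|\bar U}(w|u)\,[P_{X,\bar U}(x,u)-P_XP_{\bar U}(u)]$, giving $d(P_{X,U'}(\cdot,(u,w)),P_XP_{U'}(u,w))=P_{W|\bar U}(w|u)\,d(P_{X,\bar U}(\cdot,u),P_XP_{\bar U}(u))\leq\epsilon$ since $P_{W|\bar U}(w|u)\leq1$. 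Thus $U'$ satisfies both per-letter constraints, and the preservation of the weighted criterion is where the factor $P_{W|\bar U}(w|u)$ is essential.

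Finally, since $\bar U$ is a deterministic function of $U'$, the chain rule gives the utility gain $I(Y;U')-I(Y;\bar U)=I(Y;W|\bar U)=P_{\bar U}(u_0)\,I(Y;W|\bar U=u_0)>0$, because $W$ is trivial for $u\neq u_0$ and $P_{\bar U}(u_0)>0$. This strict increase contradicts the optimality of $\bar U$ for $h_{\epsilon}^{w\ell}(P_{XY})$ (respectively $h_{\epsilon}^{\ell}(P_{XY})$), so we must have $H(Y|X,\bar U_1)=H(Y|X,\bar U_2)=0$. Note that the same refinement handles both problems at once; the only difference between the two cases is which per-letter criterion is checked in the feasibility step, and everything else follows from \eqref{key} and the chain rule exactly as in Lemma~\ref{kir}.
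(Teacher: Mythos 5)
Your proof is correct, and it reaches the conclusion by a genuinely more self-contained route than the paper. The paper's argument (inherited from Lemma~\ref{kir}) obtains the refinement variable by applying the FRL to the pair $((X,\bar U),Y)$, which yields a $U'$ that is \emph{globally} independent of $(X,\bar U)$ with $H(Y|X,\bar U,U')=0$, and then invokes \cite[Th.~5]{kostala} to conclude $I(Y;U')>0$; feasibility then follows from that global independence exactly as you argue. You instead construct an explicit binary $W$ that is only \emph{conditionally} independent of $X$ given $\bar U$ and is nontrivial on a single symbol $u_0$, proving $I(Y;W|\bar U=u_0)>0$ by exhibiting two $y$-values with different posterior probabilities of $W=1$. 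This buys two things: the argument needs no appeal to the FRL or to the cited positivity theorem, and it makes explicit that the weaker condition $W\perp X\mid\bar U$ already suffices for both per-letter criteria --- your factorization $d(P_{X,U'}(\cdot,(u,w)),P_XP_{U'}(u,w))=P_{W|\bar U}(w|u)\,d(P_{X,\bar U}(\cdot,u),P_XP_{\bar U}(u))$ is precisely the step the paper only gestures at when it says the criterion is preserved ``since $U'$ is independent of $(\bar U_1,X)$.'' The one detail worth spelling out in your construction is that the constant $c$ must be chosen in the nonempty interval $[P_{Y|X,\bar U}(y_1|x_0,u_0),\,1-P_{Y|X,\bar U}(y_2|x_0,u_0)]$ so that the required average $\Pr\{W=1\mid X=x_0,\bar U=u_0\}=c$ is attainable while fixing the two endpoint conditionals to $1$ and $0$; this is routine and does not affect correctness. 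The utility gain via the chain rule and the resulting contradiction are identical in both proofs.
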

\begin{proof}
	The proof follows similar arguments as for Lemma~\ref{kir}. In the proof of Lemma~\ref{kir}, instead of $\bar{U}$ use $\bar{U}_1$ and let $U'$ be produced in a similar way. The only difference is that instead of showing $I(U;X)\leq \epsilon$, we need to show that $d(P_{X,U}(\cdot,u),P_XP_U(u))\leq \epsilon$ for all $u$, where $U=(U',\bar{U}_1)$. The latter holds since $U'$ is independent of $(\bar{U}_1,X)$ and $\bar{U}_1$ satisfies the strong privacy criterion 1. The same proof works for $\bar{U}_2$.  
\end{proof}
In the next part, we study a numerical example to illustrate the new bounds.
\subsection*{Example}
Let us consider RVs $X$ and $Y$ with joint distribution $P_{XY}=\begin{bmatrix}
0.693 & 0.027 &0.108& 0.072\\0.006 & 0.085 & 0.004 & 0.005
\end{bmatrix}$. Using the definition of $\epsilon_2$ in Proposition~\ref{mos} we have $\epsilon_2 = 0.0341$. Fig.~\ref{kir12} illustrates the lower bound and upper bounds for $g_{\epsilon}^{\ell}$ derived in Theorem~\ref{koontala1}. As shown in Fig.~\ref{kir12}, the upper bounds $U_{g^{\ell}}^1(\epsilon)$ and $U_{g^{\ell}}^2(\epsilon)$ are valid for $\epsilon< 0.0171 $ and $\epsilon < 0.0121$, however the upper bound $U_{h^{\ell}}(\epsilon)$ is valid for all $\epsilon\geq 0$. In this example, we can see that for any $\epsilon$ the upper bound $U_{h^{\ell}}(\epsilon)$ is the smallest upper bound.
\begin{figure}[]
	\centering
	\includegraphics[scale = .135]{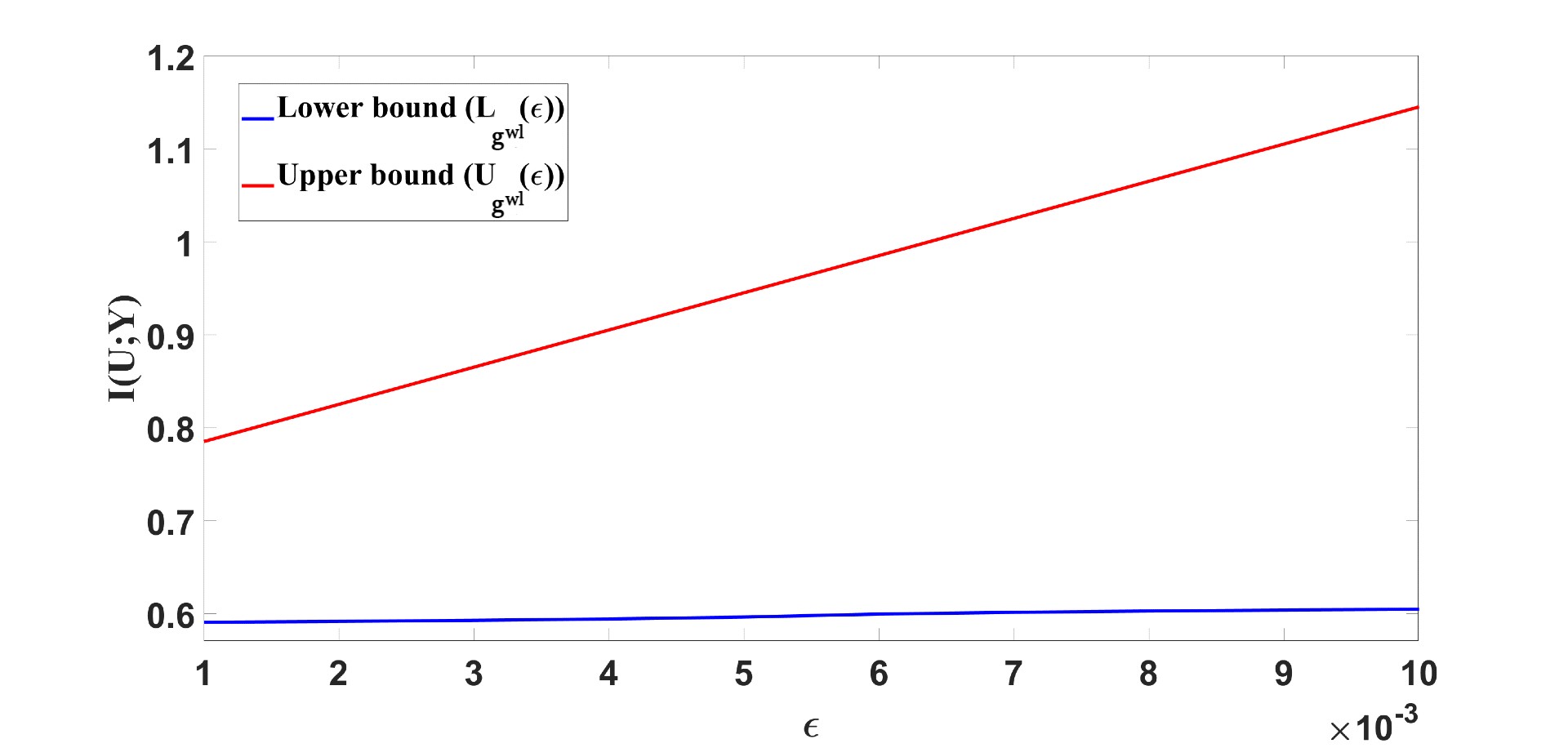}
	\caption{Comparing the upper bound and lower bound for $g_{\epsilon}^{w\ell}$.}
	\label{kir11}
\end{figure} 
Furthermore, Fig.~\ref{kir11} shows the lower bound $L_{g^{w\ell}}(\epsilon)$ and upper bound $U_{g^{w\ell}}(\epsilon)$ obtained in Theorem~\ref{choon1}. 
\begin{figure}[]
	\centering
	\includegraphics[scale = .135]{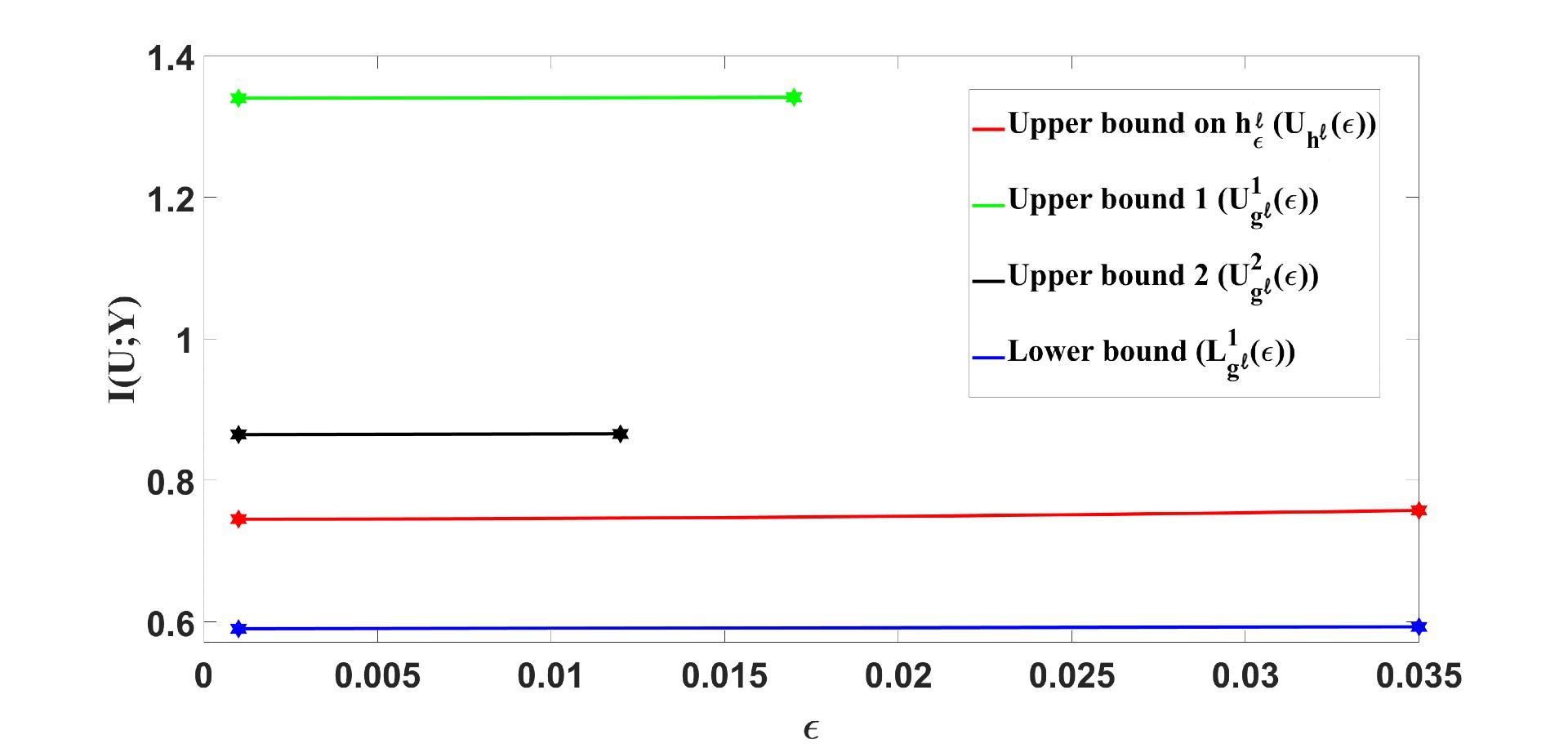}
	\caption{Comparing the upper bound and lower bound for $g_{\epsilon}^{\ell}$. The upper bounds $U_{g^{\ell}}^1(\epsilon)$ and $U_{g^{\ell}}^2(\epsilon)$ are valid for $\epsilon< 0.0171 $ and $\epsilon < 0.0121$, respectively. On the other hand, the upper bound $U_{h^{\ell}}(\epsilon)$ is valid for all $\epsilon\geq 0$.}
	\label{kir12}
\end{figure} 
\begin{figure}[]
	\centering
	\includegraphics[scale = .135]{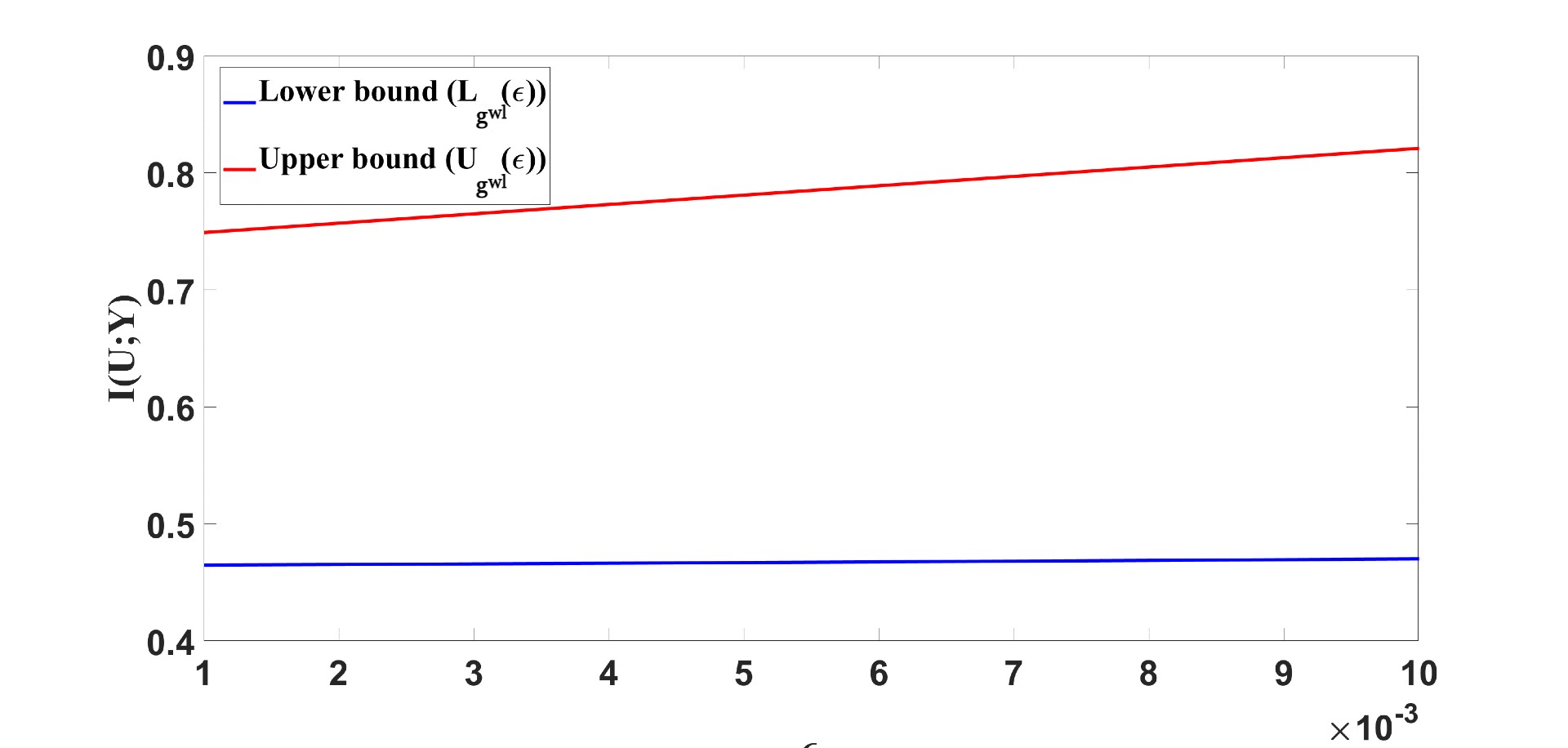}
	\caption{Comparing the upper bound and lower bound for $g_{\epsilon}^{w\ell}$.}
	\label{kir111}
\end{figure}  
\begin{figure}[]
	\centering
	\includegraphics[scale = .135]{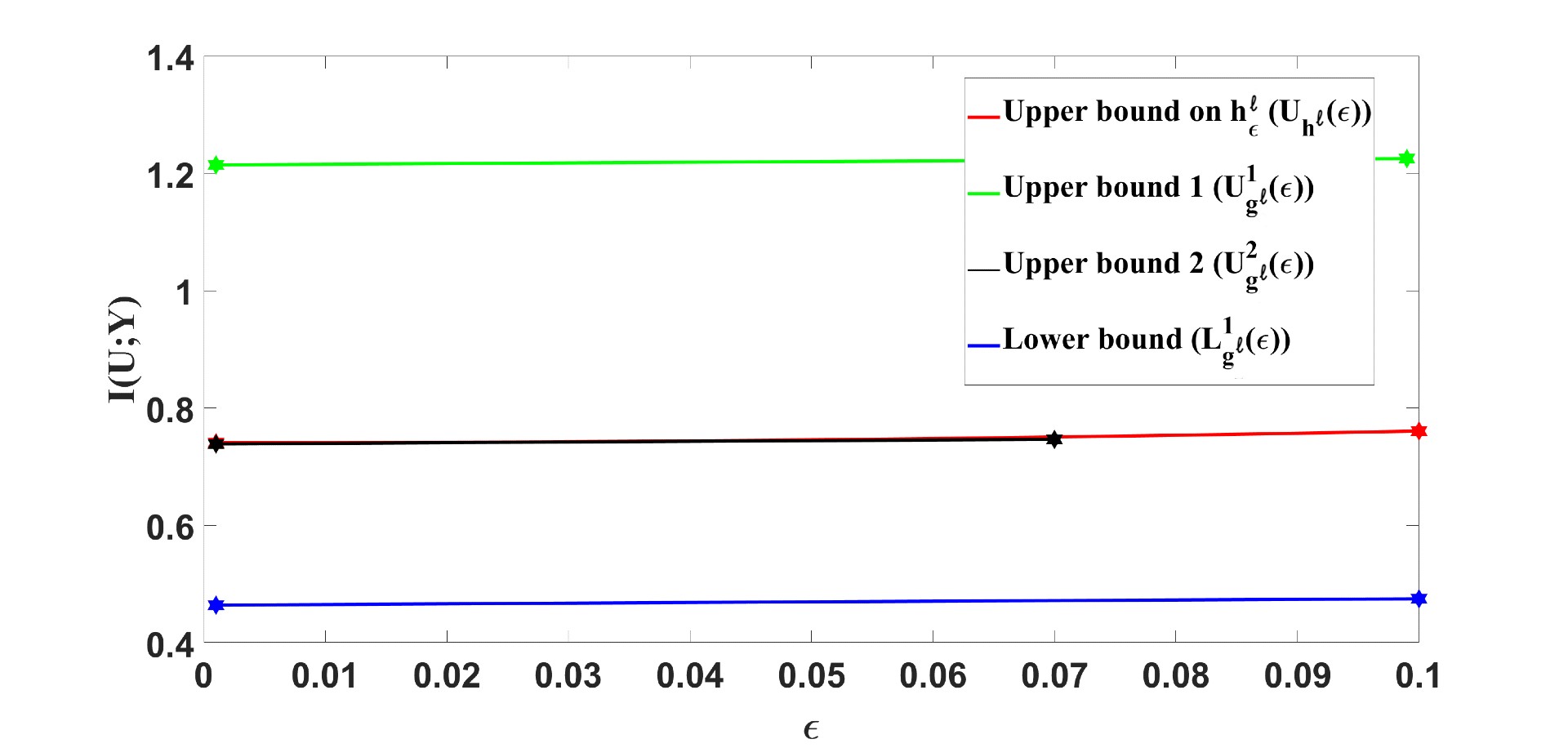}
	\caption{Comparing the upper bound and lower bound for $g_{\epsilon}^{\ell}$. The upper bounds $U_{g^{\ell}}^1(\epsilon)$ and $U_{g^{\ell}}^2(\epsilon)$ are valid for $\epsilon<0.0997 $ and $\epsilon <0.0705$, respectively. However, the upper bound $U_{h^{\ell}}(\epsilon)$ is valid for all $\epsilon\geq 0$.}
	\label{kir122}
\end{figure}  
Next, let $P_{XY}=\begin{bmatrix}
0.350 & 0.025 &0.085& 0.040\\0.025 & 0.425 & 0.035 & 0.015
\end{bmatrix}$. In this case, $\epsilon_2=0.1994$. Fig.~\ref{kir122} illustrates the lower bound and upper bounds for $g_{\epsilon}^{\ell}$. We can see that for $\epsilon<0.0705$, $U_{g^{\ell}}^2(\epsilon)$ is the smallest upper bound and for $\epsilon>0.0705$, $U_{h^{\ell}}(\epsilon)$ is the smallest bound. Furthermore, Fig.~\ref{kir111} shows the lower bound $L_{g^{w\ell}}(\epsilon)$ and upper bound $U_{g^{w\ell}}(\epsilon)$.
\subsection{Privacy-utility trade-off with non-zero leakage and prioritized private data}
In this part we find lower and upper bounds for $h_{\epsilon}^{p}(P_{X_1X_2Y})$ defined in \eqref{main111}. To find lower bounds we use similar techniques as used in Theorem~\ref{th.1} and Proposition~\ref{prop111}, i.e., we use extended versions of FRL and SFRL for correlated $(X_1,X_2)$ and $U$.
\begin{theorem}\label{koonbekoon}
	For any $0\leq \epsilon$ and RVs $(X_1,X_2,Y)$ distributed according to $P_{X_1X_2Y}$ supported on alphabets $\mathcal{X}_1$, $\mathcal{X}_2$ and $\mathcal{Y}$ we have
	\begin{align}\label{fuckbaba}
	\max\{L_{h^{p}}^{1}(\epsilon),L_{h^{p}}^{2}(\epsilon),L_{h^{p}}^{3}(\epsilon)\}\leq h_{\epsilon}^{p}(P_{X_1X_2Y})\leq U_{h^{p}}^{1}(\epsilon) ,
	\end{align}
	where
	\begin{align*}
	L_{h^{p}}^{1}(\epsilon) &= \epsilon+H(Y|X_1,X_2)-H(X_1,X_2|Y),\\
	L_{h^{p}}^{2}(\epsilon) &= \epsilon+H(Y|X_1,X_2)-\alpha H(X_2|Y) -\left( \log(I(X_1,X_2;Y)+1)+4 \right),\\
	L_{h^{p}}^{3}(\epsilon) &= \epsilon+H(Y|X_1,X_2)-\alpha H(X_1,X_2|Y) -(1-\alpha)\left( \log(I(X_1,X_2;Y)+1)+4 \right),\\
	U_{h^{p}}^{1}(\epsilon) &= \epsilon+H(Y|X_1,X_2),
	\end{align*}
	with $\alpha=\frac{\epsilon}{H(X_2)}$.
\end{theorem}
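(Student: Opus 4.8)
The plan is to reduce everything to the identity \eqref{key} applied to the aggregated private variable $X=(X_1,X_2)$:
\begin{align*}
I(Y;U)=I(X_1,X_2;U)+H(Y|X_1,X_2)-H(Y|U,X_1,X_2)-I(X_1,X_2;U|Y).
\end{align*}
The upper bound $U_{h^p}^1(\epsilon)$ is then immediate: for every feasible $U$ we have $I(X_1,X_2;U)\le\epsilon$, while $H(Y|U,X_1,X_2)\ge 0$ and $I(X_1,X_2;U|Y)\ge 0$, so $I(Y;U)\le\epsilon+H(Y|X_1,X_2)$. Note that the priority constraint is not needed for this direction.

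For the lower bounds I would reuse the mechanisms behind Theorem~\ref{th.1} and Proposition~\ref{prop111}, i.e.\ the constructions of Lemma~\ref{lemma3} (EFRL) and Lemma~\ref{lemma4} (ESFRL), but with one essential change dictated by the priority requirement: the injected ``randomized response'' should reveal only $X_2$, never $X_1$. Concretely, if the extra component equals $X_2$ with probability $\alpha$ and is erased otherwise, then $I(U;X_2)=\alpha H(X_2)$ whereas $I(U;X_1)=\alpha I(X_1;X_2)$, and since $I(X_1;X_2)\le H(X_2)$ the priority constraint $I(U;X_1)\le I(U;X_2)$ holds automatically. Choosing $\alpha=\epsilon/H(X_2)$ makes $I(U;X_1,X_2)=\alpha H(X_2)=\epsilon$, so the leakage budget is met with equality, and in each branch $Y$ stays a deterministic function of $(U,X_1,X_2)$, i.e.\ $H(Y|U,X_1,X_2)=0$.

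With $I(X_1,X_2;U)=\epsilon$ and $H(Y|U,X_1,X_2)=0$ the identity collapses to $I(Y;U)=\epsilon+H(Y|X_1,X_2)-I(X_1,X_2;U|Y)$, so each lower bound follows from a different estimate of $I(X_1,X_2;U|Y)$. Bounding it by $H(X_1,X_2|Y)$ for the EFRL-based mechanism yields $L_{h^p}^1(\epsilon)$. For $L_{h^p}^3(\epsilon)$ I would build the mechanism on the Strong Functional Representation Lemma and use a Bernoulli($\alpha$) switch between ``reveal $X_2$'' and the SFRL output; the non-reveal branch contributes $(1-\alpha)(\log(I(X_1,X_2;Y)+1)+4)$ and the reveal branch contributes at most $\alpha H(X_1,X_2|Y)$ (crudely, via $I(X_1,X_2;X_2|Y)\le H(X_1,X_2|Y)$), giving the convex-combination form. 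For $L_{h^p}^2(\epsilon)$ I would instead keep the SFRL output present in every realization and inject the randomized response of $X_2$ as an independent side component; this trades the $(1-\alpha)$ weight on the penalty for the sharper reveal estimate $\alpha H(X_2|Y)$ (using $I(X_1,X_2;X_2|Y)=H(X_2|Y)$). The two estimates are in general incomparable, which is why the maximum of the three appears in \eqref{fuckbaba}.

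The main obstacle is precisely the priority constraint $I(U;X_1)\le I(U;X_2)$, which the plain EFRL/ESFRL outputs do not respect; the fix is to route all deliberately injected leakage through $X_2$, so that whatever leaks about $X_1$ only passes through the correlation $I(X_1;X_2)$. A secondary routine point is to take the injected coin (and the erasure randomness) independent of the FRL/SFRL variable, so that the leakage and the conditional-information estimates split cleanly across the two components; once this independence is arranged the computations are the same as in Theorem~\ref{th.1} and Proposition~\ref{prop111}.
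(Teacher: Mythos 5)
Your overall route is the paper's: apply the identity \eqref{key} with $X=(X_1,X_2)$, get the upper bound by dropping the nonnegative terms, and get the lower bounds from EFRL/ESFRL-type constructions $U=(\bar U,W)$ in which the injected randomized response $W$ reveals only $X_2$ (with probability $\alpha=\epsilon/H(X_2)$), so that $I(U;X_1,X_2)=\alpha H(X_2)=\epsilon$ while $I(U;X_1)=\alpha I(X_1;X_2)\le \alpha H(X_2)=I(U;X_2)$. Your explicit verification of the priority constraint is welcome; the paper's proof of $L_{h^{p}}^{1}(\epsilon)$ invokes the EFRL on the pair $(X_1,X_2)$ directly and never checks that constraint, whereas your variant (FRL output plus a response of $X_2$ only) makes it automatic and still yields $I(U;X_1,X_2)=\epsilon$, $H(Y|U,X_1,X_2)=0$, hence $L_{h^{p}}^{1}(\epsilon)$. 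The upper bound and the $L_{h^{p}}^{2}(\epsilon)$ argument are also sound and coincide with the paper's inequality \eqref{jakesh2}.

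The one genuine problem is your construction for $L_{h^{p}}^{3}(\epsilon)$. A literal Bernoulli($\alpha$) \emph{switch} between ``reveal $X_2$'' and ``output the SFRL variable $\bar U$'' does not keep $Y$ a deterministic function of $(U,X_1,X_2)$: in the reveal branch $U$ carries nothing about $Y$ beyond $(X_1,X_2)$ and the coin, so $H(Y|U,X_1,X_2)=\alpha H(Y|X_1,X_2)>0$ in general. The identity then gives $I(Y;U)=\epsilon+(1-\alpha)H(Y|X_1,X_2)-I(X_1,X_2;U|Y)$, and the resulting bound, $\epsilon+(1-\alpha)H(Y|X_1,X_2)-\alpha H(X_2|Y)-(1-\alpha)\bigl(\log(I(X_1,X_2;Y)+1)+4\bigr)$, is not $L_{h^{p}}^{3}(\epsilon)$ and can be strictly weaker (it forfeits $\alpha H(Y|X_1,X_2)$). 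The fix is to use the \emph{same} $U=(\bar U,W)$ as for $L_{h^{p}}^{2}(\epsilon)$ --- the SFRL output present in every realization --- and to obtain the convex-combination form purely at the level of the estimate, by conditioning on the coin inside $I(X_1,X_2;\bar U,W|Y)$: the reveal branch gives $I(X_1,X_2;\bar U,X_2|Y)\le H(X_1,X_2|Y)$ and the non-reveal branch gives $I(X_1,X_2;\bar U|Y)\le \log(I(X_1,X_2;Y)+1)+4$, so $I(X_1,X_2;U|Y)\le \alpha H(X_1,X_2|Y)+(1-\alpha)\bigl(\log(I(X_1,X_2;Y)+1)+4\bigr)$. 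This is exactly the paper's derivation of \eqref{jakesh3}: both $L_{h^{p}}^{2}(\epsilon)$ and $L_{h^{p}}^{3}(\epsilon)$ come from one construction and two different bounds on $I(X_1,X_2;U|Y)$, not from two constructions.
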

\begin{proof}
	The proof is provided in Appendix~D.
\end{proof}
To compare the lower bounds $L_{h^{p}}^{2}(\epsilon)$ and $L_{h^{p}}^{3}(\epsilon)$ we consider three cases as follows. First, let $X_1$ be a deterministic function of $Y$, then we have $H(X_1,X_2|Y)=H(X_2|Y)$. Hence, in this case $L_{h^{12}}^{1}(\epsilon)\geq L_{h^{p}}^{3}(\epsilon)\geq L_{h^{p}}^{2}(\epsilon)$. Next, let $X_2$ be a deterministic function of $Y$ and assume $4+H(Y)\leq H(X_1|Y)$. In this case, we have
\begin{align*}
L_{h^{p}}^{2}(\epsilon)-L_{h^{p}}^{3}(\epsilon)&=\alpha \left(H(X_1|Y)-\log(I(X_1;Y)+H(X_2|X_1)+1)-4\right)\\& \stackrel{(a)}{\geq} \alpha\left(H(X_1|Y)-I(X_1;Y)-H(X_2|X_1)-4\right)\\&  \stackrel{(b)}{\geq} \alpha\left(H(X_1|Y)-I(X_1;Y)-H(Y|X_1)-4\right)\\ &= \alpha\left( H(X_1|Y)-H(Y)-4\right)\\&\geq 0,
\end{align*} 
where (a) follows since $\log(1+x)\leq x$ and in step (b) we use $H(X_2|X_1)\leq H(Y|X_1)$ since $H(X_2|Y)=0$. So, in this case $L_{h^{p}}^{1}(\epsilon)\geq L_{h^{p}}^{2}(\epsilon)\geq L_{h^{p}}^{3}(\epsilon)$. Finally, let $Y$ be independent of $(X_1,X_2)$ and assume $H(X_1,X_2)\geq 4$. In this case we have $L_{h^{p}}^{2}(\epsilon)\geq L_{h^{p}}^{1}(\epsilon)$ and $L_{h^{p}}^{3}(\epsilon)\geq L_{h^{p}}^{1}(\epsilon)$.\\
The upper bound $U_{h^{p}}^{1}(\epsilon)$ is attained whenever the pair $(X_1,X_2)$ is a deterministic function of $Y$. In this case $U_{h^{p}}^{1}(\epsilon)=L_{h^{p}}^{1}(\epsilon)$.\\
Similar to Lemma~\ref{kir} and Proposition~\ref{kirr} it can be shown that if $\tilde{U}$ is an optimizer of $h_{\epsilon}^{p}(P_{X_1X_2Y})$, then $Y$ is a deterministic function of $\tilde{U}$ and $(X_1,X_2)$.
\begin{proposition}
	Let $\tilde{U}$ be an optimizer of $h_{\epsilon}^{p}(P_{X_1X_2Y})$, then
	\begin{align*}
	H(Y|X_1,X_2,\tilde{U})=0.
	\end{align*}
\end{proposition}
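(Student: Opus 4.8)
The plan is to argue by contradiction, reusing the construction behind Lemma~\ref{kir} and Proposition~\ref{kirr}. Suppose $\tilde{U}$ is an optimizer of $h_{\epsilon}^{p}(P_{X_1X_2Y})$ for which $H(Y|X_1,X_2,\tilde{U})>0$. First I would produce, exactly as in the proof of Lemma~\ref{kir}, an auxiliary random variable $U'$ that is independent of the triple $(X_1,X_2,\tilde{U})$ and for which $Y$ becomes a deterministic function of $(X_1,X_2,\tilde{U},U')$; concretely $U'$ is obtained by applying the Functional Representation Lemma to the conditional law of $Y$ given $\tilde{U}$. I then set $U=(\tilde{U},U')$ and aim to show that $U$ is feasible for \eqref{main111} yet strictly improves the objective, contradicting optimality of $\tilde{U}$.

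Next I would verify feasibility. Because $U'\perp(X_1,X_2,\tilde{U})$ we have $I(U';X_1,X_2\mid\tilde{U})=0$, so the chain rule gives $I(U;X_1,X_2)=I(\tilde{U};X_1,X_2)\leq\epsilon$, i.e.\ the total-leakage constraint is inherited unchanged. The genuinely new point relative to Lemma~\ref{kir} and Proposition~\ref{kirr} is the priority constraint $I(U;X_1)\leq I(U;X_2)$. Here the same independence yields $I(U';X_1\mid\tilde{U})=I(U';X_2\mid\tilde{U})=0$, whence $I(U;X_1)=I(\tilde{U};X_1)$ and $I(U;X_2)=I(\tilde{U};X_2)$; since $\tilde{U}$ is feasible we already have $I(\tilde{U};X_1)\leq I(\tilde{U};X_2)$, and the inequality transfers verbatim to $U$. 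Thus $U$ satisfies both constraints of \eqref{main111}.

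Finally I would compare the utilities using the identity \eqref{key} with $(X_1,X_2)$ playing the role of $X$. Applying it to both $U$ and $\tilde{U}$, and using $I(X_1,X_2;U)=I(X_1,X_2;\tilde{U})$ together with $H(Y|X_1,X_2,U)=0$, the difference collapses to $I(Y;U)-I(Y;\tilde{U})=H(Y|X_1,X_2,\tilde{U})-I(X_1,X_2;U'\mid Y,\tilde{U})$. The hard part is exactly this last step: since $H(Y|X_1,X_2,\tilde{U})>0$ by assumption, one must guarantee that $U'$ can be chosen so that the residual term $I(X_1,X_2;U'\mid Y,\tilde{U})$ is strictly smaller than $H(Y|X_1,X_2,\tilde{U})$, so that the gain is genuinely positive rather than cancelled. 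This is precisely the mechanism established in the proof of Lemma~\ref{kir} (through the functional-representation construction together with the excess-functional-information bound), and I would invoke that argument here rather than redo it. Concluding $I(Y;U)>I(Y;\tilde{U})$ then contradicts the optimality of $\tilde{U}$ and forces $H(Y|X_1,X_2,\tilde{U})=0$.
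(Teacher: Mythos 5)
Your argument is correct and is essentially the paper's own proof: argue by contradiction, apply the FRL with $(X_1,X_2,\tilde{U})$ in place of $X$ and the same $Y$ to obtain $U'$ independent of $(X_1,X_2,\tilde{U})$ with $H(Y|X_1,X_2,\tilde{U},U')=0$, use that independence to transfer both the total-leakage constraint and the priority constraint $I(U;X_1)\leq I(U;X_2)$ verbatim to $U=(\tilde{U},U')$, and conclude with a strict utility gain. The one inaccuracy is your attribution of the strict gain to the excess-functional-information bound: in Lemma~\ref{kir} the gain comes from the elementary chain $I(Y;U)\geq I(Y;U')+I(Y;\tilde{U})$ (valid since $I(U';\tilde{U})=0$) together with $I(Y;U')>0$, which \cite[Th.~5]{kostala} guarantees precisely because $H(Y|X_1,X_2,\tilde{U})>0$; equivalently, in your decomposition $I(X_1,X_2;U'\mid Y,\tilde{U})=H(Y|X_1,X_2,\tilde{U})-I(Y,\tilde{U};U')$, so the residual term is automatically strictly smaller than $H(Y|X_1,X_2,\tilde{U})$ and the "hard part" you flag closes itself.
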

\begin{proof}
	The proof follows similar arguments as in Lemma~\ref{kir}. Let $\tilde{U}$ be an optimizer of $h_{\epsilon}^{p}(P_{X_1,X_2,Y})$ and $H(Y|X_1,X_2,\tilde{U})>0$. Consequently, $I(X_1,X_2;\tilde{U})\leq \epsilon$ and $I(X_1;\tilde{U})\leq I(X_2;\tilde{U})$. Let $U'$ be produced by FRL using $(X_1,X_2,\tilde{U})$ instead of $X$ in Lemma~\ref{lemma1} and same $Y$. Thus, $I(Y;U')>0$ and by letting $U=(U',\tilde{U})$ and using similar arguments as in Lemma~\ref{kir} we have $I(Y;U)>I(Y;\tilde{U})$. Furthermore,
	\begin{align*}
	I(X_1,X_2,U)&\stackrel{(a)}{=} I(X_1,X_2;\tilde{U})\leq \epsilon,\\
	I(X_1;U)&\stackrel{(b)}{=}I(X_1;\tilde{U})\leq I(X_2;\tilde{U})\stackrel{(c)}{=}I(X_2;U),
	\end{align*}
	where (a), (b) and (c) follow from the fact that $U'$ is independent of $(X_1,X_2,\tilde{U})$. Thus, $U$ achieves strictly larger utility than $\tilde{U}$ which contradicts the optimality of $\tilde{U}$. 
\end{proof}

\section{conclusion}\label{concul1}
Different information theoretic data disclosure problems have been studied in this work. The FRL and SRFL have been extended by relaxing the independence constraint and allowing certain amount of leakage using different privacy measures. It has been shown that by using extended versions of the FRL and SFRL lower bounds on privacy-utility trade-off functions can be derived. The results are useful since the proofs are constructive and therefore valuable for mechanism design and the bounds on optimality serve as a benchmark. 
Concepts from information geometry can be used to find lower bounds on privacy-utility trade-off functions considering first scenario when per-letter privacy constraints (strong privacy criterions) are used. 
\section*{Appendix A}
\subsection*{Proofs for Section~\ref{sec:system}:}
\textbf{\emph{Proof of Proposition \ref{gooz}:}} For each $u\in\mathcal{U}$ we have
\begin{align*}
\mathcal{L}^1(X;U=u)&= \left\lVert P_{X|U=u}(\cdot|u)-P_X \right\rVert_1\\&=\left\lVert P_{X|Y}(P_{Y|U=u}(\cdot|u)-P_Y)\right\rVert_1\\&=\sum_x |\sum_y P_{X|Y}(x,y)(P_{Y|U=u}(y)\!-\!P_Y(y))|\\
& \stackrel{(a)}{\leq} \sum_x\sum_y P_{X|Y}(x,y)|P_{Y|U=u}(y)-P_Y(y)|\\&=\sum_y\sum_x P_{X|Y}(x,y)|P_{Y|U=u}(y)-P_Y(y)|\\&=\sum_y |P_{Y|U=u}(y)-P_Y(y)|\\&= \left\lVert P_{Y|U=u}(\cdot|u)\!-\!P_Y \right\rVert_1=\mathcal{L}^1(Y;U=u),
\end{align*}
where (a) follows from the triangle inequality. Furthermore, we can multiply all the above expressions by the term $P_U(u)$ and we obtain
\begin{align*}
\mathcal{L}^2(X;U=u)\leq \mathcal{L}^2(Y;U=u).
\end{align*}
\section*{Appendix B}
\subsection*{Proofs for \emph{Privacy-utility trade-off with non-zero leakage}:}
\textbf{\emph{Proof of Lemma \ref{lemma3}:}}
Let $\tilde{U}$ be the RV found by FRL and let $W=\begin{cases}
X,\ \text{w.p}.\ \alpha\\
c,\ \ \text{w.p.}\ 1-\alpha
\end{cases}$, where $c$ is a constant which does not belong to the support of $X$ and $Y$ and $\alpha=\frac{\epsilon}{H(X)}$. We show that $U=(\tilde{U},W)$ satisfies the conditions. We have
\begin{align*}
I(X;U)&=I(X;\tilde{U},W)\\&=I(\tilde{U};X)+I(X;W|\tilde{U})\\&\stackrel{(a)}{=}H(X)-H(X|\tilde{U},W)\\&=H(X)-\alpha H(X|\tilde{U},X)-(1-\alpha)H(X|\tilde{U},c)\\&=H(X)-(1-\alpha)H(X)=\alpha H(X)=\epsilon,	
\end{align*}
where in (a) we used the fact that $X$ and $\tilde{U}$ are independent. Furthermore,
\begin{align*}
H(Y|X,U)&=H(Y|X,\tilde{U},W)\\&=\alpha H(Y|X,\tilde{U})+(1-\alpha)H(Y|X,\tilde{U},c)\\&=H(Y|X,\tilde{U})=0.
\end{align*}
In the last line we used the fact that $\tilde{U}$ is produced by FRL.\\
\textbf{\emph{Proof of Lemma \ref{lemma4}:}}
Let $\tilde{U}$ be the RV found by SFRL and $W$ be the same RV which is used to prove Lemma~\ref{lemma3}. It is sufficient to show that $I(X;U|Y)\leq \alpha H(X|Y)+(1-\alpha)\left[ \log(I(X;Y)+1)+4\right]$ since all other properties are already proved in Lemma~3. We have
\begin{align*}
I(X;\tilde{U},W|Y)&=I(X;\tilde{U}|Y)+I(X,W|\tilde{U},Y)\\&\stackrel{(a)}{=}I(X;\tilde{U}|Y)+\alpha H(X|\tilde{U},Y)\\&=I(X;\tilde{U}|Y)+\alpha(H(X|Y)-I(X;\tilde{U}|Y))\\&=\alpha H(X|Y)+(1-\alpha)I(X;\tilde{U}|Y)\\&\stackrel{(b)}{\leq} \!\alpha H(X|Y)\!+\!(1-\alpha)\!\left[ \log(I(X;Y)\!+\!1)\!+\!4\right],
\end{align*}
where in step (a) we used the fact that 
\begin{align*}
I(X,W|\tilde{U},Y) &= H(X|\tilde{U},Y)-H(X|W,\tilde{U},Y)\\&=
H(X|\tilde{U},Y)-(1-\alpha)H(X|\tilde{U},Y)\\&=\alpha H(X|\tilde{U},Y),
\end{align*}
and (b) follows since $\tilde{U}$ is produced by SFRL.
\begin{lemma}\label{haa}
	For any pair of RVs $(X,Y)$ distributed according to $P_{XY}$ supported on alphabets $\mathcal{X}$ and $\mathcal{Y}$, where $|\mathcal{X}|$ is finite and $|\mathcal{Y}|$ is finite or countably infinite, there exists RV $U$ such that it satisfies \eqref{c1}, \eqref{c2}, and
	\begin{align*}
	H(U)\leq \sum_{x\in\mathcal{X}}H(Y|X=x)+\epsilon+h(\alpha)
	\end{align*}
	with $\alpha=\frac{\epsilon}{H(X)}$ and $h(\cdot)$ denotes the binary entropy function.
\end{lemma}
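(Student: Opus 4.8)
The plan is to reuse the construction already employed in the proof of Lemma~\ref{lemma3}. Let $\tilde U$ be the random variable furnished by the Functional Representation Lemma (Lemma~\ref{lemma1}), so that $I(\tilde U;X)=0$ and $H(Y|\tilde U,X)=0$, and let
\begin{align*}
W=\begin{cases}
X,& \text{w.p. } \alpha,\\
c,& \text{w.p. } 1-\alpha,
\end{cases}
\end{align*}
with $c\notin\mathcal X\cup\mathcal Y$ and $\alpha=\frac{\epsilon}{H(X)}$, exactly as in Lemma~\ref{lemma3}. Setting $U=(\tilde U,W)$, the structural conditions required of $U$ --- the determinism $H(Y|U,X)=0$ and the prescribed leakage --- are inherited verbatim from the computations already carried out in the proof of Lemma~\ref{lemma3}, so the only genuinely new content is the entropy estimate.

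For the entropy bound I would apply the chain rule followed by subadditivity,
\begin{align*}
H(U)=H(\tilde U,W)=H(\tilde U)+H(W|\tilde U)\leq H(\tilde U)+H(W),
\end{align*}
and then control the two terms separately. The term $H(\tilde U)$ is bounded by the companion result for the ordinary Functional Representation Lemma, \cite[Lemma~2]{kostala}, which guarantees that the variable produced by Lemma~\ref{lemma1} can be taken to satisfy $H(\tilde U)\leq \sum_{x\in\mathcal X}H(Y|X=x)$. A concrete witness is the coupling in which, for each $x\in\mathcal X$, an independent component distributed as $P_{Y|X=x}$ is drawn and $Y$ is read off the $X$-th component; this is independent of $X$, makes $Y$ a deterministic function of $(\tilde U,X)$, and has entropy additive over $x$, giving exactly $\sum_{x\in\mathcal X}H(Y|X=x)$.

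It then remains to evaluate $H(W)$, which is a short computation since $W$ is a mixture that reveals $X$ with probability $\alpha$ and is otherwise the constant $c$. Grouping the outcomes yields
\begin{align*}
H(W)=h(\alpha)+\alpha H(X)=h(\alpha)+\epsilon,
\end{align*}
the last equality using $\alpha H(X)=\epsilon$. Combining the three displays gives $H(U)\leq \sum_{x\in\mathcal X}H(Y|X=x)+\epsilon+h(\alpha)$, as claimed; note that at $\epsilon=0$ this collapses to the bound of \cite[Lemma~2]{kostala}, confirming that the statement is a genuine generalization.

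I do not anticipate a serious obstacle: the determinism and leakage properties are already discharged in Lemma~\ref{lemma3}, and the $H(\tilde U)$ estimate is the cited companion lemma, so the only step requiring care is the mixture-entropy computation for $W$ together with the observation that plain subadditivity $H(W|\tilde U)\le H(W)$ suffices (no finer handling of the dependence between $\tilde U$ and $W$ is needed, since the target bound is additive). The only mild subtlety is bookkeeping --- ensuring the additive $\epsilon$ arises cleanly as $\alpha H(X)$ via the grouping identity for $H(W)$ and is not conflated with the leakage term --- which the computation above settles directly.
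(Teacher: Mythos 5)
Your proposal is correct and follows essentially the same route as the paper's proof: the same construction $U=(\tilde U,W)$ with $\tilde U$ from the FRL and $W$ the $\alpha$-randomized copy of $X$, the same subadditivity step $H(U)\le H(\tilde U)+H(W)$, the bound $H(\tilde U)\le\sum_{x}H(Y|X=x)$ from \cite[Lemma~2]{kostala}, and the grouping identity $H(W)=h(\alpha)+\alpha H(X)=h(\alpha)+\epsilon$. Your reading of the structural conditions as ``determinism plus the prescribed leakage $I(U;X)=\epsilon$'' matches the paper's intent (the lemma generalizes \cite[Lemma~2]{kostala} to dependent $X$ and $U$), so there is nothing to correct.
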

\begin{proof}
	Let $U=(\tilde{U},W)$ where $W$ is the same RV used in Lemma~\ref{lemma3} and $\tilde{U}$ is produced by FRL which has the same construction as used in proof of \cite[Lemma~1]{kostala}. Thus, by using \cite[Lemma~2]{kostala} we have
	\begin{align*}
	H(\tilde{U})\leq \sum_{x\in\mathcal{X}} H(Y|X=x),
	\end{align*} 
	therefore,
	\begin{align*}
	H(U)&=H(\tilde{U},W)\leq H(\tilde{U})+H(W),\\&\leq\sum_{x\in\mathcal{X}} H(Y|X=x)+H(W),
	\end{align*}
	where,
	\begin{align*}
	H(W)\! &= -(1-\alpha)\log(1-\alpha)\!-\!\!\sum_{x\in \mathcal{X}} \alpha P_X(x)\log(\alpha P_X(x)),\\&=h(\alpha)+\alpha H(X),
	\end{align*}
	which completes the proof.
\end{proof}
\textbf{\emph{Proof of Theorem \ref{th.1}:}} $L_3^{\epsilon}$ can be derived by using \cite[Remark~2]{shahab}, since we have $h_{\epsilon}(P_{XY})\geq g_{\epsilon}(P_{XY})\geq L_3^{\epsilon}$. For deriving $L_1$, let $U$ be produced by EFRL. Thus, using the construction of $U$ as in Lemma~\ref{lemma3} we have $I(X,U)=\epsilon$ and $H(Y|X,U)=0$. Then, using \eqref{key} we obtain
\begin{align*}
h_{\epsilon}(P_{XY})&\geq I(U;Y)\\&=I(X;U)\!+\!H(Y|X)\!-\!H(Y|U,X)\!-\!I(X;U|Y)\\&=\epsilon+H(Y|X)-H(X|Y)+H(X|Y,U)\\ &\geq\epsilon+H(Y|X)-H(X|Y)=L_1.
\end{align*} For deriving $L_2^{\epsilon}$, let $U$ be produced by ESFRL. Thus, by using the construction of $U$ as in Lemma~\ref{lemma4} we have $I(X,U)=\epsilon$, $H(Y|X,U)=0$ and $I(X;U|Y)\leq \alpha H(X|Y)+(1-\alpha)\left(\log(I(X;Y)+1)+4\right)$. Then, by using \eqref{key} we obtain
\begin{align*}
h_{\epsilon}(P_{XY})&\geq I(U;Y)\\&=I(X;U)\!+\!H(Y|X)\!-\!H(Y|U,X)\!-\!I(X;U|Y)\\&=\epsilon+H(Y|X)-I(X;U|Y)\\&\geq\epsilon+H(Y|X)-\alpha H(X|Y)\\&\ +(1-\alpha)\left(\log(I(X;Y)+1)+4\right)=L_2^{\epsilon}.
\end{align*}
Let $X$ be a deterministic function of $Y$. In this case, set $\epsilon=0$ in $L_1^{\epsilon}$ so that we obtain $h_0(P_{XY})\geq H(Y|X)$. Furthermore, by using \eqref{key} we have $h_0(P_{XY})\leq H(Y|X)$. Moreover, since $X$ is a deterministic function of $Y$, the Markov chain $X-Y-U$ holds and we have $h_0(P_{XY})=g_0(P_{XY})=H(Y|X)$. Therefore, $L_3^{\epsilon}$ can be rewritten as
\begin{align*}
L_3^{\epsilon}&=\epsilon\frac{H(Y)}{H(X)}+H(Y|X)\left(\frac{H(X)-\epsilon}{H(X)}\right),\\&=\epsilon\frac{H(Y)}{H(X)}+(H(Y)-H(X))\left(\frac{H(X)-\epsilon}{H(X)}\right),\\&=H(Y)-H(X)+\epsilon.
\end{align*}
$L_2^{\epsilon}$ can be rewritten as follows
\begin{align*}
L_2^{\epsilon}=H(Y|X)+\epsilon-(1-\frac{\epsilon}{H(X)})(\log(H(X)+1)+4).
\end{align*}
Thus, if $H(X|Y)=0$, then $L_1^{\epsilon}=L_3^{\epsilon}\geq L_2^{\epsilon}$. Now we show that $L_1^{\epsilon}=L_3^{\epsilon}$ is tight. By using \eqref{key} we have
\begin{align*}
I(U;Y) &\stackrel{(a)}{=} I(X;U)+H(Y|X)-H(Y|U,X),\\&\leq \epsilon+H(Y|X)=L_1^{\epsilon}=L_3^{\epsilon}.
\end{align*} 
where (a) follows since $X$ is deterministic function of $Y$ which leads to $I(X;U|Y)=0$. Thus, if $H(X|Y)=0$, the lower bound in \eqref{th2} is tight.
Now suppose that the lower bound $L_1^{\epsilon}$ is tight and $X$ is not a deterministic function of $Y$. Let $\tilde{U}$ be produced by FRL using the construction of \cite[Lemma~1]{kostala}. As argued in the proof of \cite[Th.~6]{kostala}, there exists $x\in\cal X$ and $y_1,y_2\in\cal Y$ such that $P_{X|\tilde{U},Y}(x|\tilde{u},y_1)>0$ and $P_{X|\tilde{U},Y}(x|\tilde{u},y_2)>0$ which results in $H(X|Y,\tilde{U})>0$. Let $U=(\tilde{U},W)$ where $W$ is defined in Lemma~\ref{lemma3}. For such $U$ we have
\begin{align*}
H(X|Y,U)&=(1-\alpha)H(X|Y,\tilde{U})>0,\\
\Rightarrow I(U;Y)&\stackrel{(a)}{=}\epsilon+H(Y|X)-H(X|Y)+H(X|Y,U)\\
&>\epsilon+H(Y|X)-H(X|Y).
\end{align*}
where in (a) we used the fact that such $U$ satisfies $I(X;U)=\epsilon$ and $H(Y|X,U)=0$. The last line is a contradiction with tightness of $L_1^{\epsilon}$, since we can achieve larger values, thus, $X$ needs to be a deterministic function of $Y$.
\begin{lemma}\label{koonimooni}
	For any pair of RVs $(X,Y)$ distributed according to $P_{XY}$ supported on alphabets $\mathcal{X}$ and $\mathcal{Y}$, then if $U$ satisfies $I(X;U)\leq \epsilon$, $H(Y|X,U)=0$ and $|\mathcal{U}|\leq \left[|\mathcal{X}|(|\mathcal{Y}|-1)+1\right]\left[|\mathcal{X}|+1\right]$, we have
	\begin{align*}
	\sup_U H(U)\!&\geq\! \alpha H(Y|X)\!+\!(1-\alpha)(\max_{x\in\mathcal{X}}H(Y|X=x))\!\\&+\!h(\alpha)\!+\!\epsilon \geq H(Y|X)\!+\!h(\alpha)\!+\!\epsilon,
	\end{align*}
	where $\alpha=\frac{\epsilon}{H(X)}$ and $h(\cdot)$ corresponds to the binary entropy.
\end{lemma}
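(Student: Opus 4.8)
The plan is to prove the statement by exhibiting one admissible $U$ that already attains the claimed entropy, since $\sup_U H(U)$ dominates $H$ of any feasible choice. Concretely, I would take $U=(\tilde U,W)$ to be exactly the random variable built in the proof of Lemma~\ref{lemma3} (EFRL), where $\tilde U$ is the Functional Representation Lemma variable of Lemma~\ref{lemma1} and $W$ is the randomized response with $W=X$ with probability $\alpha=\frac{\epsilon}{H(X)}$ and $W=c$ otherwise. Lemma~\ref{lemma3} already certifies that this $U$ meets all three constraints appearing in the statement, namely $I(X;U)=\epsilon\le\epsilon$, $H(Y|X,U)=0$, and $|\mathcal U|\le[|\mathcal X|(|\mathcal Y|-1)+1][|\mathcal X|+1]$. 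Hence $\sup_U H(U)\ge H(\tilde U,W)$, and the task reduces to lower bounding $H(\tilde U,W)$.

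For that I would first use that $\tilde U$ is independent of $X$ and of the fresh coin defining $W$, so $\tilde U\perp W$ and the chain rule gives $H(\tilde U,W)=H(\tilde U)+H(W)$. The term $H(W)$ is computed exactly as in the proof of Lemma~\ref{haa}: since $P_W(c)=1-\alpha$ and $P_W(x)=\alpha P_X(x)$, one obtains $H(W)=h(\alpha)+\alpha H(X)=h(\alpha)+\epsilon$. The nontrivial term is the lower bound on $H(\tilde U)$, where I would exploit the two defining properties of the FRL variable, $\tilde U\perp X$ and $Y=f(\tilde U,X)$ for a deterministic $f$. Fixing any $x$, independence yields $H(\tilde U)=H(\tilde U\mid X=x)$, while the deterministic relation gives $H(Y\mid X=x)=H(f(\tilde U,x)\mid X=x)\le H(\tilde U\mid X=x)$. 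Taking the maximum over $x$ then shows $H(\tilde U)\ge\max_{x\in\mathcal X}H(Y\mid X=x)$.

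Combining the pieces gives $H(U)=H(\tilde U)+H(W)\ge\max_{x}H(Y\mid X=x)+h(\alpha)+\epsilon$. Since $\max_x H(Y\mid X=x)\ge\sum_x P_X(x)H(Y\mid X=x)=H(Y\mid X)$, the maximum dominates any convex combination of $H(Y\mid X)$ and itself, so $\max_x H(Y\mid X=x)\ge\alpha H(Y\mid X)+(1-\alpha)\max_x H(Y\mid X=x)\ge H(Y\mid X)$, which delivers both stated inequalities simultaneously. The only step that requires genuine care is the bound $H(\tilde U)\ge\max_x H(Y\mid X=x)$; everything else is the independence bookkeeping together with the entropy computation already performed for Lemma~\ref{haa}. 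I would also remark that this route in fact establishes the stronger bound with $\max_x H(Y\mid X=x)$ in place of the convex combination, from which the statement as written is immediate.
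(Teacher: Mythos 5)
Your proof is correct, and it takes a genuinely different route from the paper's. The paper also starts from the EFRL construction $U=(\tilde U,W)$, but it never uses the independence of $\tilde U$ and $W$ directly: instead it runs the identity \eqref{key} twice (once for $\tilde U$, once for $U$), decomposes $H(U|Y)$ as $H(W|Y)+\alpha H(\tilde U|Y,X)+(1-\alpha)H(\tilde U|Y)$, computes $H(W|Y)=h(\alpha)+\alpha H(X|Y)$, and invokes \cite[Lemma~3]{kostala} as a black box for $H(\tilde U)\ge\max_x H(Y|X=x)$; the $\alpha$-weighted convex combination in the stated bound is an artifact of this conditioning on whether $W=X$ or $W=c$. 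You instead observe that $W$ is a function of $(X,B)$ for a fresh coin $B$ with $(X,B)\perp\tilde U$, so $H(U)=H(\tilde U)+H(W)$ exactly, compute $H(W)=h(\alpha)+\epsilon$ as in Lemma~\ref{haa}, and re-derive the key inequality $H(\tilde U)\ge\max_x H(Y|X=x)$ from the two FRL properties ($H(\tilde U)=H(\tilde U|X=x)$ and $H(f(\tilde U,x)|X=x)\le H(\tilde U|X=x)$), which is precisely the content of the cited external lemma. Your route is shorter, avoids the conditional-entropy bookkeeping, and in fact yields the strictly stronger conclusion $\sup_U H(U)\ge\max_x H(Y|X=x)+h(\alpha)+\epsilon$, from which the stated convex-combination bound follows since $\max_x H(Y|X=x)\ge H(Y|X)$; the only thing the paper's longer derivation buys is stylistic consistency with its repeated use of \eqref{key} elsewhere.
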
 
\begin{proof}
	Let $U=(\tilde{U},W)$ where $W=\begin{cases}
	X,\ \text{w.p}.\ \alpha\\
	c,\ \ \text{w.p.}\ 1-\alpha
	\end{cases}$, and $c$ is a constant which does not belong to the support of $X$, $Y$ and $\tilde{U}$, furthermore, $\tilde{U}$ is produced by FRL. Using \eqref{key} and \cite[Lemma~3]{kostala} we have
	\begin{align}
	H(\tilde{U}|Y)&=H(\tilde{U})-H(Y|X)+I(X;\tilde{U}|Y)\nonumber\\ &\stackrel{(a)}{\geq} \max_{x\in\mathcal{X}} H(Y|X=x)-H(Y|X)\nonumber\\&\ +H(X|Y)-H(X|Y,\tilde{U}),\label{toole}
	\end{align} 	
	where (a) follows from \cite[Lemma~3]{kostala}. Furthermore, in the first line we used $I(X;\tilde{U})=0$ and $H(Y|\tilde{U},X)=0$. Using \eqref{key} we obtain
	\begin{align*}
	H(U)&\stackrel{(a)}{=}\!H(U|Y)\!+\!H(Y|X)\!-\!H(X|Y)\!+\!\epsilon\!+\!H(X|Y,U),\\
	&\stackrel{(b)}{=}H(W|Y)+\alpha H(\tilde{U}|Y,X)+(1-\alpha)H(\tilde{U}|Y) +H(Y|X)-H(X|Y)\!+\!\epsilon+(1-\alpha)H(X|Y,\tilde{U}),\\
	&\stackrel{(c)}{=}(\alpha-1)H(X|Y)+h(\alpha)+\alpha H(\tilde{U}|Y,X)+\epsilon +(1-\alpha)H(\tilde{U}|Y)+H(Y|X)+(1-\alpha)H(X|Y,\tilde{U}), \\
	& \stackrel{(d)}{\geq}  (\alpha-1)H(X|Y)+h(\alpha)+\alpha H(\tilde{U}|Y,X)+(1-\alpha) (\max_{x\in\mathcal{X}}H(Y|X=x)-H(Y|X)+H(X|Y) \\& -H(X|Y,\tilde{U}))+H(Y|X)+\epsilon+(1-\alpha)H(X|Y,\tilde{U})\\
	&=\alpha H(Y|X)+(1-\alpha)(\max_{x\in\mathcal{X}}H(Y|X=x))+h(\alpha)+\epsilon.
	\end{align*}
	In step (a) we used $I(U;X)=\epsilon$ and $H(Y|X,U)=0$ and in step (b) we used $H(U|Y)=H(W|Y)+H(\tilde{U}|Y,W)=H(W|Y)+\alpha H(\tilde{U}|Y,X)+(1-\alpha)H(\tilde{U}|Y)$ and $H(X|Y,U)=H(X|Y,\tilde{U},W)=(1-\alpha)H(X|Y,\tilde{U})$. In step (c) we used the fact that $P_{W|Y}=\begin{cases}
	\alpha P_{X|Y}(x|\cdot)\ &\text{if}\ w=x,\\
	1-\alpha \ &\text{if} \ w=c,
	\end{cases}$
	since $P_{W|Y}(w=x|\cdot)=\frac{P_{W,Y}(w=x,\cdot)}{P_Y(\cdot)}=\frac{P_{Y|W}(\cdot|w=x)P_W(w=x)}{P_Y(\cdot)}=\frac{P_{Y|X}(\cdot|x)\alpha P_X(x)}{P_Y(\cdot)}=\alpha P_{X|Y}(x|\cdot)$, furthermore, $P_{W|Y}(w=c|\cdot)=1-\alpha$. Hence, after some calculation we obtain $H(W|Y)=h(\alpha)+\alpha H(X|Y)$. Finally, step (d) follows from \eqref{toole}.
\end{proof}
\begin{remark}
	The constraint $|\mathcal{U}|\leq \left[|\mathcal{X}|(|\mathcal{Y}|-1)+1\right]\left[|\mathcal{X}|+1\right]$ in Lemma~\ref{koonimooni} guarantees that $\sup_U H(U)<\infty$.
\end{remark}
\textbf{\emph{Proof of Theorem~\ref{ziba}:}}
\begin{itemize}
	\item i $\Rightarrow$ ii: Using Lemma~\ref{goh} we have $H(Y|X)+\epsilon= g_{\epsilon}(P_{XY})\leq h_{\epsilon}(P_{XY}) \leq H(Y|X)+\epsilon$. Thus, $g_{\epsilon}(P_{XY})=h_{\epsilon}(P_{XY})$.
	\item ii $\Rightarrow$ iii: Let $\bar{U}$ be an optimizer of $g_{\epsilon}(P_{XY})$. Thus, the Markov chain $X-Y-\bar{U}$ holds and we have $I(X;\bar{U}|Y)=0$. Furthermore, since $g_{\epsilon}(P_{XY})=h_{\epsilon}(P_{XY})$ this $\bar{U}$ achieves $h_{\epsilon}(P_{XY})$. Thus, by using Lemma~\ref{kir} we have $H(Y|\bar{U},X)=0$ and according to \eqref{key} 
	\begin{align}
	I(\bar{U};Y)&=I(X;\bar{U})\!+\!H(Y|X)\!-\!H(Y|\bar{U},X)-I(X;\bar{U}|Y)\nonumber\\
	&=I(X;\bar{U})\!+\!H(Y|X).\label{kir2}
	\end{align}
	We claim that $\bar{U}$ must satisfy $I(X;Y|\bar{U})>0$ and $I(X;\bar{U})=\epsilon$. For the first claim assume that $I(X;Y|\bar{U})=0$, hence the Markov chain $X-\bar{U}-Y$ holds. Using $X-\bar{U}-Y$ and $H(Y|\bar{U},X)=0$ we have $H(Y|\bar{U})=0$, hence $Y$ and $\bar{U}$ become independent. Using \eqref{kir2}
	\begin{align*}
	H(Y)&=I(Y;\bar{U})=I(X;\bar{U})\!+\!H(Y|X),\\
	&\Rightarrow I(X;\bar{U})=I(X;Y).
	\end{align*}
	The last line is a contradiction since by assumption we have $I(X;\bar{U})\leq \epsilon < I(X;Y)$. Thus, $I(X;Y|\bar{U})>0$.
	For proving the second claim assume that $I(X;\bar{U})=\epsilon_1<\epsilon$. Let $U=(\bar{U},W)$ where $W=\begin{cases}
	Y,\ \text{w.p}.\ \alpha\\
	c,\ \ \text{w.p.}\ 1-\alpha
	\end{cases}$,
	and $c$ is a constant that $c\notin \mathcal{X}\cup \mathcal{Y}\cup \mathcal{\bar{U}}$ and $\alpha=\frac{\epsilon-\epsilon_1}{I(X;Y|\bar{U})}$.
	We show that $\frac{\epsilon-\epsilon_1}{I(X;Y|\bar{U})}<1$. By the assumption we have
	\begin{align*}
	\frac{\epsilon-\epsilon_1}{I(X;Y|\bar{U})}<\frac{I(X;Y)-I(X;\bar{U})}{I(X;Y|\bar{U})}\stackrel{(a)}{\leq}1,
	\end{align*}
	where step (a) follows since $I(X;Y)-I(X;\bar{U})-I(X;Y|\bar{U})=I(X;Y)-I(X;Y,\bar{U})\leq 0$. It can be seen that such $U$ satisfies $H(Y|X,U)=0$ and $I(X;U|Y)=0$ since
	\begin{align*}
	H(Y|X,U)&=\alpha H(Y|X,\bar{U},Y)+ (1-\alpha)H(Y|X,\bar{U})=0,\\
	I(X;U|Y)&= H(X|Y)-H(X|Y,\bar{U},W)\\&=H(X|Y)-\alpha H(X|Y,\bar{U})-(1-\alpha) H(X|Y,\bar{U})\\&=H(X|Y)-H(X|Y)=0, 
	\end{align*}
	where in deriving the last line we used the Markov chain $X-Y-\bar{U}$. Furthermore, 
	\begin{align*}
	I(X;U)&=I(X;\bar{U},W)=I(X;\bar{U})+I(X;W|\bar{U})\\
	&=I(X;\bar{U})+\alpha H(X|\bar{U})-\alpha H(X|\bar{U},Y)\\
	&=I(X;\bar{U})+\alpha I(X;Y|\bar{U})\\
	&=\epsilon_1+\epsilon-\epsilon_1=\epsilon,
	\end{align*}
	and
	\begin{align*}
	I(Y;U)&=I(X;U)\!+\!H(Y|X)\!-\!H(Y|U,X)-I(X;U|Y)\\
	&=\epsilon+H(Y|X).
	\end{align*}
	Thus, if $I(X;\bar{U})=\epsilon_1<\epsilon$ we can substitute $\bar{U}$ by $U$ for which $I(U;Y)>I(\bar{U};Y)$. This is a contraction and we conclude that $I(X;\bar{U})=\epsilon$ which proves the second claim. Hence, \eqref{kir2} can be rewritten as
	\begin{align*}
	I(\bar{U};Y)=\epsilon+H(Y|X).
	\end{align*}
	As a result $h_\epsilon(P_{XY})=\epsilon+H(Y|X)$ and the proof is completed.
	\item iii $\Rightarrow$ i: Let $\bar{U}$ be the optimizer of $h_{\epsilon}(P_{XY})$ and $h_{\epsilon}(P_{XY})=H(Y|X)+\epsilon$. Using Lemma~\ref{kir} we have $H(Y|\bar{U},X)=0$. By using \eqref{key} we must have 
	$I(X;\bar{U}|Y)=0$ and $I(X;\bar{U})=\epsilon$. We conclude that for this $\bar{U}$, the Markov chain $X-Y-\bar{U}$ holds and as a result $\bar{U}$ achieves $g_{\epsilon}(P_{XY})$ and we have $g_{\epsilon}(P_{XY})=H(Y|X)+\epsilon$. 
\end{itemize}
\textbf{\emph{Proof of Lemma~\ref{kir}:}} Let $\bar{U}$ be an optimizer of $h_{\epsilon}(P_{XY})$ and assume that $H(Y|X,\bar{U})>0$. Consequently, we have $I(X;\bar{U})\leq \epsilon.$ Let $U'$ be founded by FRL  with $(X,\bar{U})$ instead of $X$ in Lemma~\ref{lemma1} and same $Y$, that is $I(U';X,\bar{U})=0$ and $H(Y|X,\bar{U},U')=0$. Using \cite[Th.~5]{kostala} we have
\begin{align*}
I(Y;U')>0,
\end{align*}
since we assumed $H(Y|X,\bar{U})>0$. Let $U=(\bar{U},U')$ and we first show that $U$ satisfies $I(X;U)\leq \epsilon$. We have
\begin{align*}
I(X;U)&=I(X;\bar{U},U')=I(X;\bar{U})+I(X;U'|\bar{U}),\\
&=I(X;\bar{U})+H(U'|\bar{U})-H(U'|\bar{U},X),\\
&=I(X;\bar{U})+H(U')-H(U')\leq \epsilon,
\end{align*}
where in last line we used the fact that $U'$ is independent of the pair $(X,\bar{U})$. Finally, we show that $I(Y;U)>I(Y,\bar{U})$ which is a contradiction with optimality of $\bar{U}$. We have
\begin{align*}
I(Y;U)&=I(Y;\bar{U},U')=I(Y;U')+I(Y;\bar{U}|U'),\\
&=I(Y;U')+I(Y,U';\bar{U})-I(U';\bar{U})\\
&= I(Y;U')+I(Y,\bar{U})+I(U';\bar{U}|Y)-I(U';\bar{U})\\
&\stackrel{(a)}{\geq} I(Y;U')+I(Y,\bar{U})\\
&\stackrel{(b)}{>} I(Y,\bar{U}),
\end{align*}
where in (a) follows since $I(U';\bar{U}|Y)\geq 0$ and $I(U';\bar{U})=0$. Step (b) follows since $I(Y;U')>0$. Thus, the obtained contradiction completes the proof.\\
\textbf{\emph{Proof of Lemma~\ref{ankhar}:}} Since $X$ is a deterministic function of $Y$, for any $y\in \cal Y$ we have 
\begin{align*}
P_{Y|X}(y|x)=\begin{cases}
\frac{P_Y(y)}{P_X(x)},\ &x=f(y)\\
0, \ &\text{else} 
\end{cases},
\end{align*}
thus,
\begin{align*}
\sum_{y\in\mathcal{Y}}\int_{0}^{1} \mathbb{P}_X\{P_{Y|X}(y|X)\geq t\}\log (\mathbb{P}_X\{P_{Y|X}(y|X)\geq t\})dt
&=	\sum_{y\in\mathcal{Y}}\!\int_{0}^{\frac{P_Y(y)}{P_X(x=f(y))}} \!\!\!\!\!\!\!\!\!\!\!\!\!\!\!\!\!\!\!\!\!\!\!\!\!\mathbb{P}_X\{P_{Y|X}(y|X)\geq t\}\log (\mathbb{P}_X\{P_{Y|X}(y|X)\geq t\})dt\\
&= \sum_{y\in\mathcal{Y}} \frac{P_Y(y)}{\mathbb{P}_X\{x=f(y)\}}\mathbb{P}_X\{x=f(y)\}\log(\mathbb{P}_X\{x=f(y)\})\\
&= \sum_{y\in\mathcal{Y}} P_Y(y)\log(\mathbb{P}_X\{x=f(y)\})\\
&= \sum_{y\in\mathcal{Y}} P_X(x)\log(P_X(x))=-H(X)=-I(X;Y),
\end{align*}
where in last line we used \\ $\sum_{y\in\mathcal{Y}} P_Y(y)\log(\mathbb{P}_X\{x=f(y)\})=\sum_{x\in\mathcal{X}} \sum_{y:x=f(y)} P_Y(y)\log(\mathbb{P}_X\{x=f(y)\})=\sum_{x\in\mathcal{X}} P_X(x)\log(P_X(x))$.
\section*{Appendix C}
\subsection*{Proofs for \emph{Privacy-utility trade-off with non-zero leakage and per-letter privacy constraints}:}
\textbf{\emph{Proof of Theorem~\ref{choon1}:}}
Lower bounds on $ g_{\epsilon}^{w\ell}(P_{XY})$ and $h_{\epsilon}^{w\ell}(P_{XY})$ are derived in Lemma~\ref{lg1} and Proposition~\ref{prop111}, respectively. Furthermore, inequality $g_{\epsilon}^{w\ell}(P_{XY})\leq h_{\epsilon}^{w\ell}(P_{XY})$ holds since $h_{\epsilon}^{w\ell}(P_{XY})$ has less constraints. To prove the upper bound on $g_{\epsilon}^{w\ell}(P_{XY})$, i.e., $U_{g^{w\ell}}(\epsilon)$, let $U$ satisfy $X-Y-U$ and $d(P_{X,U}(\cdot,u),P_XP_U(u))\leq\epsilon$, then we have
\begin{align*}
I(U;Y) &= I(X;U)\!+\!H(Y|X)\!-\!I(X;U|Y)\!-\!H(Y|X,U)\\
&\stackrel{(a)}{=} I(X;U)\!+\!H(Y|X)-H(Y|X,U)\\&\leq I(X;U)\!+\!H(Y|X)\\&=\sum_u P_U(u)D(P_{X|U}(\cdot|u),P_X)+H(Y|X)\\ &\stackrel{(b)}{\leq}\!\sum_u\!\! P_U(u)\frac{\left(d(P_{X|U}(\cdot|u),\!P_X)\right)^2}{\min P_X}\!+\!H(Y|X) \\&\stackrel{(c)}{\leq}\!\sum_u\!\! P_U(u)\frac{d(P_{X|U}(\cdot|u),\!P_X)}{\min P_X}|\mathcal{X}|\!+\!H(Y|X) \\&=\sum_u \!\!\frac{d(P_{X|U}(\cdot,u),\!P_XP_U(u))}{\min P_X}|\mathcal{X}|\!+\!H(Y|X)\\&\stackrel{(d)}{\leq} \frac{\epsilon|\mathcal{Y}||\mathcal{X}|}{\min P_X}+H(Y|X), 
\end{align*} 
where (a) follows by the Markov chain $X-Y-U$, (b) follows by the reverse Pinsker inequality \cite[(23)]{verdu} and (c) holds since $d(P_{X|U}(\cdot|u),\!P_X)=\sum_{i=1}^{|\mathcal{X}|} |P_{X|U}(x_i|u)-P_X(x_i)|\leq |\mathcal{X}|$. Latter holds since for each $u$ and $i$, $|P_{X|U}(x_i|u)-P_X|\leq 1$. Moreover, (d) holds since by Proposition~\ref{prop222} without loss of optimality we can assume $|\mathcal{U}|\leq |\mathcal{Y}|$. In other words (d) holds since by Proposition~\ref{prop222} we have
\begin{align}
g_{\epsilon}^{w\ell}(P_{XY})&=\sup_{\begin{array}{c} 
	\substack{P_{U|Y}:X-Y-U\\ \ d(P_{X,U}(\cdot,u),P_XP_{U}(u))\leq\epsilon,\ \forall u}
	\end{array}}I(Y;U)\nonumber\\&= \max_{\begin{array}{c} 
	\substack{P_{U|Y}:X-Y-U\\ \ d(P_{X,U}(\cdot,u),P_XP_{U}(u))\leq\epsilon,\ \forall u\\ |\mathcal{U}|\leq |\mathcal{Y}|}
	\end{array}}I(Y;U).\label{koontala}
\end{align}
\textbf{\emph{Proof of Proposition~\ref{mos}:}} By using \cite[Proposition~2]{Khodam22}, it suffices to assume $|\mathcal{U}|\leq|\mathcal{Y}|$. Using \cite[Proposition~3]{Khodam22}, let us consider $|\mathcal{Y}|$ extreme points that achieves the minimum in \cite[Theorem~2]{Khodam22} as $V_{\Omega_j}$ for $j\in\{1,..,|\mathcal{Y}|\}$. 
Let $|\mathcal{X}|$ non-zero elements of $V_{\Omega_j}$ be $a_{ij}+\epsilon b_{ij}$ for  $i\in\{1,..,|\mathcal{X}|\}$ and $j\in\{1,..,|\mathcal{Y}|\}$, where $a_{ij}$ and $b_{ij}$ can be found in \cite[(6)]{Khodam22}. 
As a summary for $i\in\{1,..,|\mathcal{X}|\}$ and $j\in\{1,..,|\mathcal{Y}|\}$ we have $\sum_i a_{ij}=1$, $\sum_i b_{ij}=0$, $0\leq a_{ij}\leq 1$, and $0\leq a_{ij}+\epsilon b_{ij}\leq1.$
We obtain
\begin{align*}
\max I(U;Y)&=H(Y)\sum_jP_j\sum_i (a_{ij}+\epsilon b_{ij})\log(a_{ij}+\epsilon b_{ij}),\\
&=H(Y)+\sum_jP_j\sum_i (a_{ij}+\epsilon b_{ij})(\log(a_{ij})+\log(1+\epsilon\frac{b_{ij}}{a_{ij}})).
\end{align*} 
In \cite[Theorem~2]{Khodam22}, we have used the Taylor expansion to derive the approximation of the equivalent problem. From the Taylor's expansion formula we have 
\begin{align*}
f(x)&=f(a)+\frac{f'(a)}{1!}(x-a)+\frac{f''(a)}{2!}(x-a)^2+...+\frac{f^{(n)}(a)}{n!}(x-a)^n+R_{n+1}(x),
\end{align*} 
where
\begin{align}\label{kosss}
R_{n+1}(x)&=\int_{a}^{x}\frac{(x-t)^n}{n!}f^{(n+1)}(t)dt\\&=\frac{f^{(n+1)}(\zeta)}{(n+1)!}(x-a)^{n+1},
\end{align}
for some $\zeta\in[a,x]$. 
In \cite{Khodam22} we approximated the terms $\log(1+\frac{b_{ij}}{a_{ij}}\epsilon)$ by $\frac{b_{ij}}{a_{ij}}\epsilon+o(\epsilon)$. Using \eqref{kosss}, there exists an $\zeta_{ij}\in[0,\epsilon]$ such that the error of approximating the term $\log(1+\epsilon\frac{a_{ij}}{b_{ij}})$ is as follows
\begin{align*}
R_2^{ij}(\epsilon)=-\frac{1}{2}\left(\frac{\frac{b_{ij}}{a_{ij}}}{1+\frac{b_{ij}}{a_{ij}}\zeta_{ij}}\right)^2\epsilon^2=-\frac{1}{2}\left( \frac{b_{ij}}{a_{ij}+b_{ij}\zeta_{ij}}\right)^2\epsilon^2.
\end{align*}
Thus, the error of approximation is as follows
\begin{align}\label{chos}
\text{Approximation\  error}&=\sum_{ij} P_j(a_{ij}+\epsilon b_{ij})R_2^{ij}(\epsilon)+\sum_{ij}P_j\frac{b_{ij}^2}{a_{ij}}\epsilon^2\nonumber
\\ &= -\sum_{ij} P_j(a_{ij}+\epsilon b_{ij})\frac{1}{2}\left( \frac{b_{ij}}{a_{ij}+b_{ij}\zeta_{ij}}\right)^2\!\!\epsilon^2\!+\!\sum_{ij}P_j\frac{b_{ij}^2}{a_{ij}}\epsilon^2
\end{align}
An upper bound on approximation error can be obtained as follows
\begin{align}\label{antala}
|\text{Approximation\  error}|&\leq |\sum_{ij} P_j(a_{ij}+\epsilon b_{ij})\frac{1}{2}\left( \frac{b_{ij}}{a_{ij}+b_{ij}\zeta_{ij}}\right)^2\epsilon^2|+|\sum_{ij}P_j\frac{b_{ij}^2}{a_{ij}}\epsilon^2|.
\end{align}
By using the definition of $\epsilon_2$ in Proposition~5 we have $\epsilon<\epsilon_2$ implies $\epsilon<\frac{\min_{ij} a_{ij}}{\max_{ij} |b_{ij}|}$, since $\min_{ij} a_{ij}=\min_{y,\Omega\in \Omega^1} M_{\Omega}^{-1}MP_Y(y)$ and $\max_{ij} |b_{ij}|<\max_{\Omega\in \Omega^1} |\sigma_{\max} (H_{\Omega})|$. By using the upper bound $\epsilon<\frac{\min_{ij} a_{ij}}{|\max_{ij} b_{ij}|}$ we can bound the second term in \eqref{antala} by $1$, since we have 
\begin{align*}
|\sum_{ij}P_j\frac{b_{ij}^2}{a_{ij}}\epsilon^2|&<|\sum_{ij}P_j \frac{b_{ij}^2}{a_{ij}}\left(\frac{\min_{ij} a_{ij}}{\max_{ij} |b_{ij}|}\right)^2|\\&<|\sum_{ij} P_j\min_{ij} a_{ij}|=|\mathcal{X}|\min_{ij} a_{ij}\\&\stackrel{(a)}{<} 1,
\end{align*}
where (a) follows from $\sum_{i} a_{ij}=1,\ \forall j\in\{1,..,|\mathcal{Y}|\}$.
\\If we use $\frac{1}{2}\epsilon_2$ as an upper bound on $\epsilon$, we have $\epsilon<\frac{1}{2}\frac{\min_{ij} a_{ij}}{\max_{ij} |b_{ij}|}$. We show that by using this upper bound the first term in \eqref{antala} can be upper bounded by $\frac{1}{2}$. We have
\begin{align*}
\frac{1}{2}|\sum_{ij} P_j(a_{ij}+\epsilon b_{ij})\left( \frac{b_{ij}}{a_{ij}+b_{ij}\zeta_{ij}}\right)^2\epsilon^2|&\stackrel{(a)}{<}\frac{1}{2}|\sum_{ij} P_j(a_{ij}+\epsilon b_{ij})\left(\frac{|b_{ij}|}{a_{ij}-\epsilon|b_{ij}|}\epsilon\right)^2|\\&\stackrel{(b)}{<}\frac{1}{2}|\sum_{ij} P_j(a_{ij}+\epsilon b_{ij})|<\frac{1}{2},
\end{align*}  
where (a) follows from $0\leq\zeta_{ij}\leq \epsilon,\ \forall i,\ \forall j,$ and (b) follows from $\frac{|b_{ij}|}{a_{ij}-\epsilon|b_{ij}|}\epsilon<1$ for all $i$ and $j$. The latter can be shown as follows
\begin{align*}
\frac{|b_{ij}|}{a_{ij}-\epsilon|b_{ij}|}\epsilon<\frac{|b_{ij}|}{a_{ij}-\frac{1}{2}\frac{\min_{ij} a_{ij}}{\max_{ij} |b_{ij}|}|b_{ij}|}\epsilon<\frac{b_{ij}}{\frac{1}{2}\min_{ij}a_{ij}}\epsilon<1.
\end{align*}
For $\epsilon<\frac{1}{2}\epsilon_2$ the term $a_{ij}-\epsilon|b_{ij}|$ is positive and there is no need of absolute value for this term. 
Thus, $\epsilon<\frac{1}{2}\epsilon_2$ implies the following upper bound
\begin{align*}
|\text{Approximation\  error}|<\frac{3}{4}.
\end{align*}
Furthermore, by following similar steps if we use the upper bound $\epsilon<\frac{1}{2}\frac{\epsilon_2}{\sqrt{|\mathcal{X}|}}$ instead of $\epsilon<\frac{1}{2}\epsilon_2$, the upper bound on error can be strengthened by
\begin{align*}
|\text{Approximation\  error}|<\frac{1}{2(2\sqrt{|\mathcal{X}|}-1)^2}+\frac{1}{4|\mathcal{X}|}.
\end{align*}
\section*{Appendix D}
\subsection*{Proofs for \emph{Privacy-utility trade-off with non-zero leakage and prioritized private data}:}
\textbf{\emph{Proof of Theorem~\ref{koonbekoon}:}}
The upper bound can be obtained using the key equation in \eqref{key}, since the total leakage $I(U;X_1,X_2)$ is bounded by $\epsilon$. The first lower bound $L_{h^{12}}^{1}(\epsilon)$ can be obtained by using EFRL stated in Lemma\ref{lemma3}. Let $U$ be produced by EFRL with $X=(X_1,X_2)$, then we have
\begin{align*}
h_{\epsilon}^{p}(P_{X_1X_2Y})&\geq I(U;Y)\\&=\epsilon+H(Y|X_1,X_2)-I(X_1,X_2;U|Y)\\&\geq\epsilon+H(Y|X_1,X_2)-H(X_1,X_2|Y).
\end{align*}
The bounds $L_{h^{p}}^{2}(\epsilon)$ and $L_{h^{p}}^{3}(\epsilon)$ can be obtained as follows. Let $\bar{U}$ be found by SFRL with $X=(X_1,X_2)$. Moreover, let $U=(\bar{U},W)$ with $W=\begin{cases}
X_2,\ \text{w.p}.\ \alpha\\
c,\ \ \text{w.p.}\ 1-\alpha
\end{cases}$, where $c$ is a constant which does not belong to $\mathcal{X}_1\cup \mathcal{X}_2 \cup \mathcal{Y}$ and $\alpha=\frac{\epsilon}{H(X_2)}$. We have
\begin{align*}
I(U;X_1,X_2)&=I(\bar{U},W;X_1,X_2)\stackrel{(a)}{=}I(W;X_1,X_2)\\&=\!H\!(X_1,\!X_2)\!-\!\alpha H(X_1|X_2)\!-\!(1\!-\!\alpha)H\!(X_1,\!X_2)\\&=\alpha H(X_2)=\epsilon,
\end{align*}
where (a) follows since $\bar{U}$ is independent of $(X_1,X_2,W)$. Furthermore, we have
\begin{align}
&I(U;X_1,X_2|Y)=I(\bar{U};X_1,X_2|Y)+I(W;X_1,X_2|Y,\bar{U})\nonumber\\&=I(\bar{U};X_1,X_2|Y)+H(X_1,X_2|Y,\bar{U})\!-\!H(X_1,X_2|Y,\bar{U},W)\nonumber\\&=I(\bar{U};X_1,X_2|Y)+\alpha H(X_1,X_2|Y,\bar{U})-\alpha H(X_1|Y,\bar{U},X_2)\nonumber \\&=I(\bar{U};X_1,X_2|Y)-\alpha H(X_1|Y,\bar{U},X_2)+\alpha\left( H(X_1,X_2|Y)-I(\bar{U};X_1,X_2|Y)\right)\nonumber\\&=\!(1\!-\!\alpha)I(\bar{U};X_1,\!X_2|Y)\!+\!\alpha H(X_1,\!X_2|Y\!)\!-\!\alpha H(X_1|Y\!,\!\bar{U}\!,\!X_2).\label{jakesh}
\end{align}
In the following we bound \eqref{jakesh} in two ways. We have
\begin{align}
\eqref{jakesh}&=\!(1\!-\!\alpha)I(\bar{U};X_1,\!X_2|Y)\!+\!\alpha H(X_2|Y)\!+\!\alpha I(X_1;\bar{U}|Y,\!X_2)\nonumber\\&=I(\bar{U};X_1,\!X_2|Y)\!+\!\alpha H(X_2|Y)\!-\!I(\bar{U};X_2|Y)\nonumber\\ &\stackrel{(a)}{\leq} \log(I(X_1,X_2;Y)+1)+4+\alpha H(X_2|Y).\label{jakesh2}
\end{align}
Furthermore,
\begin{align}
\eqref{jakesh}&\leq \!(1\!-\!\alpha)I(\bar{U};X_1,\!X_2|Y)+\alpha H(X_1,\!X_2|Y\!) \nonumber\\&\stackrel{(b)}{\leqq} \!(1\!-\!\alpha)\left(\log(I(X_1,X_2;Y)+1)+4\right)\!+\!\alpha H(X_1,\!X_2|Y\!).\label{jakesh3}
\end{align}
Inequalities (a) and (b) follow since $\bar{U}$ is produced by SFRL, so that $I(\bar{U};X_1,X_2|Y)\leq \log(I(X_1,X_2;Y)+1)+4$. Using \eqref{jakesh2}, \eqref{jakesh3} and key equation in \eqref{key} we have
\begin{align*}
h_{\epsilon}^{p}(P_{X_1X_2Y})&\geq I(U;Y)\stackrel{(c)}{\geq} \epsilon+H(Y|X_1,X_2)-\left(\log(I(X_1,X_2;Y)+1)+4+\alpha H(X_2|Y) \right)\\&=L_{h^{p}}^{2}(\epsilon),
\end{align*} 
and 
\begin{align*}
h_{\epsilon}^{p}(P_{X_1X_2Y})&\geq I(U;Y)\stackrel{(d)}{\geq}  \epsilon+H(Y|X_1,X_2)-\left((1\!-\!\alpha)\log(I(X_1,X_2;Y)+1)+4+\alpha H(X_1,X_2|Y) \right)\\&=L_{h^{p}}^{3}(\epsilon).
\end{align*} 
In steps (c) and (d) we used $H(Y|X_1,X_2,U)=0$. The latter follows by definition of $W$ and the fact that $\bar{U}$ is produced by SFRL.
\bibliographystyle{IEEEtran}
\bibliography{IEEEabrv,IZS}
\end{document}